\pgfplotsset{width=7cm,compat=newest}
\renewcommand\footnotetextcopyrightpermission[1]{} % removes footnote with conference info
\theoremstyle{definition}
\newtheorem{definition}{Definition}[section]
\newtheorem{remark}{Remark}
\pgfplotsset{every axis/.append style={
	        ylabel near ticks,
	        xlabel near ticks,
	        ylabel style={font=\scriptsize},
	        xlabel style={font=\scriptsize},
                    xticklabel style={font=\tiny},
                    yticklabel style={font=\tiny},
            }}
\begin{document}
\newcommand{\muttoe}{{M{\scalebox{1.1}{u}}TT{\scalebox{1.1}{o}}E}}
\newcommand{\dittoe}{{D{\scalebox{1.1}{i}}TT{\scalebox{1.1}{o}}E}}
\newcommand{\metteor}{{METTEOR}}

%\title[Crawl Before You Walk]{\LARGE Crawl Before You Walk: Robust Topology Engineering for High Data Center Network Availability}
%\title[]{\LARGE { \muttoe }: Robust Multi-Traffic Topology Engineering using Commercial OCSs}
%\title[]{\LARGE { \dittoe }: Robust Topology Engineering for Commercial Data Centers}
%\title[Walk Before You Run]{\LARGE Walk Before You Run: Robust Topology Engineering towards Commercial Data Center Deployment}
\title[METTEOR]{\LARGE METTEOR: Robust Multi-Traffic Topology Engineering for Commercial Data Center Networks}
% METTEOR/METTEFOR - Multi-estimated Traffic Topology Engineering for Optimized Robustness
%\titlenote{Produces the permission block, and copyright information}
%\subtitle{Extended Abstract}

%\author{\large Min Yee Teh, Shizhen Zhao, Keren Bergman}

\author{Min Yee Teh}
\affiliation{%
  \institution{Columbia University}}
%\email{mt3126@columbia.edu}

\author{Shizhen Zhao}
\affiliation{%
  \institution{Shanghai Jiao Tong University}}
%\email{shizhenzhao@sjtu.edu.cn}

\author{Keren Bergman}
\affiliation{%
  \institution{Columbia University}}
%\email{bergman@ee.columbia.edu}

% The default list of authors is too long for headers}
% \renewcommand{\shortauthors}{Teh .et al.}

\begin{abstract}
%Optical circuit switched data center architecture has been proposed to boost network capacity for over a decade, though has not seen commercial deployment so far. 

Numerous optical circuit switched data center networks have been proposed over the past decade for higher capacity, though commercial adoption of these architectures have been minimal so far. One major challenge commonly facing these architectures is the difficulty of handling bursty traffic with optical circuit switches (OCS) with high switching latency. Prior works generally rely on fast-switching OCS prototypes to better react to traffic changes via frequent reconfigurations. This approach, unfortunately, adds further complexity to the control plane. 

%However, these proposals may dramatically complicate the control plane, or rely on immature optical switching technologies, further hindering their adoption by large vendors.

%Topology engineering (ToE) has been proposed using optical circuit switches (OCS) to dynamically reconfigure network topology to handle varying traffic in data center networks (DCN). Since DCN traffic is highly bursty, prior works have focused on developing faster-switching OCSs that enable more agile reconfigurations. However, many of these technologies are still in their infancy, hence their reliability remains untested in commercial settings.

We propose {\metteor}, an easily deployable solution for optical circuit switched data centers, that is designed for the current capabilities of commercial OCSs. Using multiple predicted traffic matrices, {\metteor} designs data center topologies that are less sensitive to traffic changes, thus eliminating the need of frequently reconfiguring OCSs upon traffic changes. Results based on extensive evaluations using production traces show that {\metteor } increases the percentage of direct-hop traffic by about 80\% over a fat tree at comparable cost, and by about 35\% over a uniform mesh, at comparable maximum link utilizations. Compared to ideal solutions that reconfigure OCSs on every traffic matrix, {\metteor } achieves close-to-optimal bandwidth utilization even with biweekly reconfiguration. This drastically lowers the controller and management complexity needed to perform {\metteor } in commercial settings.

\end{abstract}
\maketitle
\newcommand{\floor}[1]{\left\lfloor #1 \right\rfloor}
\newcommand{\ceil}[1]{\left\lceil #1 \right\rceil}
\vspace{-10pt}
\section{Introduction}\label{section_introduction}
Given the exponential growth in data center traffic, building networks that meet the requisite bandwidth has also become more challenging. Modern data center networks (DCN) typically employ multi-rooted tree topologies~\cite{leiserson1985fat}, which have a regular structure and redundant paths to support high availability. However, uniform multi-rooted trees are inherently suboptimal structures to carry highly skewed traffic common to DCNs~\cite{kandula2009flyways, roy2015facebook}. This has motivated several works on using optical circuit switches (OCS) to design more performant data center architectures~\cite{farrington2011helios, cthrough_wang_2011}. Compared to conventional electrical packet switches, OCSs offer much higher bandwidth and consumes less power. More importantly, OCSs introduces the possibility of Topology Engineering (ToE), which allows DCNs to dynamically allocate more capacity between ``hot spots’’ to alleviate congestion.

Despite showing immense promise, optical circuit-switched data centers have not been widely deployed even after a decade’s worth of research efforts. One of the most daunting challenges is to perform ToE under bursty traffic. Early works on ToE proposed reconfiguring topology preemptively using a single estimated traffic matrix (TM)~\cite{farrington2011helios, cthrough_wang_2011}. However, the bursty nature of DCN traffic makes forecasting TMs accurately very difficult~\cite{benson2010network, kandula2009nature}. An inaccurate prediction may lead to further congestion. Even if predictions were accurate, the forecast could still turn stale if topology reconfiguration takes tens of milliseconds. Subsequent works have thus focused on designing OCSs with microsecond-level switching latency~\cite{ProjectSirius, ghobadi2016projector, porter2013integrating, rotornet_mellette2017}, to enable faster reaction to traffic burst. However, these proposals require changing topology and routing frequently, an act that introduces significant complexity to the control plane, thus hindering the adoption by large vendors.

%since improved on OCS and control plane agility, allowing them to reactively reconfigure topology using opportunistic scheduling instead~\cite{ProjectSirius, ghobadi2016projector}. 

%Prior works have generally relied on frequent topology reconfigurations to handle traffic variations~\cite{porter2013integrating, reactor_liu2014}. In principle, frequently reconfiguring topology is perhaps the most intuitive solution to dealing with traffic bursts. However, frequent topological changes may introduce significant complexity to the control plane. Several earlier works have also based their architectures on OCSs with agility far beyond the capabilities of current off-the-shelf OCSs, further raising the technological barrier for commercial deployment. 

We tackle bursty DCN traffic from a different perspective, using a robust optimization-based ToE framework called { \metteor } (\textbf{M}ultiple \textbf{E}stimated \textbf{T}raffic \textbf{T}opology \textbf{E}ngineering for \textbf{O}ptimized \textbf{R}obustness). While prior works optimize topology for a \emph{single} estimated traffic matrix~\cite{cthrough_wang_2011, halperin2011augmenting}, our approach optimizes topology based on \emph{multiple} traffic matrices (TM). Traffic uncertainty is captured by a set of multiple TMs. Optimizing topology using this set helps desensitize the topology to traffic uncertainties. To our knowledge, {\metteor } is the first framework that tackles ToE from a robust optimization approach. 
The most compelling advantage of {\metteor } is that it does not rely on frequent OCS reconfiguration to handle traffic changes, as long as the new traffic is captured by a traffic set, thus reducing the management complexity in commercial data centers. In fact, {\metteor } shifts the major complexity of ToE from the system control aspect to the algorithm design aspect. Designing an optimal topology for multiple TMs is an immensely challenging problem~\cite{foerster2018characterizing, zhao2018minimal}. We first formalize the overall problem in \S\ref{section_model}, and discuss various techniques used for relaxing the algorithmic complexity in \S\ref{section_overall_methodology}.

We apply {\metteor } to the core layer of data centers. Based on traffic analysis of production data center traces, we found that while pod-level traffic do not exhibit strong temporal stability, they do exhibit a weaker form of temporal stability, which we refer to as traffic recurrence. This recurrent behavior in traffic leads to a slow-varying clustering effect, which is a novel observation in DCN traffic characteristics. By optimizing topology based on these slow-varying clusters can achieve great performance without frequent reconfiguration. Because of the low reconfiguration frequency, {\metteor } requires minimal changes to the data center control plane, and thus can be viewed as a first step towards fully optical circuit switched data centers. 

We evaluate {\metteor}’s performance under different routing algorithms that minimize maximum link utilization (MLU). Based on production data centers traces, {\metteor}  increases the percentage of direct-hop traffic by about 80\% over a fat tree at comparable cost, and by about 35\% over a uniform mesh, at comparable maximum link utilizations (MLU). (However, the tail MLU of {\metteor} may suffer if routing uncertainty exists.) Further, {\metteor } with ideal routing performs close to an idealized ToE that requires instantaneous switching and frequent reconfigurations. Note that using {\metteor}, we can obtain this level of performance with fortnightly OCS reconfiguration, making it deployable with the current off-the-shelf OCSs\footnote{To approximate ideal routing does require frequent routing update. Fortunately, routing update can be much easier than OCS reconfiguration.}. Moreover, {\metteor} is less dependent on the frequency of topology reconfigurations for good performance, when compared with the ToE solutions that optimize topology based on a single traffic matrix.

\begin{figure}[!tp]
\centering
\includegraphics[draft=false,width=2.5in, height=1.2in, trim={0.8cm 0.8cm 0.8cm 0.4cm}]{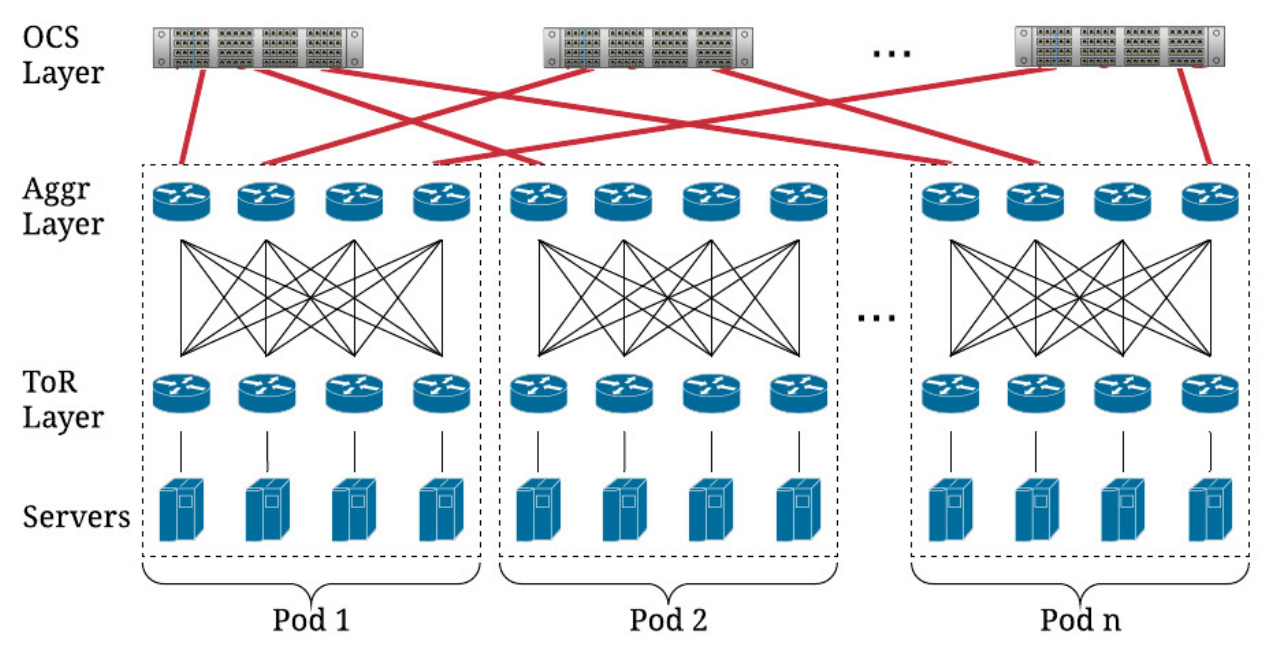}
\caption{\small Physical topology model, with pods fully-interconnected via OCSs at the core layer.}
\label{helios_topology}
\vspace{-18pt}
\end{figure}

\vspace{-10pt}
\section{Related Work}\label{section_related_works}
\subsection{Traffic-Agnostic DCN Topology}
DCN topologies have been traditionally designed to be static and traffic-agnostic, focusing on bisection bandwidth, scalability, failure resiliency, etc. They can be divided into either Clos-like and mesh-like topologies. Clos topology (e.g., Fat-Tree~\cite{al2008scalable, liu2013f10}) is more widely-adopted in large-scale data centers (e.g.,  Google~\cite{singh2015jupiter}, Facebook~\cite{farrington2013facebook}, Cisco~\cite{cisco2016}, and Microsoft~\cite{greenberg2009vl2}), as its regular hierarchical structure simplifies routing and congestion control. Mesh-like expander topology~\cite{singla2012jellyfish, valadarsky2015xpander, yu2016space} also shows great promise, as its flatter hierarchy saves cost by eliminating the spine layer in Clos, while still offering rich capacity and path diversity. 

%In contrast, mesh-like topologies~\cite{singla2012jellyfish, valadarsky2015xpander, yu2016space} are great expanders, offering good path-diversity which make them highly resilient to bottlenecks. 

%%These can be divided into server-centric architectures (e.g., DCell~\cite{guo2008dcell}, BCube~\cite{guo2009bcube}), which we do not consider in this paper, and switch-centric architectures.   Because Clos topologies simplify routing and congestion control, Clos-like switch-centric topologies (e.g., Fat-Tree~\cite{al2008scalable, liu2013f10}, VL2~\cite{greenberg2009vl2}) have been widely adopted in large-scale data centers (e.g.,  Google~\cite{singh2015jupiter}, Facebook~\cite{farrington2013facebook}, and Cisco~\cite{cisco2016}). 

However, DCN traffic is inherently skewed. A study from Microsoft~\cite{kandula2009flyways} showed that only a few top-of-rack (ToR) switches are ``hot'' in a small (1500-server) production data center. Facebook~\cite{roy2015facebook} reported that the inter-pod traffic in one of their data centers varies over more than seven orders of magnitude. As a result, traffic-agnostic networks can be inherently suboptimal under skewed DCN traffic.

\vspace{-9pt}
\subsection{Traffic-Aware DCN Topology}\label{section_reconfig_topo}
To handle fast-changing, high-skewed traffic patterns, some researchers have argued for reconfigurable DCN topologies based on optical circuit switches (OCS)~\cite{vahdat2011emerging, liu2010scaling, fields2010transceivers, zhou2017datacenter}. The pioneering work,  Helios~\cite{farrington2011helios}, proposed reconfiguring pod-to-pod topology using OCSs based on a \emph{single} estimated traffic matrix. However, reconfiguring Helios incurs a significant delay (about 30ms), a problem that most commercial OCSs today still face~\cite{calient}. Given that 50\% of DCN flows lasting below 10ms~\cite{kandula2009nature}, a 30ms reconfiguration latency could mean that the topology optimized for pre-switching traffic may no longer be a good fit for post-switching demands.

The need to cope with rapid traffic changes motivated subsequent works aimed at decreasing reconfiguration latency for OCSs. Some of these have focused on providing ToR-level reconfigurability~\cite{ reactor_liu2014, cthrough_wang_2011, singla2010proteus}, potentially reducing latency to microseconds level using sophisticated hardware.  However, these approaches might not scale to data centers with thousands of ToRs, due to the low radix of ToRs and the finite size of OCSs. Others have proposed scaling up reconfigurable networks with steerable wireless transceivers~\cite{ghobadi2016projector, hamedazimi2014firefly, zhou2012mirror}, but these architectures face serious deployment challenges related to environmental conditions in real DCNs, and to the need for sophisticated steering mechanisms.
The Opera architecture~\cite{mellette2019expanding}, built using rotor switches from~\cite{rotornet_mellette2017}, forms a mesh-like expander topology by multiplexing a set of preconfigured matchings in the time-domain. Unfortunately, frequently changing OCS connections may overload the SDN controller, and thus undermine data center availability.

Another line of work have looked into better algorithms that schedule circuits more optimally in the presence of reconfiguration delays~\cite{bojja2016costly, liu2015scheduling, wang2018neural}. However, the assumed problem setups of these works fundamentally differs from ours, as we are interested in designing a \emph{single} topology optimized for many possible traffic demands.

%Several works have tackled the complexity of reocnfigurable topology, ~\cite{bojja2016costly} studies how to tradeoff performance with reconfiguration costs, but still operates on fine-grained switching assumption. Meanwhile, ~\cite{wang2018neural} relies on neural networks to find ideal topologies for specific traffic patterns. Unknown how this approach will work for finding topology that can handle traffic uncertainty well.

\vspace{-9pt}
\subsection{Traffic Engineering} \label{section_traffic_engineering_related_works}
To fully realize the potential of reconfigurable topologies, traffic engineering (TE), is required. TE typically consists of two phases: 1) the path-selection phase, and 2) the load-balancing phase. The path-selection phase selects a set of candidate paths for carrying traffic. Given a selection of paths, the load-balancing phase then computes the relative weights for sending traffic along the candidate paths. 

Path-selection in data centers typically employs the K-shortest-path algorithm~\cite{yen_ksp, singla2012jellyfish, valadarsky2015xpander}. As for load balancing, nearly all the related works~\cite{hamedazimi2014firefly, farrington2011helios, cthrough_wang_2011} on optical circuit-switched data centers compute the relative weights by solving a multi-commodity flow (MCF) problem using a single predicted traffic matrix. However, predicting a traffic matrix accurately can be difficult, and an inaccurate traffic prediction may incur unexpected congestion. Rotornet~\cite{rotornet_mellette2017} load-balances traffic using Valiant load-balancing (VLB)~\cite{zhang2008designing}. VLB has several desirable properties, such as being traffic-agnostic and robust under demand uncertainties by routing traffic via indirect paths, and having a worst-case throughput-reduction of 2$\times$. However, DCN operators tend to have a strong sense of what traffic patterns may likely occur, based on a wealth of historical traffic data. This makes VLB overly conservative. Some TE literatures use robust optimization to strike a balance between network performance and robustness to traffic uncertainty~\cite{wang2006cope, zhang2005optimal, chang2017robust}. Although these solutions are mainly designed for wide area networks (WAN), the core ideas are equally applicable to DCNs.

\section{Motivating \metteor}\label{section_proof_of_concept}
%We now motivate the use of {\metteor} with a toy example. 

% we now motivate a multi-traffic approach. Mention that an alternative approach is to use average. Note that our approach can generalize to average method, by just picking 1 cluster. Note that there are many distributions for which the averaged traffic may not resemble the final traffic whatsoever. Therefore, a multi-traffic approach may work best and offer more information for topology engineer to optimize for. Picking the hull may result in a traffic matrix that does not resemble the average case traffic at all due to a slight burst. 

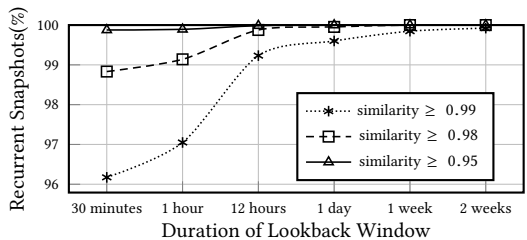
\begin{figure}[!tp]
\begin{tikzpicture}
\pgfplotstableread{plot_data/historical_similarity_ju2_dls03.txt} \datatable;
\begin{axis}[height=1.5in, width=0.90\columnwidth, ymax=1, grid=minor, ylabel near ticks, xlabel near ticks, ylabel={\footnotesize Recurrent Snapshots(\%)}, ylabel shift = -4pt, xlabel=\footnotesize Duration of Lookback Window, xlabel shift = -5pt, legend columns=1, legend style={at={(0.50,0.60)},anchor=north west, cells={align=left}}, xtick pos=left, ytick pos=left, xtick={1,2,3,4,5,6}, xticklabels={30 minutes, 1 hour, 12 hours, 1 day, 1 week, 2 weeks}, ylabel style={align=center}, ytick={0.96, 0.97, 0.98, 0.99, 1.00}, yticklabels={96, 97, 98, 99, 100}, grid, thick]
\addplot+[densely dotted, black, mark=asterisk, smooth, semithick, mark options=solid] table[x=x_val, y=threshold_0p99] from \datatable;
\addlegendentry{\tiny similarity $\geq \; 0.99$ }
\addplot+[densely dashed, black, mark=square, smooth, semithick, mark options=solid] table[x=x_val, y=threshold_0p98] from \datatable;
\addlegendentry{\tiny similarity $\geq \; 0.98$}
\addplot+[black, mark=triangle, smooth, semithick, mark options=solid] table[x=x_val, y=threshold_0p95] from \datatable;
\addlegendentry{\tiny similarity $\geq \; 0.95$}
\end{axis}
\end{tikzpicture}
\vspace{-13pt}
\caption{\small Percentage of TM snapshots with at least one historical ``lookalike’’. Two snapshots are considered ``lookalikes’’ if their cosine similarity exceeds a given threshold.}
\label{fig:historical_similarity}
\vspace{-18pt}
\end{figure}

\vspace{-6pt}
\subsection{Recurrence-A Weaker Form of Stability}
The conventional wisdom in ToE is to switch topology as frequently as possible to handle demand changes. The belief that DCN traffic lacks stability has driven much work on designing faster OCSs and control planes. However, DCN traffic is not entirely random, especially at the pod level. In fact, while pod-level traffic matrices (TM) do not generally exhibit strong stability over time, they do exhibit a weaker form of temporal stability, which we refer to as \emph{traffic recurrence}. This means that while most traffic snapshots may not be close to the snapshot preceding them, it is very likely that a similar TM has occurred in the recent past.

To quantify this phenomenon, we performed a simple case study on recurrence using 6 months’ worth of TM snapshots obtained from a data center. Each TM snapshot is a 5-minute-average of inter-pod traffic. We present results from 1 data center out of the 12 studied, though all other DCNs exhibit similar results. A TM snapshot is considered recurrent if it is close to at least one past TM within an observation period. The ``closeness’’ between two TMs is measured with cosine similarity~\cite{CosineSimilarity}. Fig. \ref{fig:historical_similarity} plots the percentages of recurrent TM snapshots as a function of the lookback window (i.e., how far back in the past we look). When closeness is loosely-defined (i.e. similarity $\geq$ 0.95), almost all snapshots are recurrent even with a 30-minute lookback window. When considering closeness as similarity $\geq$ 0.99, over 96\% of snapshots are recurrent within a 30-minute lookback window. Regardless of how closeness is defined, nearly all TMs are recurrent with a 2-week lookback window. This property of weak temporal stability may partially explain the slow-varying clustering effects in traffic patterns, which we explore in \S\ref{section_traffic_clustering}.

\subsection{Toy Example - {\metteor}}

%Most state-of-the-art ToE approaches consider only single-traffic estimators. For instance, a common approach to deriving a traffic estimator is by taking the average of all historical TMs. In many cases, however, a single estimator is simply not representative of the average TM. Consider, for instance, an ensemble of bimodal TMs of either $T_1 = [0,0,1,1]$ or $T_2=[1,1,0,0]$. Taking the average of these traffic patterns would result in a uniform traffic, $T = [0.5, 0.5, 0.5, 0.5]$, which is clearly a bad representation.

%%Most state-of-the-art ToE approaches~\cite{hamedazimi2014firefly, farrington2011helios, cthrough_wang_2011} optimize topology based on a single TM. These approaches typically require fast-switching OCSs to perform frequent reconfigurations. Otherwise, the real TMs may deviate significantly from the predicted one. Fig. \ref{fig:mttoe_concept} shows a simple proof-of-concept. Clearly, no single TM can adequately represent all TMs properly.

%%{\metteor } accounts for traffic uncertainty by optimizing topologies based on multiple TMs. When traffic is recurrent, many observed TMs will likely reappear in the future. With topologies optimized for a few representative TMs derived from historical snapshots, {\metteor } could perform well for recurring traffic in the future. 

Fig. \ref{fig:mttoe_concept} shows a proof-of-concept for {\metteor}. Clearly, no single TM can adequately represent all TMs properly in this case, so single-traffic-based ToE approaches, as in~\cite{hamedazimi2014firefly, farrington2011helios, cthrough_wang_2011}, may not work well. Our approach accounts for traffic uncertainty by optimizing topologies based on multiple TMs. When traffic is recurrent, many observed TMs will likely reappear in the future. With topologies optimized for a few representative TMs derived from historical snapshots, {\metteor} could perform well for future recurring traffic.

%% one natural way to do ToE is also to use CritMat like approach, where we pick one traffic matrix based on ``minimum-oversizing’’ or prepare for worst case.

Using a simple experiment, we motivate the use of {\metteor}. In this example, we consider a network with 8 pods interconnected via an OCS layer in a manner similar to that in Fig. \ref{helios_topology}; each pod has 100 directed links of unit capacity. We generate 30 traffic matrices at random. For {\metteor}, we find 3 traffic centroids using $\kappa$-means clustering algorithm, and optimize topology based on the 3 TMs. For comparison, we also optimize topology based on the average of all TMs. Then, for each TM, we compute the maximum link utilization (MLU) of routing each TM over the two topologies. 

%We use $\kappa$-means clustering to find 3 traffic centroids, which {\metteor} uses for optimizing topology. For comparison, we also optimize topology based on the average of all TMs, and compute the MLU for each TM as well.

Fig.~\ref{fig:toy_example_timeseries} shows the MLU performance. Cearly, {\metteor} performs better, as it is able to design topology that is well-suited for most of the traffic snapshots. Under traffic uncertainties, a multi-traffic optimization approach may improve solution robustness by minimizing topology-overfitting to a single predicted demand.

\begin{figure}[!tp]
\pgfplotstableread{plot_data/multi_traffic_toe_motivation.txt} \datatable
\hspace{-8pt}
\begin{subfigure}[c]{0.47\columnwidth}
\hspace{-8pt}
\includegraphics[draft=false,width=1.05\linewidth,  height=3.0cm, trim={0cm 0cm 0cm 0.cm}]{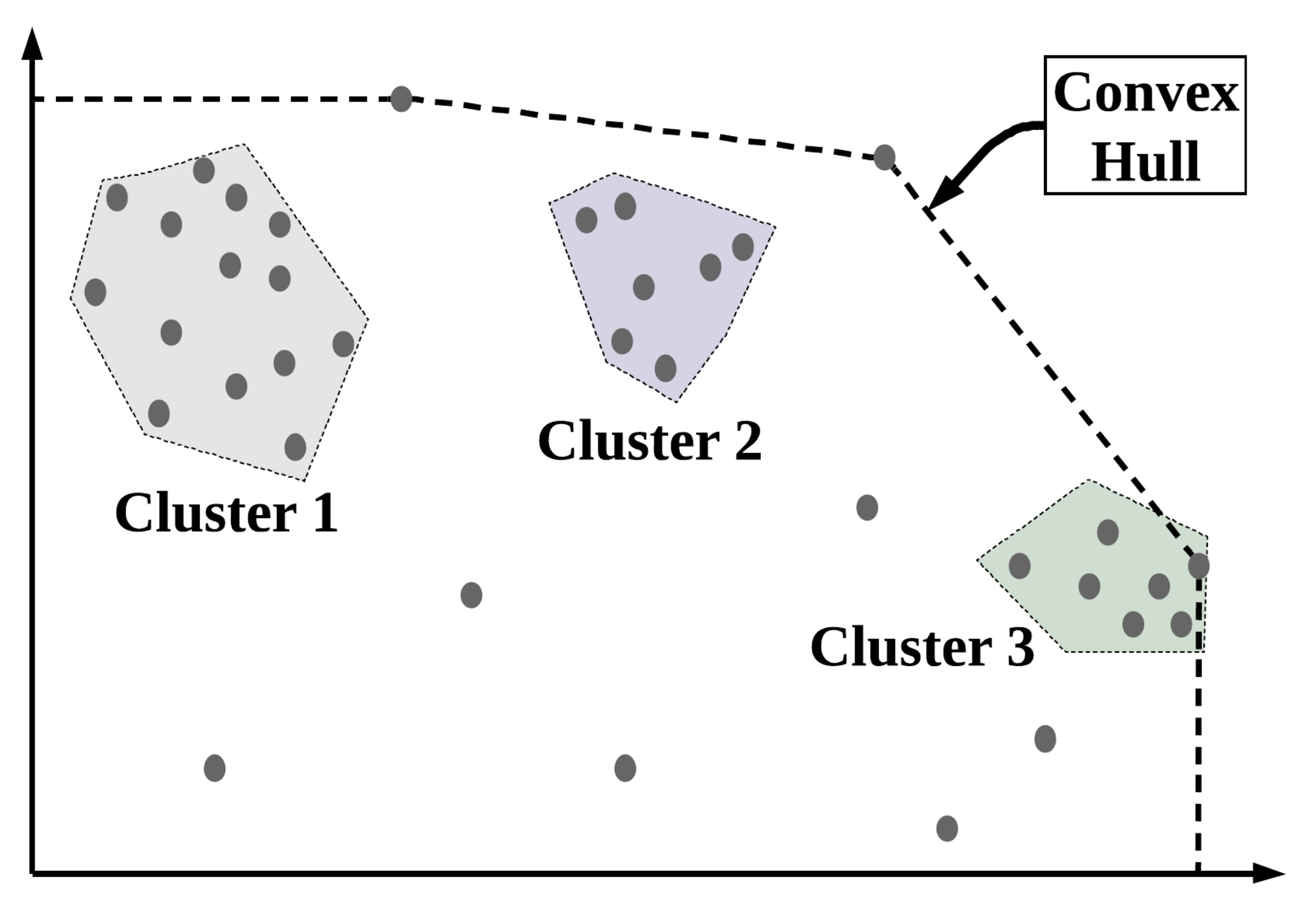}
\caption{\small Multi-traffic ToE concept}
\label{fig:mttoe_concept}
\end{subfigure}
~
\hspace{-10pt}
\begin{subfigure}[c]{0.47\columnwidth}
\hspace{-10pt}
\begin{tikzpicture}
\begin{axis}[xlabel = \footnotesize Traffic Snapshot, ylabel = \footnotesize MLU, ylabel near ticks, width=1.31\linewidth, height=4.0cm, ylabel shift = -4pt, xlabel shift = -4pt, xtick pos=left, ytick pos=left, grid, thick, smooth, xmin=0,xmax=29, legend columns=2, legend style={at={(0.01, 0.99)},anchor=north west,}, ymax=3, ytick={0, 0.5, 1, 1.5, 2, 2.5, 3}]
\addplot+[color=teal, mark=diamond, mark options={solid}, mark repeat={1}] table[x=x, y=robust_toe] from \datatable ;
\addlegendentry{\tiny \metteor}
\addplot+[color=gray, mark=x, mark options={solid}, mark repeat={1}, densely dotted] table[x=x, y=ave_toe] from \datatable ;
\addlegendentry{\tiny Single-TM}
%%\addplot+[color=orange, mark=star, mark options={solid}, mark repeat={1}, densely dashed] table[x=x, y=max_toe] from \datatable ;
\end{axis}
\end{tikzpicture}
\vspace{-20pt}
\caption{\small MLU timeseries}
\label{fig:toy_example_timeseries}
\end{subfigure}
\vspace{-14pt}
\caption{\small METTEOR concept illustration.} 
\label{fig:proof_of_concept}
\vspace{-21pt}
\end{figure}

\vspace{-4pt}
\section{METTEOR - System Level Overview}\label{section_challenges_overall_approach}
% Helios: tens to hundreds of milliseconds of delay with electronic dispersion compensation disabled which took 600ms by itself. Require second-level traffic stability
% Update: Updating Data Center Networks with Zero Loss: congestion should be accounted during topology update
% Firefly: disable links, switch, enable links
% Integrating Microsecond Circuit Switching into the Data Center: buffer requirement at ToR
% Circuit Switching Under the Radar with REACToR: buffer at host

\begin{comment}
1- fast ocs reconfig is hard
2- fast software update is hard
3- switch programming time is slow
4- fast reconfig is inherently unsafe
5- highly available and accurate time synchronized traffic matrix is hard to perform
Contributions: slow topology engineering ,allowing you to reconfig at a much slower timescale
Next, we introduce a topology engineering methodology that is cluster based that is very robust under traffic changes
\end{comment}

\begin{figure*}[htb]
\centering
\includegraphics[trim={0.2cm 0.7cm 0.2cm 0.7cm}, width=0.82\textwidth, height=2.1cm] {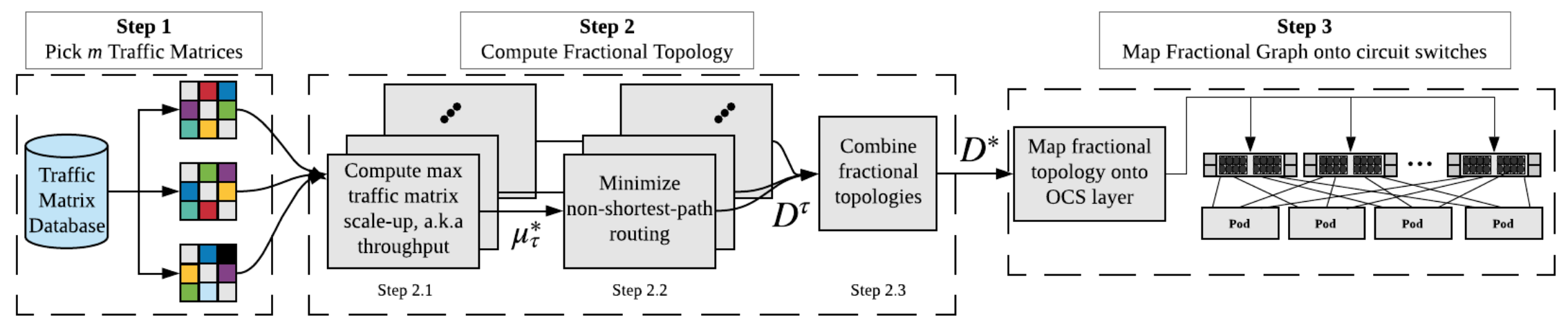}
\caption{\small Illustrating the complete software workflow of {\metteor}.}
\vspace{-11pt} 
\label{topology_engineering_overall_workflow}
\end{figure*}

\subsection{Network Architecture}\label{section_network_architecture}
%Our main goal is to present a multi-TM based ToE methodology that is robust for practical deployment, using reliable off-the-shelf OCSs; the network architecture is chosen in accordance with this goal. 
The assumed DCN topology is shown in Fig. \ref{helios_topology}, with a layer of OCSs interconnecting all pods, each constructed from packet switches. This topology resembles a Clos typically seen in large scale data centers, although we replace the core switches with OCSs. Like Helios~\cite{farrington2011helios}, our work employs inter-pod reconfigurability, which deviates from some architectures that argue for inter-ToR reconfigurability~\cite{ghobadi2016projector, hamedazimi2014firefly, porter2013integrating, cthrough_wang_2011, zhou2012mirror}. We chose inter-pod reconfigurability over inter-ToR reconfigurability for the following reasons: 
\begin{itemize}[leftmargin=3pt]
\item \emph{Scalability: } - Using pods with hundreds of uplinks to the OCSs, and downlinks to ToRs for $\Theta(1000)$ servers, our architecture can scale up to over 100k servers.  

\item \emph{Traffic stability: } Inter-pod traffic shows more noticeable locality~\cite{roy2015facebook}, and is more stable than inter-ToR traffic~\cite{delimitrou2012echo, kandula2009nature} due to averaging effects from the aggregation switches. 

\item \emph{High fan-out\footnote{Ability to form direct links with many destinations,.}: } Pods have much higher fan-out than ToRs. Combined with multi-hop routing, every pod is reachable within one or two hops, making it possible for \emph{one} logical topology to serve several, possibly dense\footnote{While inter-ToR traffic matrices is quite sparse~\cite{ghobadi2016projector}, inter-pod traffic matrices tend to be dense, with mostly non-zero entries.}, TMs.

\end{itemize}

\vspace{-3pt}
In this paper, we refer to the (fixed) physical connections between the pod and OCSs as the \emph{physical topology}.  Topology engineering reconfigures the OCSs to realize a specific \emph{logical topology} as an overlay on the physical topology. 

\vspace{-4pt}
\subsection{Computing Logical Topology}\label{section_computing_logical_topology}
Prior works have designed reconfigurable topology based on a \emph{single} estimated traffic matrix, obtained either from switch measurements (e.g. Hedera~\cite{al2010hedera}) or from end-host buffer occupancy~\cite{cthrough_wang_2011}. However, due to the bursty nature of DCN traffic \cite{kandula2009nature}, even inter-pod traffic can be difficult to predict accurately, which fundamentally limits the robustness of such an approach.

%However, studies show that DCN traffic exhibit strong on-off tendencies~\cite{benson2010network}, with the pairs exchanging heavy traffic varying appreciably as time progresses~\cite{kandula2009nature}. The high variability in traffic locality makes it difficult to predict future traffic accurately (e.g., \cite{Qiao2013Efficient, Mozo2018Forecasting}), thus limiting the robustness of P-ToE. 

Therefore, we compute logical topologies based on multiple TMs instead. The first step is to obtain multiple TMs that will be representative of future traffic (see Step 1 of Fig. \ref{topology_engineering_overall_workflow}), based on historical traffic snapshots. Traffic snapshots can be easily obtained from flow-monitoring tools like sFlow~\cite{phaal2001inmon}. While we could get an accurate traffic estimation directly from applications, this would require application level modifications. Instead, we employ a simpler approach that exploits the spatial-temporal traffic behavior of production traffic to extract multiple representative TMs (see \S\ref{section_traffic_clustering}). The next step is to optimize topology for the extracted TMs (see  \S\ref{section_model} and \S\ref{section_overall_methodology}), which is the biggest challenge of this paper. In fact, the topology optimization problem for even a single TM is already NP-complete; having multiple TMs further complicates this problem. Our goal is to design a polynomial-time heuristic to this problem. The algorithm design must be done carefully. Otherwise, a poorly-design topology can easily nullify the potential benefits of ToE.

\vspace{-10pt}
\subsection{Reconfiguring Logical Topology Safely}
Despite having shown great promise on paper, ToE has not seen widespread commercial adoption. One key reason is that existing reconfigurable architectures do not consider high network availability. Network availability is generally defined as a high-level service level objective (SLO), measured as a number of ``nines’’ in service uptime~\cite{govindan2016evolve, hong2018b4}. Under the hood, however, availability is inextricably linked to factors like traffic volume, controller workload, hard/software failure rates, and packet loss~\cite{mogul2017thinking}. 

Performing ToE frequently, if not done properly, could be detrimental to availability. For instance, high-frequency switching places a tremendous workload on the SDN controller. A poor-choice of switching configuration, or even a bug, risks failing entire DCN blocks; an admittedly rare risk, but one that increases with the rate of reconfiguration.

There are two major considerations when reconfiguring topology. First, reconfiguration must be carefully sequenced to avoid routing packets into ``black holes.''  For each reconfiguration event, the SDN controller must first ``drain’’ links by informing packet switches not to route traffic through the optical links that are about to be switched. Only upon verifying that no traffic flows through these links can physical switching take place. After switching completes, the SDN controller can then ``undrain’’ links and start sending traffic through them again. 

Second, topology reconfiguration needs to be staged to maintain sufficient network capacity, especially when traffic demands are high. For instance, if 40\% of links need to be reconfigured when network utilization is at 80\%, the reconfiguration process must take at least 2 stages (switching 20\% of links in each stage) to avoid congestion and packet loss due to over-utilization. 

%%Some prior works implicitly assume that topology can be switched in a single stage~\cite{ghobadi2016projector, zhou2012mirror}. This works when the reconfiguration ``blackout'' is short, so network switches can temporarily ``absorb’’ on-going traffic. However, these fast-switching prototype OCSs remains untested in commercial settings. 

%% Routing : while ToE needs to be slow, TE can be done more quickly since it involves changing only the routing weights, which can be done much more frequently to manage bursts at finer timescales. This can be used to complement the topology in the event that traffic demand changes but topology may not be able to react.

\subsection{Bootstrapping Greenfield DCNs}
%%A sufficient history of TMs is required to find the right clusters. However, when a DCN is first brought up or when new pods are added, there is insufficient traffic data. So, the initial configuration should aim for a uniform logical topology, until sufficient historical traffic has been gathered, and {\metteor } can be initiated. 

{ \metteor } requires a sufficient history of TMs to find the right clusters. However, when a greenfield DCN is initially deployed, or when new pods are added during DCN expansion, there are not sufficient traffic data to locate the correct traffic clusters. So, the initial configuration should aim for a uniform logical topology, and route traffic evenly along both direct and indirect  paths. This reduces the risks of maximum congestion due to traffic bursts, at the cost of poor bandwidth tax performance as most traffic will traverse indirect paths. Once sufficient historical traffic measurement is available, then {\metteor } can be triggered. Based on our experience, one week’s worth of traffic snapshots should suffice.

\vspace{-6pt}
\section{Formalizing METTEOR}\label{section_model}
We now formalize the mathematics of {\metteor}. All notations are tabulated in Table \ref{table:notations}.

\begin{comment}
Physical constraints imposes a certain degree of reality onto the logical topology; it contains basic rules like, for instance, the logical topology cannot contain 3 links from pod 1 to pod 3 through ocs 1 if there are fewer than 3 fibers that are physically connecting either of those pods to ocs 1.
\end{comment}

\vspace{-5pt}
\subsection{Logical Topology}
Let $\mathcal{S}=\{s_1,..,s_n\}$ be the set of pods, $\mathcal{O}=\{o_1,..,o_y\}$ be the set of OCSs, and $x_{ij}^k$ be the number of links from pod $s_i$ to pod $s_j$ through OCS $o_k$. We represent a logical topology using $X=[x_{ij}], i,j=1,...,n$, where $x_{ij} = \sum_{k=1}^y  x_{ij}^k $ is the number of links between pods $s_i$ and $s_j$. The logical topology $X$ \textbf{must} be feasible under a given physical topology, so it must satisfy the following group of constraints.\\
\textbf{OCS-level (Hard) Physical Constraints}: %\hfill\\
\begin{equation}\label{constraint:ocslevel}
\left\{
\begin{split}
&\sum_{j=1}^n x_{ji}^k \leq h_{\text{ig}}^k(i), \sum_{j=1}^n x_{ij}^k \leq h_{\text{eg}}^k(i),\; \forall  i=1,..,n, k = 1,..,y;&\\
&x_{ij} = \sum_{k=1}^y  x_{ij}^k, \hspace{2mm}x_{ij}\text{ and }x_{ij}^k\text{ are all integers;}&
\end{split}
\right.    
\end{equation}
where $h_{\text{ig}}^k(i), h_{\text{eg}}^k(i)$ are the number of ingress/egress links of pod $s_i$ through OCS $o_k$. 
\vspace{-5pt}
\subsection{Network Throughput}
Let $T= [t_{ij}], i,j=1,...,n$ be a TM, where $t_{ij}$ is the traffic rate (in Gbps) from pod $s_i$ to pod $s_j$. Given a logical topology $X$, we measure its throughput $\mu$ \emph{w.r.t.} $T$, such that $\mu T$ is the maximum scaled TM that can be feasibly routed over the topology $X$. Routing feasibility is defined as follows. 

Let  $\mathcal{P}= \cup_{(i,j)} \mathcal{P}_{ij}$ be the candidate path set for all pod pairs $(s_i, s_j)$, where $\mathcal{P}_{ij}$ is the set of candidate paths from $s_i$ to $s_j$. We allow no more than two hops between pods, so $\mathcal{P}_{ij}=\{[s_i, s_j], [s_i, s_1, s_j], …, [s_i, s_n, s_j]\}$. The feasibility of routing $T$ over $X$ can be verified using:
\begin{eqnarray}\label{feasible_constraints}
&& \text{Find } \Omega=\{\omega_p\}, p\in \mathcal{P} \text{ such that }\\
&& 1) \sum_{p \in \mathcal{P}_{ij}} \omega_p = t_{ij}, \;  \forall \; i, j=1,...,n \nonumber\\
&& 2) \sum_{p \in \mathcal{P}, (s_i, s_j) \text{ is a link in }p} \omega_p \leq x_{ij} b_{ij}, \; \forall \; i, j=1,...,n\nonumber
\end{eqnarray}
where $b_{ij}$ is the link capacity between $s_i$ and $s_j$, and $\omega_p$ is the traffic routed (in Gbps) via path $p$. 

When computing throughput, we scale $T$ until max link utilization (MLU) hits 1, where link utilization is the ratio of a link’s traffic flow rate to its capacity. As it turns out, this problem (\ref{feasible_constraints}) is related to that of minimizing max link utilization (MLU) when routing an unscaled $T$ over $X$. Thus, a lower MLU implies that there is more room for $T$ to grow before MLU hits 1, which leads to higher throughput.

\vspace{-8pt}
\subsection{Design Objective}
Given $m$ traffic matrices, $\{T_1, .. , T_m\}$, let $\{\mu_1, .. , \mu_m\}$ be the throughputs of routing $\{T_1, .. , T_m\}$ over $X$. We aim to design $X$ such that $\min (\mu_1, .. , \mu_m)$ is maximized: 
\vspace{-5pt}
\begin{eqnarray}\label{overall_formulation}
&& \max_{X}\mu=\min\{\mu_1, .. ,\mu_m\} \text{, s. t}\\
&& 1) \text{ } X\text{ is an integer matrix that satisfies (\ref{constraint:ocslevel})} \nonumber\\
&& 2) \text{ } (X, \mu_\tau T_\tau)\text{ satisfies (\ref{feasible_constraints})}, \; \forall \; \tau \in \{1, .., m\} \nonumber\\
&& 3) \text{ The majority of traffic in }T_\tau\text{ is routed through}\nonumber\\
&&\hspace{4mm}\text{direct paths in }X, \; \forall \; \tau \in \{1, .., m\} \nonumber
\end{eqnarray}
Note that (\ref{overall_formulation})’s formulation ensures that the logical topology maximizes all TM throughputs as evenly as possible. Although we could maximize the total throughput of all TMs, we avoid this as it gives the logical topology freedom to selectively-optimize the throughputs of the “easier” TMs.

%%We also tried a formulation that maximizes the sum of TM throughputs, which often resulted in logical topologies that 
%the logical topology tended to ``favor’’ throughputs from a subset of TMs, at the expense of lowering throughputs for other TMs, in pursuit of a higher throughputs sum is maximized. 
%though  TMs’ throughputs are easier to optimize for than others, the logical topology became more ``overfitted’’ to the easier-to-optimize TMs. 
%Note that (\ref{overall_formulation}) maximizes throughput for the worst-case TM; this ensures the resulting logical topology is ``well-rounded’’ for all input traffic matrices. 

Solving (\ref{overall_formulation}) gives us the optimal logical topology. However, the runtime complexity scales exponentially with the number of pods and OCSs, which is too challenging for commercial solvers like Gurobi~\cite{gurobi}. Some prior work has studied traffic engineering (TE) techniques based on multiple TMs~\cite{zhang2005optimal, zhang2014load}. Unfortunately, those techniques cannot be applied here, as ToE, unlike TE, requires integer solutions. 

The complexity of (\ref{overall_formulation}) is imposed by the structure of the physical topology.  Since OCSs have limited radix, and the OCS layer may involve $\sim 10k$ links, this layer must use multiple OCSs. Using multiple OCSs, rather than one giant OCS, makes this optimization a strongly NP-complete combinatorics problem~\cite{foerster2018characterizing, zhao2018minimal}. Since tackling (\ref{overall_formulation}) head on is infeasible, we split the overall problem into smaller subproblems. 

%To reduce complexity, we first compute a fractional solution (see step 2 in Fig.  \ref{topology_engineering_overall_workflow}) instead of computing an integer solution directly. Without the integer requirement, the second step of { \metteor } can be formulated using linear programming (LP). 

%The final step is to compute the closest integer solution to the fractional solution (see step 3 in Fig.  \ref{topology_engineering_overall_workflow}). This portion is where the NP-completeness stems from. While we could formulate this problem as an ILP, this does not scale well, so we use two graceful relaxation methods with polynomial complexity (see \S\ref{section_circuit_switch_mapping}).

\label{section_mathematical_model}
\begin{table}[!t]
\footnotesize
\begin{tabular}{|p{2.0cm}|p{6.06cm}|}
\hline
$\mathcal{S} = \{s_1, .., s_n\}$ & Set of all $n$ pods \\
\hline
$\mathcal{O} = \{o_1, .., o_y\}$& Set of all $y$ circuit switches\\
\hline
$x_{ij}^k$ & Integer number of pod $i$'s egress links connected to pod $j$'s ingress links through $o_k$\\ 
\hline
$X = [x_{ij}] \in \mathbb{N}^{n\times n}$ & Inter-pod topology; $x_{ij}$ denotes the number (integer) of $s_i$ egress links connected to ingress links of $s_j$\\ 
\hline
$T = [t_{ij}] \in \mathbb{R}^{n\times n}$ & Traffic matrix, where $t_{ij}$ denotes the traffic rate (Gbps) sent from $s_i$ to $s_j$\\ 
\hline
$D = [d_{ij}] \in \mathbb{R}^{n\times n}$ & Fractional topology; $d_{ij}$ denotes the number (fractional) of $s_i$ egress links connected to ingress links of $s_j$\\ 
\hline
%$c_{ij}^+, c_{ij}^-$ & $c_{ij}^+ = \ceil{d_{ij}}$, and $c_{ij}^- = \floor{d_{ij}}$\\
%\hline
$h_{eg}^k(s_i), h_{ig}^k(s_i)$ & Number of physical egress and ingress links, respectively, connecting $s_i$ to $o_k$\\
\hline
$r_{\text{eg}}^i, r_{\text{ig}}^i$ & Number of egress and ingress links, respectively, of $s_i$\\
\hline
$b_{ij}$ & Link capacity (Gbps) between $s_i$ and $s_j$\\
\hline
$\mathcal{P}_{ij}$ & Set of all routing paths from $s_i$ to $s_j$ \\
\hline
$\omega_{p}$ & Traffic (Gbps) on path $p$ \\
\hline
$\mu$ & Traffic scale-up factor\\
\hline
\end{tabular}
\caption{\small Notations used in this paper} 
\label{table:notations}
\vspace{-31pt}
\end{table}

\section{Overall Methodology}\label{section_overall_methodology}
Next, we discuss the techniques employed to sidestep the complexity of (\ref{overall_formulation}). Specifically, we split the overall problem into steps 2 and 3 of Fig. \ref{topology_engineering_overall_workflow}.

First, we design a \emph{fractional} logical topology (Step 2 in Fig.  \ref{topology_engineering_overall_workflow}) that optimizes throughput for all TMs, instead of computing an integer solution directly. Without the integer requirement, this step can be solved using linear programming (LP). Next, we configure the OCSs such that the \emph{integer} logical topology best approximates the fractional topology (Step 3 of Fig. \ref{topology_engineering_overall_workflow}). These steps are detailed in \S\ref{section_fractional_topology} and \S\ref{section_circuit_switch_mapping}.

\vspace{-6pt}
\subsection{Computing Fractional Topology}\label{section_fractional_topology}
Before proceeding, we need to define \emph{fractional topology}.
\begin{definition}
Given a set of pods $\mathcal{S}=\{s_1, …, s_n\}$ and the number of ingress \& egress links $r_{\text{ig}}^i, r_{\text{eg}}^i$, $D = [d_{ij}] \in \mathbb{R}^{n \times n}$ is a fractional topology \emph{iff} it satisfies:
\vspace{-3pt}
\begin{equation}
\sum\limits_{j=1}^n d_{ij} \leq r_{\text{eg}}^i, \sum\limits_{i=1}^n d_{ij} \leq r_{\text{ig}}^{j} \; \quad\forall \; i,j = 1, ..., n
\label{constraint:egressandingress}
\end{equation}
\label{defn:fractional_graph}
\end{definition}
\vspace{-7pt}
A fractional topology, $D$, simply describes the inter-pod (fractional) link count, where each pod’s in/out-degree constraints are satisfied. This definition noticeably ignores the OCSs; since the OCS layer will be considered when rounding the fractional topology into an integer logical topology, accounting for them here unnecessarily increases the number of variables needed for representation\footnote{As the number of ports of an OCS and a pod is comparable, the total number of OCSs, $y$, must be in the same order as the total number of pods $n$. Factoring in the OCS layer increases the variable space from $O(n^2)$ to $O(n^2y)$, causing our solver to run out of memory for large fabrics.}. 

Our goal is to design a fractional topology, $D$, that leads to good throughput for all the input TMs. Initially, we formulated an LP that computes the optimal $D$ for all TMs:
\begin{equation}\label{eqn:single_large_lp}
\max_{D\text{ satisfies (\ref{constraint:egressandingress})}}\mu,\hspace{4mm}\text{s.t.} (D, \mu T_\tau)\text{ satisfies (\ref{feasible_constraints})},  \forall \tau \in \{1, .., m\}
\end{equation}
However, the above formulation scales badly due to the large number of constraints when considering multiple TMs in one LP. To achieve scalability, we use a two-step approach: 1) compute the optimal fractional topology for every TM, and 2) combine the fractional topologies into one.

\subsubsection{Fractional Topology for One Traffic Matrix} \hfill\\
We first compute a fractional topology for a single TM based on two routing metrics: throughput and average hop count. However, there is a tradeoff between these two metrics under a given topology. For instance, throughput may be increased if we allow non-shortest-path routing, but this can increase hop count. We want to find a fractional topology that gives a Pareto-optimal tradeoff between these metrics.

\begin{comment}
\textbf{\\Routing feasibility }\hfill\\
Let $T= [t_{ij}], i,j=1,...,n$ be a TM, where $t_{ij}$ is the traffic rate (in Gbps) from pod $s_i$ to pod $s_j$. Let  $\mathcal{P}= \cup_{i \neq j}\cup_{j=1}^{n} \mathcal{P}_{ij}$ be the candidate path set for all the pod pairs $(s_i, s_j)$, where $\mathcal{P}_{ij}$ is the set of candidate paths from $s_i$ to $s_j$. For instance, if we only allow a max of two hops between pods, then $\mathcal{P}_{ij}=\{[s_i, s_j], [s_i, s_1, s_j], …, [s_i, s_n, s_j]\}$. Then, given $\mathcal{P}$, one can verify the feasibility of routing $T$ over $D$ using:
\begin{eqnarray}\label{feasible_constraints}
&& \text{Find } \Omega=\{\omega_p\}, p\in \mathcal{P} \text{ such that }\\
&& 1) \sum_{p \in \mathcal{P}_{ij}} \omega_p = t_{ij}, \;  \forall \; i, j=1,...,n \nonumber\\
&& 2) \sum_{p \in \mathcal{P}, (s_i, s_j) \text{ is a link in }p} \omega_p \leq d_{ij} b_{ij}, \; \forall \; i, j=1,...,n\nonumber
\end{eqnarray}
Here, $b_{ij}$ is the link capacity between $s_i$ and $s_j$, and $\omega_p$ is the actual traffic routed (in Gbps) via path $p$ from $s_i$ to $s_j$. For a given tuple $(D, T, \mathcal{P})$, a set of routing variables $\Omega$ is called feasible \textit{iff} all the constraints in (\ref{feasible_constraints}) are satisfied.
\end{comment}

Given a TM, $T$, and a set of candidate paths, $\mathcal{P}$, we compute a fractional topology $D$ in two steps. First, we compute $D$ that maximizes throughput $\mu$ for $T$ as follows:
\begin{equation}\label{fractional_graph_lp}
\max\limits_{\mu, D} \;  \mu \;\quad\text{s. t: } (D, \mu T)\text{ satisfies }(\ref{feasible_constraints}).
\end{equation}
Let $\mu^*$ be the optimal value of (\ref{fractional_graph_lp}). There could be many fractional topologies that maximize throughput $\mu^*$ for $T$. We select the one that minimizes the usage of the non-shortest paths. Let $\mathcal{P}_{ij}^{\prime}\subset \mathcal{P}_{ij}$ be the set of non-shortest paths in $\mathcal{P}_{ij}$. The formulation is as follows:
\begin{equation}\label{fractional_graph_qp}
\min\limits_{\Omega, D} \; \sum_{p \in \mathcal{P}^{\prime}}\big(\omega_{p}\big)^2  \;\quad\text{s. t: } (D, \mu^* T)\text{ satisfies }(\ref{feasible_constraints}).
\end{equation}

Note that average hop count can be reduced implicitly by minimizing the routing weights of non-shortest paths, thus solving (\ref{fractional_graph_qp}) helps $D$ meet the third requirement in the formulation (\ref{overall_formulation}). We opted for a quadratic objective function in (\ref{fractional_graph_qp}) over a linear one due to its ``sharper’’ landscape, which helps desensitize solution to slight TM input variations. {\bf In our evaluation, we found this step instrumental in increasing the amount of direct-hop traffic overall.}

\begin{comment}
Note that average hop count can be reduced implicitly through minimizing the routing weights of non-shortest paths. Though we could formulate this using a linear objective function, the potential lack of a sharp landscape for a linear objective causes the fractional topology solution to ``wiggle'' dramatically, even with slight changes in the traffic matrix input. This causes a significant performance degradation later on when we combine the fractional topologies into one. Instead, we employ quadratic programming (QP) in (\ref{fractional_graph_qp}), which possesses a sharper objective function landscape that desensitizes the output variation to input changes.
\end{comment}

\subsubsection{Combining Fractional Topologies}  \hfill\\
Having computed $D^{\tau}$ for every $T^{\tau},\tau=1,..., m$, we then linearly combine them into one, $D^*$, which is then used to map onto the OCS layer. This can be formulated as:
\begin{comment}
\begin{eqnarray}
&&\max\limits_{\alpha > 0, D^*} \; \quad \alpha \\
&\text{s. t: } & \quad d_{ij}^* \geq \; \alpha \; d_{ij}^\tau, \; \forall \; i, j =1,..,n\text{ and }\tau=1,..,m\nonumber
\end{eqnarray}
\end{comment}
\begin{equation}\label{combine_fractional_graphs}
\max_{\alpha > 0, D^*}\alpha, \text{ s. t: } d_{ij}^* \geq \; \alpha \; d_{ij}^\tau, \; \forall \; i, j =1,..,n\text{ and }\tau=1,..,m.
\end{equation}
The constraint in (\ref{combine_fractional_graphs}) guarantees that the throughput of routing $T^{\tau}$ in the combined fractional topology $D^*$ is at least $\alpha$ times of that of routing  $T^{\tau}$ in $D^{\tau}$. By maximizing $\alpha$, $D^*$ achieves a good balance among different fractional topologies in terms of throughput.

\subsection{Mapping $D^*$ onto the OCS Layer}\label{section_circuit_switch_mapping}
We now map $D^*$ onto the OCSs such that the integer logical topology, $X$, best approximates $D^*$.

\vspace{-2pt}
\subsubsection{Problem Setup} \hfill\\
The goal here is to decide the total number of links $x_{ij}^k$ from pod $s_i$ to pod $s_j$ through OCS $o_k$, for every $i,j=1,2,...,n$ and $k=1,2,...,y$. Since there are $y$ OCSs, we split each $d_{ij}^*$ entry in $D^*$ into $y$ integers $x_{ij}^k, k=1,...,y$, such that $\sum_{k = 1}^{y} x_{ij}^k \approx d_{ij}$, which can be formulated as
% In other words, the logical topology need to satisfy the following in addition to those in (\ref{constraint:ocslevel}):

\noindent \textbf{Soft / Matching Constraints }
\begin{equation}\label{con_constraint}
    \floor{d_{ij}^*} \leq \sum_{k=1}^y x_{ij}^k \leq \ceil{d_{ij}^*}, \quad \forall \; i,j=1,...,n
\end{equation}
Then, a logical topology can be found by solving
\begin{equation}\label{map_to_ocs}
\text{Find }\{x_{ij}^k\}\text{ satisfying (\ref{constraint:ocslevel}) and (\ref{con_constraint}).}
\end{equation} 

%Although computing the logical topology, $X$, that satisfies (\ref{con_constraint}) for all pod pairs is not mandatory, doing so gives us the solution, $X$, that best approximates $D^*$. Unfortunately, computing $X$ that strictly satisfies (\ref{constraint:ocslevel}) and (\ref{con_constraint}) is NP-complete~\cite{zhao2018minimal}. Even so, the algorithm must minimize soft constraints violation, as a logical topology that poorly approximates $D^*$ may result in poor network performance.

Solving (\ref{map_to_ocs}) strictly is NP-Complete, as 3-Dimensional Contingency Table problem, proven to be NP-Complete \cite{irving1994three}, can be reduced to our problem. Fortunately, unlike the physical OCS constraint (\ref{constraint:ocslevel}), constraint (\ref{con_constraint}) is ``soft’’, which can be relaxed to reduce algorithmic complexity.

%We call these ``soft’’ constraints because, though satisfying (\ref{con_constraint}) for all pod pairs is not mandatory, doing so gives us a logical topology, $X$, that best approximates $D^*$. Unfortunately, computing $X$ that strictly satisfies (\ref{constraint:ocslevel}) and (\ref{con_constraint}) is NP-complete~\cite{zhao2018minimal}. Even so, the algorithm must minimize soft constraints violation, as a logical topology that poorly approximates $D^*$ may result in poor network performance.

Initially, we tried two natural ideas for solving (\ref{map_to_ocs}). The first is to solve it directly using ILP. This naive approach has an extremely high runtime complexity; it also cannot gracefully relax the soft constraints when satisfying (\ref{con_constraint}) is infeasible. The second idea is to employ a greedy maximum matching as in Helios~\cite{farrington2011helios}, which maps each OCS to a max-weight matching subproblem based on $D^*$, and then greedily solves these subproblems. However, this greedy approach could violate so many soft constraints, such that the resulting logical topology $X$ may no longer be a good estimate of $D^*$, causing poor network performance.

We wanted an approach that (a) has low complexity, (b) is mathematically sound, and (c) can \emph{gracefully} relax soft constraints when necessary. The ILP approach only achieves (b), while the greedy algorithm only achieves (a). Inspired by convex optimization theories,  we developed two algorithms that achieve all three criteria.

\subsubsection{Algorithm Intuition}\label{section_barrier_penalty}\hfill\\
One standard approach for relaxing hard-to-satisfy constraints is the Barrier Penalty Method (BPM)~\cite{ConvexOptimization}. The idea is to transfer all the soft constraints into an objective function $U(X)$ that penalizes soft constraint-violation:
\vspace{-5pt}
\begin{equation}\label{eqn:barrier_objective}
\min\limits_{X \text{ satisfies }(\ref{constraint:ocslevel})} U(X) = \sum_{i=1}^n\sum_{j=1}^n\Big[\big(\sum_{k =1}^y x_{ij}^k - \floor{d_{ij}^*}\big) \big(\sum_{k=1}^y x_{ij}^k - \ceil{d_{ij}^*}\big)\Big]
\end{equation}

Since $x_{ij}^k$ are all integers, it is easy to verify that $U(X)\geq 0$ and $U(X)=0$ \emph{iff} all the soft constraints in (\ref{con_constraint}) are satisfied. When (\ref{con_constraint}) is not satisfiable, minimizing $U(X)$ provides a graceful relaxation of (\ref{con_constraint}). Still, computing an integer $X$ directly requires exponential runtime. We instead compute $x_{ij}^k$ for each OCS $k$ iteratively, while keeping $x_{ij}^k$ for all other OCSs constant. Using first order approximation, computing $x_{ij}^k$ for a given OCS $k$ can be mapped to a min-cost flow problem, which can be solved in polynomial time~\cite{Edmonds1972Theoretical}. Unfortunately, since BPM weighs every $(i, j)$ soft constraint equally in its objective function, we found that BPM suffers from an increased soft constraints violation when $D^*$ is skewed. This finding highlights the BPM’s limited adaptability to a wide range of fractional topologies.

To address the shortcomings of BPM, we employ another approach based on the Lagrangian dual method (LDM)~\cite{low1999optimization}. LDM relaxes the soft constraints (\ref{con_constraint}) with dual variables that can adapt to the skewness of $D^*$ over multiple iterations, leading to fewer soft constraint violations. Our {\metteor } implementation uses LDM precisely for its adaptability. The full derivations of the LDM and the BPM are in \ref{appendix:barrier_penalty_detailed_walkthrough} and \ref{appendix:lagrangian_dual_detailed_walkthrough}, respectively, followed by their optimality evaluation in \ref{appendix_reconfiguration_algo_analysis}.

\begin{figure*}[ht]
  \centering
   \includegraphics[width=0.82\linewidth,height=1.3in,trim={0.7cm 0.10cm 0.7cm 0.8cm}, scale=1]{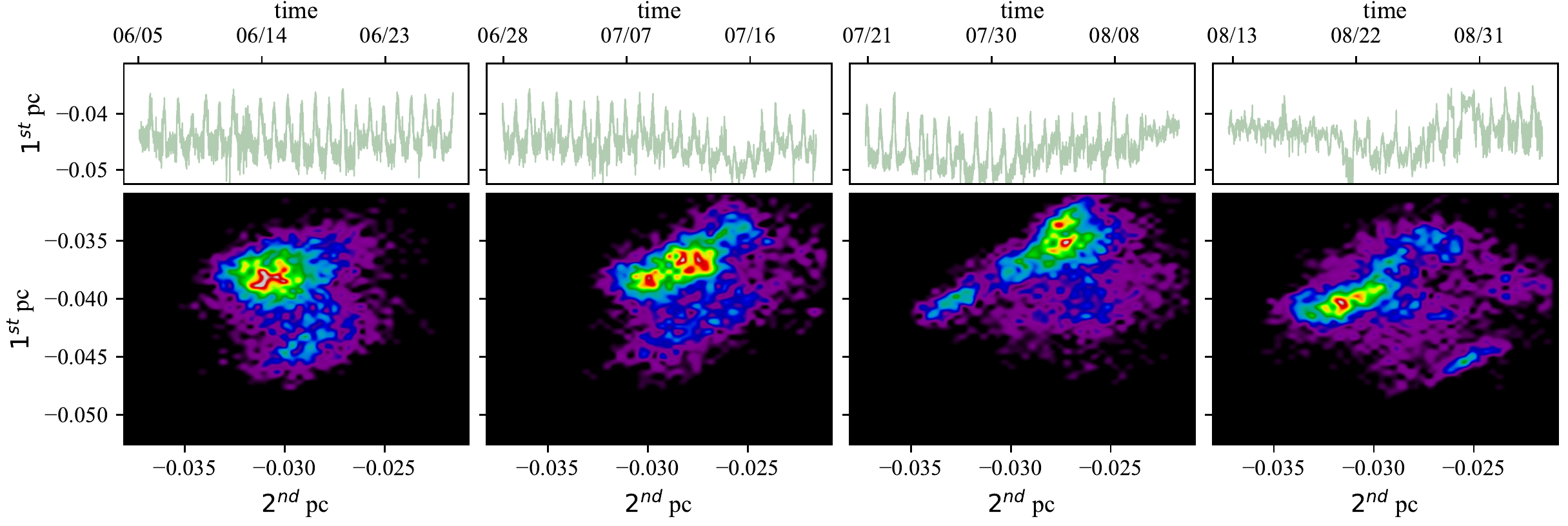}
\vspace{-8pt}
   \caption{\small PCA on 3 months of production DCN traffic; each plot shows 24 days. Top: temporal variation along the first principal component. Bottom: PCAs as 2D-heatmaps; higher occurrence is represented by brighter colors. }
\vspace{-13pt}
    \label{fig:traffic_clustering}
\end{figure*}

\vspace{-5pt}
\section{Picking Representative Traffic} \label{section_traffic_clustering}
The first step in {\metteor}’s workflow is to extract multiple representative TMs. We show how to extract these TMs purely from historical traces, while assuming no knowledge of the underlying application mix. 

Recall Fig. \ref{fig:historical_similarity} in \S\ref{section_proof_of_concept} showed that inter-pod traffic exhibits a weak form of temporal stability, which we call recurrence. This behavior causes TMs to form clusters that vary slowly over time.  But how exactly does traffic recurrence lead to clustering behavior? We offer an informal reasoning as follows. Consider a traffic matrix snapshot as a point in high-dimensional space. Over time, recurrent snapshots will begin to ``congregate’’ within the vicinity of one another to form clusters, rather than scatter around uniformly in space.

The appearance of clustering effects is predicated of traffic exhibiting both spatial and temporal locality. The spatial locality is inherent to data center job placement. Large DCNs tend to assign different groups of pods to specific production areas, so pods belonging to the same production areas are more likely to communicate with one another. Meanwhile, the temporal locality comes from traffic recurrence, and this property is partially determined by user behavior. The regularity of usage patterns from long-term customers in cloud data centers, or the routine running of batched jobs (e.g. integration tests) in private data centers may all cause traffic recurrence.  

\vspace{-5pt}
\subsection{Traffic Clustering Effect}\label{section_traffic_clustering_effect}
\subsubsection{Visualization of Traffic Clusters}\hfill \\
\noindent Traffic matrices are high dimensional data points (each point has $\Theta$(pod num${}^2)$ dimensions), so visualizing their temporal evolution is exceedingly difficult. To this end, we employ principal component analysis (PCA) to reduce the dimensionality of the TM snapshots, and project each snapshot onto the plane formed by the first and second principal components~\cite{ghodsi2006dimensionality}. Since an optimal topology for a TM remains optimal regardless of scaling, we should not distinguish TMs differ only in their total traffic volume. Thus, we normalize all TMs to 1. 

%(shizhen’s version) Since the optimal topology for a given TM remains optimal for the scaled up/down version of said TM, we should not distinguish TMs differ only in their total traffic volume. Thus, we normalize all TMs’ total volume to 1.

Fig. \ref{fig:traffic_clustering} shows an example of traffic from one of the production data centers, with this projection represented as a 2-D heatmap, where brighter colors indicate areas with a higher occurrence. The proportion of variance explained (PVE) by the first two components is 91\% of the total variation. Each plot covers about 24 days’ worth of traffic snapshots. There are noticeable clustering effects, which manifest as ``clouds'' of bright patches. These clusters shift slowly over time, so topology reconfiguration to handle these shifts is necessary. 

The top row in Fig. \ref{fig:traffic_clustering} shows the traffic temporal variation along the principal component. Note that while the clusters change slowly over the course of weeks, the variation along the principal component between snapshots is rather significant. The maximum variation from peak to trough accounts for about $5\%$ of the total traffic, which is $85\%$ of the largest inter-pod traffic, and 25$\times$ the average inter-pod traffic. 

\begin{figure}[!tp]
\centering
\begin{tikzpicture}
\begin{axis}[ybar, width=1.\columnwidth, height=3.8cm, symbolic x coords={fab1, fab2, fab3, fab4, fab5, fab6, fab7, fab8, fab9, fab10, fab11, fab12}, xtick=data,bar width=8pt,legend pos=north west, legend columns=2, ymax=7, ymin= 0, xtick align=inside,ylabel style={align=center}, ylabel style={align=center}, ylabel=\footnotesize Avg. number \\ of clusters, ylabel shift = -2pt, legend style={draw=none, at={(0,1.225), fill=none},anchor=north west,}, xtick pos=left, ytick pos=left, ymajorgrids]
\addplot[fill=lightgray, postaction={pattern=dots}, error bars/.cd, y dir=both, y explicit,error bar style=black] coordinates {
(fab1, 2.44) += (0, 0.49) -= (0, 0.49)
(fab2, 2.98) += (0, 0.97) -= (0, 0.97)
(fab3, 2.00) += (0, 0.00) -= (0, 0.00)
(fab4, 2.67) += (0, 0.62) -= (0, 0.62)
(fab5, 2.56) += (0, 0.53) -= (0, 0.53)
(fab6, 2.78) += (0, 0.86) -= (0, 0.86)
(fab7, 4.00) += (0, 1.02) -= (0, 1.02)
(fab8, 2.33) += (0, 0.47) -= (0, 0.47)
(fab9, 2.24) += (0, 0.42) -= (0, 0.42)
(fab10, 2.67) += (0, 0.94) -= (0, 0.94)
(fab11, 2.78) += (0, 0.82) -= (0, 0.82)
(fab12, 3.11) += (0, 1.01) -= (0, 1.01)
};
\addplot+[only marks, mark = asterisk, mark options={xshift=\pgfkeysvalueof{/pgf/bar shift}}] table[x index=0,y index=1] {
fab1   2
	fab2   2
fab3   2
fab4   2
	fab5   2
fab6   2
fab7   3
	fab8   2
fab9   2
fab10   2
	fab11   2
fab12   2
fab1 2
		fab1   3
	fab2   6
fab3   2
fab4   4
	fab5   4
fab6   5
fab7   6
	fab8   3
fab9   3
fab10  5
	fab11   5
fab12   5
};
\end{axis}
\end{tikzpicture}
\vspace{-11pt}
\caption{\small TM clustering statistics of 12 production DCNs. Six months of traffic snapshots are split into 10 segments of $\sim$18 days each. The appropriate \# of clusters of each segment is determined using silhouette method. Error bars denote the std dev, and asterisk marks denote max and min.}
\vspace{-21pt}
\label{fig:traffic_longterm_statistics}
\end{figure}
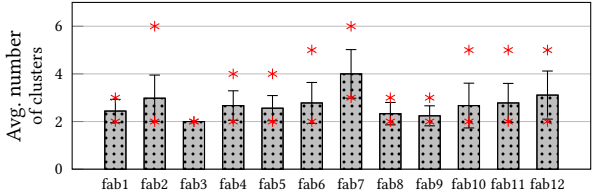

\vspace{-6pt}
\subsubsection{Clustering Statistics Across 12 Data Centers}\hfill \\
\noindent Next, we study the clustering effect across a fleet of 12 production DCNs. Using 6 months’ worth of historical traffic from each data center, we compute the optimal number of clusters for each 18-day period using the silhouette method~\cite{SilhouetteMethod}. This step repeats for 10 contiguous 18-day segments. Fig. \ref{fig:traffic_longterm_statistics} summarizes these statistics.

%（shizhen’s version）We use a highly subjective word  ``appropriate’’ here, because from Figure \ref{fig:traffic_clustering} it is hard to find a universally correct answer to the number of clusters. In fact, a different method, e.g., gap statistics, may give us slightly different answers.

%Notice our choice of the word ``appropriate’’ here. This is because there is no universally correct number of clusters for said data set should be. For instance, using gap statistics on an identical data set may give us a different set of measurements.

Across all 12 fabrics, the average ``appropriate’’ number of clusters for each 18-day period is below 4. The appropriate number of clusters in each period, and how it evolves over time, are generally artifacts of the DCN’s underlying application mix and scheduler behavior. Therefore, the optimal number of clusters has to be determined individually for each fabric through traffic analysis. 

The optimal number of clusters number changes from one 18-day segment to the next (see the error bar overlays in Fig. \ref{fig:traffic_longterm_statistics}), though the deviations are small (within a $\pm 1$ range of the average). This suggests that network operators can pick a consistent number of TMs used for {\metteor } in all reconfiguration epochs for each fabric.

\vspace{-6pt}
\subsection{Finding Representative TMs}
\label{section_finding_representative_traffic_matrices}
In theory, we could consider using the set of \emph{all} historical TMs for optimization. Doing so guarantees coverage of any future traffic that is recurrent, but the runtime and memory complexity required to compute a topology for such a large set of TMs would also increase astronomically. Hence, to avoid adding significant computational complexity to {\metteor}, we need to pick the smallest set of TMs that is still sufficiently representative of future traffic.

Though prior works have proposed effective methods for selecting traffic matrix estimators (e.g. CritMat~\cite{zhang2005finding}), we employ a simple, yet effective, $\kappa$-means clustering algorithm to find the centroids within the historical traffic snapshots. Computing these cluster centroids gives us a compact representation of historical traffic that retains much of the ``features’’ of the historical traffic. As long as a future traffic snapshot is recurrent, there is a high probability it will be well-represented by at least one of these cluster centroids. 

%% Cite CritMat?

\vspace{-10pt}
\subsection{Accuracy of Cluster-based Prediction}\label{subsection:compare_prediction}
\noindent Next, we test how well traffic clusters predict future traffic. First, we split 6 months’ worth of traffic into segments of two weeks. In cluster-based prediction, we extract $\kappa$ cluster centroids from each 2-week segment, and use them to predict traffic in the next segment. We compare cluster-based prediction against two single-traffic-based predictions, namely ${ave}$ and ${max}$, which pick the historical average and component-wise max values, respectively.

We use \emph{cosine similarity} (defined in~\cite{CosineSimilarity}) to evaluate how similar the predicted TM is to the actual TM. Given two TMs’ vector representations, $\vec{T}_1, \vec{T}_2$, their cosine similarity $sim(\vec{T}_1, \vec{T}_2)$ measures how parallel (or similar) these two vectors are. If $sim(\vec{T}_1, \vec{T}_2)$ is close to $1$, it follows that a $\vec{T}_1$-optimized topology would also be close-to-optimal for $\vec{T}_2$. For multiple representative TM cases, we pick the one that is most similar to the evaluated traffic snapshot.

Fig. \ref{fig:traffic_similarity} shows that cluster-based prediction yields higher accuracy than both ${ave}$ and ${max}$, as they can more effectively capture long-term traffic behavior. The long tail of the $\kappa = 1$ curve indicates that fewer clusters may hurt worst-case accuracy, showing an inability to cover outlier TMs. Choosing $\kappa = 5$ over $\kappa = 3$ shows a diminishing improvement in prediction accuracy as $\kappa$ increases. ${max}$ has the lowest accuracy, as it captures the maximum element-wise demand that may not be representative in general. 

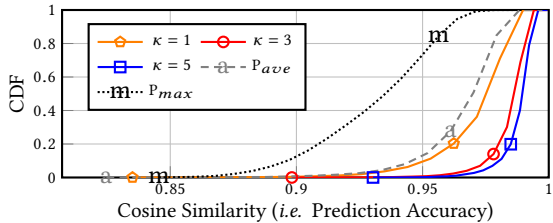
\begin{figure}[!tp]
\pgfplotstableread{plot_data/cluster_ptoe_predictability_cdf_ju2_dls03.txt}
	\datatable
\begin{tikzpicture}
\begin{axis}[xlabel = {\footnotesize Cosine Similarity (\emph{i.e. } Prediction Accuracy)}, xlabel near ticks, width=0.95\columnwidth, height=1.5in, outer sep=-1pt, ymin=0,ymax=1,xmax=1, ylabel= \footnotesize CDF, ylabel near ticks, ylabel shift=-2pt, xlabel shift=-3pt, legend style={at={(0.06, 0.90)}, legend columns=2, anchor= north west}, xtick pos=left, ytick pos=left, grid, thick]

\addplot+[mark=pentagon, mark options={solid}, mark repeat={15}, color=orange] table[y=k1_bin,x=k1_cdf] from \datatable ;
%\addplot+[mark=+, mark options={solid}, mark repeat={15}, color=violet] %table[y=k2_bin,x=k2_cdf] from \datatable ;
\addplot+[mark=o, mark repeat={15}, color=red] table[y=k3_bin,x=k3_cdf] from \datatable ;
%\addplot+[mark=asterisk, mark repeat={15}, color=blue] table[y=k4_bin,x=k4_cdf] from %\datatable ;
\addplot+[mark=square, mark repeat={15}, color=blue] table[y=k5_bin,x=k5_cdf] from \datatable ;
\addplot+[mark=text, text mark=a, mark options={solid}, mark repeat={15}, densely dashed, color=gray,] table[x=ptoe_ave_bin,y=ptoe_ave_cdf] from \datatable ;
\addplot+[mark=text, text mark=m, mark options={solid}, mark repeat={15}, densely dotted, color=black,] table[x=ptoe_max_bin,y=ptoe_max_cdf] from \datatable ;
%\legend{\tiny $\kappa = 1$, \tiny $\kappa = 2$, \tiny $\kappa = 3$, \tiny $\kappa = 4$, \tiny $\kappa = 5$, \tiny $\text{P}_{ave}$, \tiny $\text{P}_{max}$}
\legend{\tiny $\kappa = 1$, \tiny $\kappa = 3$, \tiny $\kappa = 5$, \tiny $\text{P}_{ave}$, \tiny $\text{P}_{max}$}
\end{axis}
\end{tikzpicture}
\vspace{-11pt}
 \caption{\small Accuracy of cluster-based \emph{v.s.} single-traffic predictors in estimating future traffic.} 
%\caption{\small Accuracy of cluster-based predictors and single-traffic predictors in predicting future traffic. Six months of traffic snapshots are divided into 2-week segments. The predicted traffic computed for each segment is used to measure how well they predict snapshots in the subsequent segment.}
\vspace{-14pt}  
\label{fig:traffic_similarity}
\end{figure}

%%Although the brightest patches within these ``clouds" move, they move within a general vicinity of the clusters. In fact, these traffic clusters are close enough to one another (instead of being spread out uniformly in space), allowing \emph{one} topology to feasibly serve all of the clusters well. The heatmaps also show that clusters shift slowly over time, so we need to reconfigure topology to handle these shifts. Using this observation, we can optimize topology using historical clusters, which are likely to cover most future traffic.

% tendem - (t)topology en(en)gineering based on de(de)mand m(m)ultiplicity
% demand uncertainty (towdun)
% TEDUMT - topology adapt demand uncertainty with multiple 
% METTEOR/METTEFOR - Multi-estimated Traffic Topology Engineering for Optimized Robustness

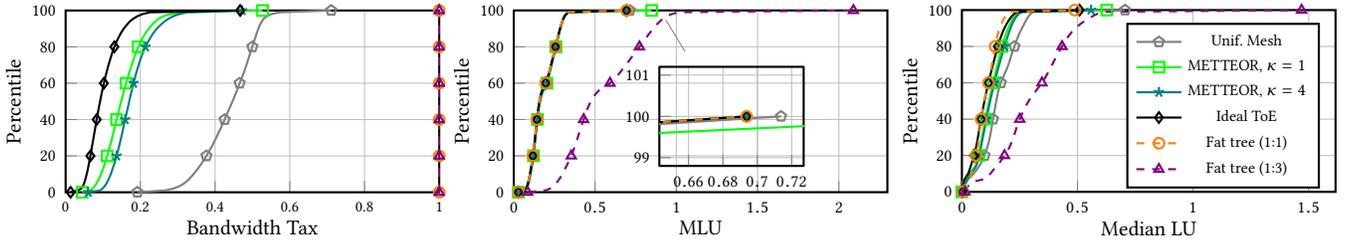
\begin{figure*}[!ht]
\pgfplotstableread{plot_data/invcdf_ju1_dls08_perfect_perfect} \perfect
\pgfplotstableread{plot_data/invcdf_ju1_dls08_static_perfect} \statictoe
\pgfplotstableread{plot_data/invcdf_ju1_dls08_robusttoe_r4032t2016c4_perfect} \robusttoe
\pgfplotstableread{plot_data/invcdf_ju1_dls08_robusttoe_r4032t2016c1_perfect} \robusttoeb
\pgfplotstableread{plot_data/invcdf_ju1_dls08_avetoe_r4032t2016_perfect} \avetoe
\pgfplotstableread{plot_data/invcdf_ju1_dls08_fattree_ecmp_taper1p000} \fattreeecmptaperone
\pgfplotstableread{plot_data/invcdf_ju1_dls08_fattree_ecmp_taper0p333} \fattreeecmptaperthree
\begin{subfigure}{0.32\textwidth}
        \centering
\begin{tikzpicture}
\begin{axis}[ylabel = \footnotesize Percentile, xlabel = \footnotesize Bandwidth Tax, ymin=0, ymax=100, ylabel near ticks, width=1.15\linewidth, height=4.0cm, ylabel shift = -3pt, xlabel shift = -4pt, xmax=2, xmin=1, xtick pos=left, ytick pos=left, grid, thick, xtick={1, 1.2, 1.4, 1.6, 1.8, 2}, xticklabels={0, 0.2, 0.4, 0.6, 0.8, 1}]
\addplot+[color=gray,mark=pentagon,mark repeat={20}] table[y=percentile, x=ave_hop_count] from \statictoe ;
\addplot+[color=green, mark=square,mark repeat={20}] table[y=percentile,x=ave_hop_count] from \robusttoeb ;
\addplot+[color=teal, mark=star, mark repeat={20}] table[y=percentile,x=ave_hop_count] from \robusttoe ;
%\addplot+[color=red, mark=pentagon, mark repeat={20}] table[y=percentile,x=ave_hop_count] from \avetoe ;
\addplot+[color=black, mark=diamond, mark repeat={20}] table[y=percentile,x=ave_hop_count] from \perfect ;
\addplot+[color=orange, mark=o, mark repeat={20}, dashed, mark options={solid}] table[y=percentile, x=ave_hop_count] from \fattreeecmptaperone ;
\addplot+[color=violet, mark=triangle, mark repeat={20}, dashed, mark options={solid}] table[y=percentile, x=ave_hop_count] from \fattreeecmptaperthree ;
\end{axis}
\end{tikzpicture}
    \end{subfigure}
    ~ 
    \begin{subfigure}{0.32\textwidth}
        \centering
\begin{tikzpicture}
\begin{axis}[ylabel = \footnotesize Percentile, xlabel = \footnotesize MLU, ymin=0, ymax=100, xmin=0, ylabel near ticks,width=1.15\linewidth, height=4.0cm, ylabel shift = -3pt, xlabel shift = -4pt, xtick pos=left, ytick pos=left, grid, thick]
\addplot+[color=gray,mark=pentagon,mark repeat={20}] table[y=percentile,x=mlu] from \statictoe ;
\addplot+[color=green, mark=square,mark repeat={20}] table[y=percentile,x=mlu] from \robusttoeb ;
\addplot+[color=teal, mark=star, mark repeat={20}] table[y=percentile,x=mlu] from \robusttoe ;
%\addplot+[color=red, mark=pentagon, mark repeat={20}] table[y=percentile,x=mlu] from \avetoe ;
\addplot+[color=black, mark=diamond, mark repeat={20}] table[y=percentile,x=mlu] from \perfect ;
\addplot+[color=orange, mark=o, mark repeat={20}, dashed, mark options={solid}] table[y=percentile, x=mlu] from \fattreeecmptaperone ;
\addplot+[color=violet, mark=triangle, mark repeat={20}, dashed, mark options={solid}] table[y=percentile, x=mlu] from \fattreeecmptaperthree ;
\coordinate (pt) at (axis cs:0.9, 100);
\end{axis}
\node[pin={[pin distance=0.95cm, anchor=north]330:{%
    \begin{tikzpicture}[trim axis left,trim axis right]
    \begin{axis}[
      line join=round,
      cycle list={{color = red}, {color=green}, {color=blue}, {color=black}}, enlargelimits,width=3.5cm, 
      height=2.9cm, ymin=99, ymax=101, xmin=0.65, xmax=0.72, xtick pos=left, ytick pos=left, grid, thick]
\addplot+[color=gray,mark=pentagon,mark repeat={2}] table[y=percentile,x=mlu] from \statictoe ;
\addplot+[color=green, mark=square,mark repeat={2}] table[y=percentile,x=mlu] from \robusttoeb ;
\addplot+[color=teal, mark=star, mark repeat={2}] table[y=percentile,x=mlu] from \robusttoe ;
\addplot+[color=black,mark=diamond, mark repeat={2}] table[y=percentile,x=mlu] from \perfect ;
\addplot+[color=orange, mark=o, mark repeat={4}, dashed, mark options={solid}] table[y=percentile, x=mlu] from \fattreeecmptaperone ;
    \end{axis}
    \end{tikzpicture}%
}}] at (pt) {};
\end{tikzpicture}
\end{subfigure}
    ~
\begin{subfigure}{0.32\textwidth}
\centering
\begin{tikzpicture}
\begin{axis}[ylabel = \footnotesize Percentile, xlabel = \footnotesize Median LU, ymin=0, ymax=100, ylabel near ticks,width=1.15\linewidth, height=4.0cm, ylabel shift = -3pt, xlabel shift = -4pt, legend columns=1, legend style={at={(0.44, 0.93)},anchor=north west,}, xtick pos=left, ytick pos=left, grid, thick, xmin=0]
\addplot+[color=gray, mark=pentagon, mark repeat={20}] table[y=percentile,x=lu50] from \statictoe ;
\addlegendentry{\tiny Unif. Mesh}
\addplot+[color=green, mark=square,mark repeat={20}] table[y=percentile,x=lu50] from \robusttoeb ;
\addlegendentry{\tiny {\metteor}, $\kappa=1$}
\addplot+[color=teal,mark=star, mark repeat={20}] table[y=percentile,x=lu50] from \robusttoe ;
\addlegendentry{\tiny {\metteor}, $\kappa=4$}
%\addplot+[color=red, mark=pentagon, mark repeat={20}] table[y=percentile,x=lu50] from \avetoe ;
%\addlegendentry{\tiny Point ToE}
\addplot+[color=black, mark=diamond, mark repeat={20}] table[y=percentile,x=lu50] from \perfect ;
\addlegendentry{\tiny Ideal ToE}
\addplot+[color=orange, mark=o, mark repeat={20}, dashed, mark options={solid}] table[y=percentile, x=lu50] from \fattreeecmptaperone ;
\addlegendentry{\tiny Fat tree (1:1)}
\addplot+[color=violet,mark=triangle, mark repeat={20}, dashed, mark options={solid}] table[y=percentile, x=lu50] from \fattreeecmptaperthree ;
\addlegendentry{\tiny Fat tree (1:3)}
\end{axis}
\end{tikzpicture}
\end{subfigure}
\vspace{-13pt}
\caption{\small Percentile plot of network performance using 6-months’ worth of TM snapshots from a production DCN. {\metteor } reconfigures topology every 2 weeks. {\metteor}/Mesh/Ideal topologies use MCF routing, and fat tree uses ECMP.} 
\vspace{-4pt}
\label{fig:ste_nte_ite_results}
\end{figure*}

\section{Performance Evaluation}\label{section_performance_evaluation}
We now evaluate {\metteor}’s performance over an extended timescale. The criteria we evaluate are: 1) performance comparison among different network topologies (\S\ref{subsection:ideal_routing}), 2) performance robustness under different reconfiguration frequencies (\S\ref{subsection:robustness_to_traffic_change}), and 3) performance under routing uncertainties (\S\ref{subsection:non_ideal_routing}). We assume a fluid traffic model to help us evaluate performance over extended periods, while still capturing the essential macroscopic properties. 

\noindent\textbf{Dataset: }
Our evaluations are driven by production DCN TM snapshots. Each snapshot captures inter-pod traffic over 5 minutes. The number of snapshots from each of the 12 simulated data centers totals up to 6 months’ worth of data (i.e. slightly over 50k snapshots per data center). We present a subset of our findings here; complete results are in \ref{appendix_additional_simulation_results}.

\noindent \textbf{Metrics: } 
The main metrics we look at are: 
\begin{itemize}[leftmargin=3pt]
\item \emph{Link utilization} (LU) is a good indicator of link congestion, so a lower LU is preferred. However, MLU only reflects congestion at the busiest link, so we also look at the median LU to gauge the average case link congestion. Although LU cannot exceed 1 in practice because packets can be dropped, we allow LU to be greater than 1 in our evaluation, as it could reflect how severe the packet drop is.

%In reality, TCP congestion control will throttle flows demands such that link utilization is at most 1. Since we interpret LU as a measure of ``leftover’’ capacity, it may exceed 1.

\item \emph{Bandwidth tax} is the additional capacity on average needed to route traffic~\cite{mellette2019expanding}. For instance, if 60\% of traffic traverses indirect 2-hop paths and 40\% of traffic traverses direct paths, then the bandwidth tax is $0.6\times(2-1)+0.4\times(1-1)=0.6$. Since we allow a maximum of 2 inter-pod hops for each packet in this paper, bandwidth tax is equal to the fraction of 2-hop traffic. Clearly, a lower bandwidth tax is preferred due to the following reasons. First, indirect paths increases packet latency. Second, lowering bandwidth tax directly lowers the number of concurrent flows going through each switch. As DCN switches typically have shallow buffers, lowering the number of concurrent flows through a switch helps reduce the probability of incast~\cite{chen2009incast}. 
%measures how much additional capacity is needed to route one unit of traffic on average~\cite{mellette2019expanding}. For instance, a flow of $x$ Gbps that traverses $hx$ inter-pod links, consumes $(h-1)x$ additional capacity. The average additional capacity per unit traffic is the bandwidth tax, so a lower value is preferred. 
\end{itemize}

%% in general there is a trade off between MLU and Bandwidth tax. One can route traffic along shortest paths, though MLU may be high. Conversely, one may also reduce MLU by sending traffic across indirect paths, though this increases bandwidth tax.

%%We compare against uniform meshes because they are perfect expanders\footnote{Expanders are approximations of complete graphs~\cite{goldreich2011basic}, or full meshes.} which offer rich path diversity. Uniform meshes approximate many other expander networks (e.g JellyFish~\cite{singla2012jellyfish}, Xpander~\cite{valadarsky2015xpander}, and S2~\cite{yu2016space}), and performs better than fat trees of the same cost~\cite{kassing2017beyond}. 

%%Similar to expander networks~\cite{yen1971finding, singla2012jellyfish, valadarsky2015xpander}, {\metteor } does not work well with a simple ECMP routing. So, we employ multi-path routing, where packets can take both direct paths and indirect paths through an intermediate pod. To compute the relative fractions of traffic sent along different paths, we use an idealized traffic engineering, formulated as a multi-commodity flow (MCF) problem that minimizes MLU based on inter-pod traffic. This formulation allows us to isolate routing and congestion control inefficiencies, so that we may focus solely on topological improvements.

\subsection{Topological Comparison}\label{subsection:ideal_routing}
We first compare {\metteor } with other DCN topologies. 

\noindent \textbf{Topology: }
Our main contender is {\metteor } with the following settings. $\kappa = 4$ representative TMs are extracted from 2 weeks’ worth of historical traffic preceding each reconfiguration epoch, and the logical topology is reconfigured based on these representative TMs every two weeks. 

\noindent \textbf{Routing: }
We use traffic engineering (TE) for routing. As mentioned in \S\ref{section_traffic_engineering_related_works}, TE algorithms typically consists of a path-selection step, and a load balancing step. For path selection, we consider all paths between two pods that are within 2 hops. That is, in addition to a direct hop between the source and destination pods, traffic is allowed to transit at another intermediate pod before reaching its destination. For load balancing, we compute the optimal routing weights that minimizes MLU using an multi-commodity flow (MCF) formulation, as done in~\cite{jyothi2016measuring, hamedazimi2014firefly}. 

\begin{figure*}[ht!]
\begin{subfigure}[c]{0.32\linewidth} 
\centering
\begin{tikzpicture}
\begin{axis}[boxplot/draw direction=y, width=1.1\linewidth, height=3.8cm,
axis y line=left, axis x line*=bottom, ylabel=\footnotesize Bandwidth Tax, xtick={2, 6, 10, 14}, xticklabels={1 Hour, 1 Day, 1 Week, 2 Weeks}, xlabel near ticks, xlabel shift =-4pt, legend style={draw = none, fill = none, at={(0.4,1.0)}, anchor=north,legend columns=3}, legend image code/.code={ \draw[#1, draw=none] (0cm,-0.1cm) rectangle (0.2cm,0.1cm); }, cycle list={{fill=teal},{fill=lightgray},{fill=darkgray}}, ymajorgrids, ymax=2,ymin=1, ytick={1, 1.2, 1.4, 1.6, 1.8, 2}, yticklabels={0, 0.2, 0.4, 0.6, 0.8, 1}, every axis plot/.append style={fill,fill opacity=0.6}]
%% 1 hour
\addplot+[boxplot prepared={draw position=1,lower whisker=1.0380, lower quartile=1.0896, median=1.1553, upper quartile=1.2659, upper whisker=1.5153}]  coordinates {};
\addlegendentry{\tiny {\metteor}\;};
\addplot+[boxplot prepared={draw position=2, lower whisker=1.0206, lower quartile=1.0536, median=1.1057, upper quartile=1.2309, upper whisker=1.6134}]  coordinates {};
\addlegendentry{\tiny Ave\;};
\addplot+[boxplot prepared={draw position=3, lower whisker=1.0306, lower quartile=1.0716, median=1.1444, upper quartile=1.2610, upper whisker=1.5459}]  coordinates {};
\addlegendentry{\tiny Max\;};
%% 1 day
\addplot+[boxplot prepared={draw position=5, lower whisker=1.0517, lower quartile=1.0978, median=1.1610, upper quartile=1.2521, upper whisker=1.4645}]  coordinates {};
\addplot+[boxplot prepared={draw position=6, lower whisker=1.0248, lower quartile=1.0555, median=1.1125, upper quartile=1.2310, upper whisker=1.6322}]  coordinates {};
\addplot+[boxplot prepared={draw position=7, lower whisker=1.0450, lower quartile=1.0864, median=1.1590, upper quartile=1.2498, upper whisker=1.5167}]  coordinates {};
%% 2 day
\addplot+[boxplot prepared={draw position=9, lower whisker=1.0555, lower quartile=1.1032, median=1.1663, upper quartile=1.2624, upper whisker=1.4723}]  coordinates {};
\addplot+[boxplot prepared={draw position=10, lower whisker=1.0282, lower quartile=1.0583, median=1.1174, upper quartile=1.2465, upper whisker=1.7509}]  coordinates {};
\addplot+[boxplot prepared={draw position=11, lower whisker=1.0397, lower quartile=1.0849, median=1.1648, upper quartile=1.2628, upper whisker=1.6911}]  coordinates {};
%% 3 day
\addplot+[boxplot prepared={draw position=13, lower whisker=1.0442, lower quartile=1.0828, median=1.1486, upper quartile=1.2463, upper whisker=1.5271}]  coordinates {};
\addplot+[boxplot prepared={draw position=14, lower whisker=1.0397, lower quartile=1.0829, median=1.1727, upper quartile=1.5370, upper whisker=1.8108}]  coordinates {};
\addplot+[boxplot prepared={draw position=15, lower whisker=1.0285, lower quartile=1.0585, median=1.1262, upper quartile=1.5873, upper whisker=1.8140}]  coordinates {};
\end{axis}
\end{tikzpicture}
\end{subfigure}
~
\begin{subfigure}[c]{0.32\linewidth} 
\centering
\begin{tikzpicture}
\begin{axis}[boxplot/draw direction=y, width=1.1\linewidth, height=3.8cm,
axis y line=left, axis x line*=bottom, ylabel=\footnotesize MLU, xtick={2, 6, 10, 14}, xticklabels={1 Hour, 1 Day, 1 Week, 2 Weeks}, xlabel near ticks, ylabel shift =-4pt, legend style={draw = none, fill = none, at={(0.4,1.0)}, anchor=north,legend columns=3}, legend image code/.code={ \draw[#1, draw=none] (0cm,-0.1cm) rectangle (0.2cm,0.1cm); }, cycle list={{fill=teal},{fill=lightgray},{fill=darkgray}}, ymajorgrids, ymax=2.2, every axis plot/.append style={fill,fill opacity=0.6}]
%% 1 hour
\addplot+[boxplot prepared={draw position=1,lower whisker=0.0284, lower quartile=0.0880, median=0.1580, upper quartile=0.3028, upper whisker=0.6938}]  coordinates {};
\addlegendentry{\tiny {\metteor}\;};
\addplot+[boxplot prepared={draw position=2, lower whisker=0.0284, lower quartile=0.0895, median=0.1594, upper quartile=0.3092, upper whisker=0.7475}]  coordinates {};
\addlegendentry{\tiny Ave\;};
\addplot+[boxplot prepared={draw position=3, lower whisker=0.0284, lower quartile=0.0885, median=0.1586, upper quartile=0.3036, upper whisker=0.6938}]  coordinates {};
\addlegendentry{\tiny Max\;};
%% 1 day
\addplot+[boxplot prepared={draw position=5, lower whisker=0.0284, lower quartile=0.0904, median=0.1609, upper quartile=0.3036, upper whisker=0.6938}]  coordinates {};
\addplot+[boxplot prepared={draw position=6, lower whisker=0.0286, lower quartile=0.0917, median=0.1649, upper quartile=0.3220, upper whisker=1.6772}]  coordinates {};
\addplot+[boxplot prepared={draw position=7, lower whisker=0.0287, lower quartile=0.0904, median=0.1611, upper quartile=0.3040, upper whisker=0.6938}]  coordinates {};
%% 2 day
\addplot+[boxplot prepared={draw position=9, lower whisker=0.0284, lower quartile=0.0904, median=0.1609, upper quartile=0.3036, upper whisker=0.6938}]  coordinates {};
\addplot+[boxplot prepared={draw position=10, lower whisker=0.0308, lower quartile=0.0957, median=0.1679, upper quartile=0.3301, upper whisker=1.6847}]  coordinates {};
\addplot+[boxplot prepared={draw position=11, lower whisker=0.0292, lower quartile=0.0904, median=0.1613, upper quartile=0.3087, upper whisker=1.7900}]  coordinates {};
%% 3 day
\addplot+[boxplot prepared={draw position=13, lower whisker=0.0284, lower quartile=0.0912, median=0.1616, upper quartile=0.3037, upper whisker=0.6938}]  coordinates {};
\addplot+[boxplot prepared={draw position=14, lower whisker=0.0295, lower quartile=0.0904, median=0.1691, upper quartile=0.8631, upper whisker=2.5273}]  coordinates {};
\addplot+[boxplot prepared={draw position=15, lower whisker=0.0308, lower quartile=0.1031, median=0.1827, upper quartile=0.7847, upper whisker=2.1076}]  coordinates {};
\end{axis}
\end{tikzpicture}
\end{subfigure}
~
\begin{subfigure}[c]{0.32\linewidth} 
\centering
\begin{tikzpicture}
\begin{axis}[boxplot/draw direction=y, width=1.1\linewidth, height=3.8cm,
axis y line=left, axis x line*=bottom, ylabel=\footnotesize Median LU, xtick={2, 6, 10, 14}, xticklabels={1 Hour, 1 Day, 1 Week, 2 Weeks}, xlabel near ticks, ylabel shift =-4pt, legend style={draw = none, fill = none, at={(0.4,1)}, anchor=north,legend columns=3}, legend image code/.code={ \draw[#1, draw=none] (0cm,-0.1cm) rectangle (0.2cm,0.1cm); }, cycle list={{fill=teal},{fill=lightgray},{fill=darkgray}}, ymajorgrids, ymax=0.7, every axis plot/.append style={fill,fill opacity=0.6}]
%% 1 hour
\addplot+[boxplot prepared={draw position=1,lower whisker=0.0000, lower quartile=0.0000, median=0.1190, upper quartile=0.2250, upper whisker=0.5838}]  coordinates {};
\addlegendentry{\tiny {\metteor}\;};
\addplot+[boxplot prepared={draw position=2,lower whisker=0.0000, lower quartile=0.0020, median=0.1045, upper quartile=0.2157, upper whisker=0.5279}]  coordinates {};
\addlegendentry{\tiny Ave\;};
\addplot+[boxplot prepared={draw position=3,lower whisker=0.0000, lower quartile=0.0070, median=0.1072, upper quartile=0.2311, upper whisker=0.5574}]  coordinates {};
\addlegendentry{\tiny Max\;};
%% 1 day
\addplot+[boxplot prepared={draw position=5,lower whisker=0.0000, lower quartile=0.0225, median=0.1226, upper quartile=0.2263, upper whisker=0.6135}]  coordinates {};
\addplot+[boxplot prepared={draw position=6,lower whisker=0.0000, lower quartile=0.0044, median=0.1071, upper quartile=0.2178, upper whisker=0.6254}]  coordinates {};
\addplot+[boxplot prepared={draw position=7,lower whisker=0.0000, lower quartile=0.0318, median=0.1130, upper quartile=0.2360, upper whisker=0.6433}]  coordinates {};
%% 2 day
\addplot+[boxplot prepared={draw position=9,lower whisker=0.0000, lower quartile=0.0213, median=0.1224, upper quartile=0.2276, upper whisker=0.5938}]  coordinates {};
\addplot+[boxplot prepared={draw position=10,lower whisker=0.0000, lower quartile=0.0000, median=0.1069, upper quartile=0.2200, upper whisker=0.3960}]  coordinates {};
\addplot+[boxplot prepared={draw position=11,lower whisker=0.0000, lower quartile=0.0323, median=0.1113, upper quartile=0.2381, upper whisker=0.5885}]  coordinates {};
%% 3 day
\addplot+[boxplot prepared={draw position=13,lower whisker=0.0000, lower quartile=0.0108, median=0.1181, upper quartile=0.2203, upper whisker=0.5967}]  coordinates {};
\addplot+[boxplot prepared={draw position=14,lower whisker=0.0000, lower quartile=0.0205, median=0.1025, upper quartile=0.2314, upper whisker=0.6039}]  coordinates {};
\addplot+[boxplot prepared={draw position=15,lower whisker=0.0000, lower quartile=0.0000, median=0.0977, upper quartile=0.2181, upper whisker=0.4508}]  coordinates {};
\end{axis}
\end{tikzpicture}
\end{subfigure}
\vspace{-13pt}
\caption{\small Performance sensitivity to reconfiguration frequency, comparing {\metteor } to single-traffic approaches. Boxes represent the 95-th and 5-th percentile values; the whiskers represent the max and min values.} 
\label{fig:reconfig_frequency_sensitivity}
\vspace{-11pt}
\end{figure*}

%Considering 2 pod-level hop paths greatly increases the network capacity beyond that offered by direct paths.
%The candidate paths is invariant for all mesh-like logical topologies, which includes uniform mesh and any topology-engineered topologies. 

\noindent \textbf{Versus fat tree: }
We first compare against fat trees, which represent the industry standard in DCN topologies. Due to cost reasons, most network operators tend to oversubscribe to the aggregation or core layers~\cite{greenberg2009vl2, benson2010network, farrington2013facebook}. Our evaluation includes a 1:3 oversubscribed fat tree, which has comparable cost to {\metteor}, and a non-oversubscribed 1:1 fat tree. All fat tree topologies use ECMP routing.

Fat tree networks perform poorly in terms of bandwidth tax when compared against {\metteor}’s topologies. A fat tree network has an additional spine layer of packet switches. Therefore, inter-pod traffic always consumes bandwidth of 2 hops (one between the source pod and the spine, and the other between the spine and the destination pod). Under a {\metteor } topology, on average over 80\% of traffic traverses single-hop paths (OCSs are transparent to DCN traffic in between adjacent OCS reconfigurations). In terms of MLU, a 1:1 fat tree, with its full bisection bandwidth, outperforms all other topologies, barring ideal ToE. When compared against a 1:3 fat tree of comparable cost, {\metteor} reduces tail MLU by about 3$\times$.

\noindent \textbf{Versus uniform mesh: }
We also compare with a uniform mesh that directly connects pods without an OCS/spine layer. Uniform meshes are considered a class of expander networks, offering large bisection bandwidth at a lower cost than fat trees. Since {\metteor}’s logical topology is also mesh-like with non-uniform interpod connectivity, a uniform mesh represents a natural baseline for comparison. 

%Given a heterogeneous DCN with pods of different radices, a uniform mesh is designed by connecting pods uniformly, all pods eventually run out of uplinks. 
% static uniform mesh, provides high capacity equal path, making it very robust to traffic growth, but is not optimized for direct path for traffic.
% does not prioritize traffic locality, and thus is unable to support traffic traversing direct hops.

{\metteor} consistently lowers bandwidth tax by $\approx$0.35 on average over a uniform mesh, due to its strategic link placement between hotspots.
%, allowing about 30\% more traffic to traverse direct paths. 
The lower bandwidth tax translates into an average median-LU improvement of 50\%, due to a reduction in overall traffic load. In terms of MLU, {\metteor } performs comparably to a uniform mesh. This comparison showcases the benefits of ToE, as we can reduce bandwidth tax without sacrificing MLU.

% This comparison showcases the benefits of performing topology engineering over a uniform mesh topology, as more traffic can traverse direct hops with lower bandwidth tax without sacrificing MLU performance.

\noindent \textbf{Versus ideal ToE: }
Ideal ToE computes an offline optimal topology that minimizes the MLU and bandwidth tax for each traffic matrix, and thus it represents the performance upper bound of topology engineering. 

%Realizing ideal ToE can be very difficult, as it would require: 1) frequent and instantaneous topology reconfiguration, and 2) a routing controller continuously with oracle prediction of the immediate next traffic demands.

When paired with ideal load balancing, {\metteor}’s MLU performance is very close to that of ideal ToE. As our evaluation traffic matrices are highly skewed, with a significant amount of traffic comes from a small subset of pods, MLU becomes limited by a pod’s total egress/ingress capacity. Therefore, even ideal ToE cannot do much to improve MLU. Still, ideal ToE lowers bandwidth tax over {\metteor } by 0.08 on average, and 0.01 at the tail. 

\noindent \textbf{Versus single-traffic ToE: }
Finally, we compare {\metteor } against single-traffic ToE to showcase the benefits of using multiple TMs. As an example of single-traffic ToE, {\metteor } ($\kappa = 1$) has a longer tail than {\metteor } ($\kappa = 4$) for all the metrics shown in Fig. \ref{fig:ste_nte_ite_results}. Indeed, it is generally very difficult to predict future TMs with a single TM due to traffic uncertainties. One may also propose using an element-wise average, or maximum traffic estimator in single-traffic ToE. We postpone the detailed comparison in \S\ref{subsection:robustness_to_traffic_change}.

%As expected, the tail MLU of {\metteor } ($\kappa = 4$) is considerably better than that of {\metteor } ($\kappa = 1$). The superior tail MLU performance of {\metteor } is by virtue of its optimization objective: to improve robustness under traffic uncertainties. Having a strong worst-case performance is critical for achieving high availability in data centers. 

%% though {\metteor} ($\kappa = 1$) outperforms both element-wise average and max estimators.

%\noindent \textbf{Summary: }
%{\metteor } outperforms fat tree networks in bandwidth tax, while the . Under MCF load-balancing, {\metteor } is able to send more traffic along minimal paths without incurring additional MLU penalty.

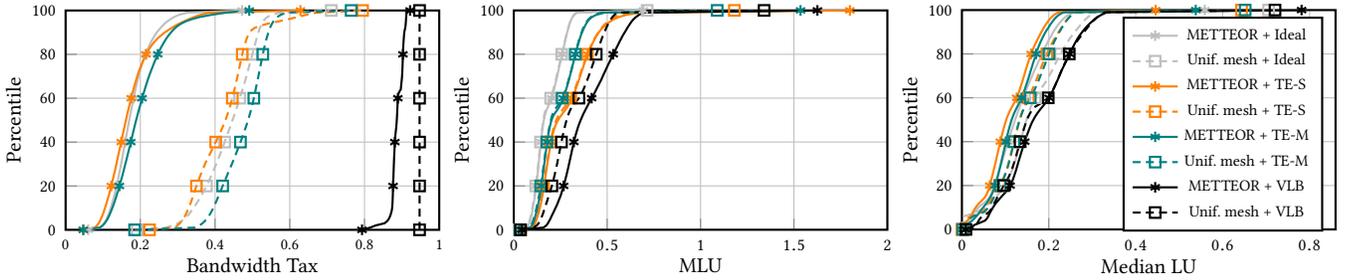
\begin{figure*}[th!]
\pgfplotstableread{plot_data/invcdf_ju1_dls08_robusttoe_r4032t2016c4_perfect} \robusttoeperfectte
\pgfplotstableread{plot_data/invcdf_ju1_dls08_static_perfect} \statictoeperfectte
%\pgfplotstableread{plot_data/invcdf_ju1_dls08_robusttoe_r4032t2016c4_robustweightedte_r288t288c4} \robusttoerobustte
%\pgfplotstableread{plot_data/invcdf_ju1_dls08_static_robustweightedte_r288t288c4} \statictoerobustte
\pgfplotstableread{plot_data/invcdf_ju1_dls08_robusttoe_r4032t2016c4_robustweightedte_r1t12c4} \robusttoerobustte
\pgfplotstableread{plot_data/invcdf_ju1_dls08_static_robustweightedte_r1t12c4} \statictoerobustte
%\pgfplotstableread{plot_data/invcdf_ju1_dls08_robusttoe_r4032t2016c4_maxte_r288t288} \maxtoemaxte
%\pgfplotstableread{plot_data/invcdf_ju1_dls08_static_avete_r288t288} \staticavete
\pgfplotstableread{plot_data/invcdf_ju1_dls08_robusttoe_r4032t2016c4_robustweightedte_r1t12c1} \maxtoemaxte
\pgfplotstableread{plot_data/invcdf_ju1_dls08_static_robustweightedte_r1t12c1} \staticavete
\pgfplotstableread{plot_data/invcdf_ju1_dls08_robusttoe_r4032t2016c4_wcmp} \robusttoewcmp
\pgfplotstableread{plot_data/invcdf_ju1_dls08_static_wcmp} \statictoewcmp
\begin{subfigure}{0.32\textwidth}
        \centering
\begin{tikzpicture}
\begin{axis}[ylabel = \footnotesize Percentile, xlabel = \footnotesize Bandwidth Tax, ymin=0, xmin=1, xmax=2, ymax=100, ylabel near ticks, width=1.15\linewidth, height=4.5cm, ylabel shift = -3pt, xlabel shift = -4pt, xtick pos=left, ytick pos=left, grid, thick, mark options={solid},xtick={1, 1.2, 1.4, 1.6, 1.8, 2}, xticklabels={0, 0.2, 0.4, 0.6, 0.8, 1}]
\addplot[color=lightgray, mark=asterisk,mark repeat={20}] table[y=percentile,x=ave_hop_count] from \robusttoeperfectte ;
\addplot[color=lightgray, mark=square, mark repeat={20}, densely dashed] table[y=percentile,x=ave_hop_count] from \statictoeperfectte ;
\addplot[color=orange, mark=asterisk, mark repeat={20}] table[y=percentile,x=ave_hop_count] from \maxtoemaxte ;
\addplot[color=orange, mark=square, mark repeat={20}, densely dashed] table[y=percentile,x=ave_hop_count] from \staticavete ;
\addplot[color=teal, mark=asterisk, mark repeat={20}] table[y=percentile,x=ave_hop_count] from \robusttoerobustte ;
\addplot[color=teal, mark=square, mark repeat={20}, densely dashed] table[y=percentile,x=ave_hop_count] from \statictoerobustte ;
\addplot[color=black, mark=asterisk, mark repeat={20}] table[y=percentile,x=ave_hop_count] from \robusttoewcmp ;
\addplot[color=black, mark=square, mark repeat={20}, densely dashed] table[y=percentile,x=ave_hop_count] from \statictoewcmp ;
\end{axis}
\end{tikzpicture}
    \end{subfigure}
    ~ 
    \begin{subfigure}{0.32\textwidth}
        \centering
\begin{tikzpicture}
\begin{axis}[ylabel = \footnotesize Percentile, xlabel = \footnotesize MLU, ymin=0, ymax=100, xmax=2, ylabel near ticks,width=1.15\linewidth, height=4.5cm, ylabel shift = -3pt, xlabel shift = -4pt, xtick pos=left, ytick pos=left, grid, thick, xmin=0, mark options={solid}, ]
\addplot[color=lightgray, mark=asterisk,mark repeat={20}] table[y=percentile,x=mlu] from \robusttoeperfectte ;
\addplot[color=lightgray, mark=square, mark repeat={20}, densely dashed] table[y=percentile,x=mlu] from \statictoeperfectte ;
\addplot[color=orange, mark=asterisk, mark repeat={20}] table[y=percentile,x=mlu] from \maxtoemaxte ;
\addplot[color=orange, mark=square, mark repeat={20}, densely dashed] table[y=percentile,x=mlu] from \staticavete ;
\addplot[color=teal, mark=asterisk, mark repeat={20}] table[y=percentile,x=mlu] from \robusttoerobustte ;
\addplot[color=teal, mark=square, mark repeat={20}, densely dashed] table[y=percentile,x=mlu] from \statictoerobustte ;
\addplot[color=black, mark=asterisk, mark repeat={20}] table[y=percentile,x=mlu] from \robusttoewcmp ;
\addplot[color=black, mark=square, mark repeat={20}, densely dashed] table[y=percentile,x=mlu] from \statictoewcmp ;
\end{axis}
\end{tikzpicture}
\end{subfigure}
    ~
\begin{subfigure}{0.32\textwidth}
\centering
\begin{tikzpicture}
\begin{axis}[ylabel = \footnotesize Percentile, xlabel = \footnotesize Median LU, ymin=0, ymax=100, ylabel near ticks,width=1.15\linewidth, height=4.5cm, ylabel shift = -3pt, xlabel shift = -4pt, legend columns=1, xtick pos=left, ytick pos=left, grid, thick, xmin=0, mark options={solid}, legend style={at={(0.43,0.97)}, anchor=north west}]
\addplot[color=lightgray, mark=asterisk,mark repeat={20}] table[y=percentile,x=lu50] from \robusttoeperfectte ;
\addlegendentry{\tiny METTEOR + Ideal}
\addplot[color=lightgray, mark=square, mark repeat={20}, densely dashed] table[y=percentile,x=lu50] from \statictoeperfectte ;
\addlegendentry{\tiny Unif. mesh + Ideal}
\addplot[color=orange, mark=asterisk, mark repeat={20}] table[y=percentile,x=lu50] from \maxtoemaxte ;
\addlegendentry{\tiny METTEOR + TE-S}
\addplot[color=orange, mark=square, mark repeat={20}, densely dashed] table[y=percentile,x=lu50] from \staticavete ;
\addlegendentry{\tiny Unif. mesh + TE-S}
\addplot[color=teal, mark=asterisk, mark repeat={20}] table[y=percentile,x=lu50] from \robusttoerobustte ;
\addlegendentry{\tiny METTEOR + TE-M}
\addplot[color=teal, mark=square, mark repeat={20}, densely dashed] table[y=percentile,x=lu50] from \statictoerobustte ;
\addlegendentry{\tiny Unif. mesh + TE-M}
\addplot[color=black, mark=asterisk, mark repeat={20}] table[y=percentile,x=lu50] from \robusttoewcmp ;
\addlegendentry{\tiny METTEOR + VLB}
\addplot[color=black, mark=square, mark repeat={20}, densely dashed] table[y=percentile, x=lu50] from \statictoewcmp ;
\addlegendentry{\tiny Unif. mesh + VLB}
\end{axis}
\end{tikzpicture}
\end{subfigure}
\vspace{-13pt}
\caption{\small Percentile plot of network performance with different TE solutions. Solid lines denote {\metteor } topology performance, while dashed lines represent uniform mesh.} 
\vspace{-12pt}
\label{fig:non_ideal_routing}
\end{figure*}

\subsection{Impact of Reconfiguration Frequency}\label{subsection:robustness_to_traffic_change}
Clearly, the frequency of topology-reconfiguration is a key factor in not just performance, but also the implementation and management complexity. The evaluations on {\metteor } in \S\ref{subsection:ideal_routing} are based on biweekly reconfigurations. Here, we study the interplay between topology reconfiguration frequency and performance, by comparing {\metteor}’s to other single-traffic based methods used in prior ToE works~\cite{ghobadi2016projector, farrington2011helios, hamedazimi2014firefly, cthrough_wang_2011}. As in \S\ref{subsection:compare_prediction}, we compare {\metteor } against two other single-traffic ToE approaches: $Ave$ and $Max$. $Ave$ derives its traffic estimator by taking the average of its historical traffic snapshots, while $Max$ derives its traffic estimator by taking the element-wise historical max. 

Results in Fig. \ref{fig:reconfig_frequency_sensitivity} show that in terms of tail MLU and bandwidth tax, {\metteor } generally outperforms other single traffic-based ToE approaches given the same reconfiguration frequency. As {\metteor } optimizes topology based on multiple estimated demands, it is more effective in covering future demands that resemble at least one of the predicted traffic used for topology-optimization. Furthermore, considering multiple traffic matrices when optimizing topology makes it less likely to overfit the logical topology to any single traffic demand, thereby reducing the performance penalty due to poor predictions. Note that {\metteor } exhibits very little change in tail performance even at lower reconfiguration frequencies, further highlighting the topology’s robustness to traffic changes over time. This feature allows DCN operators gain much of the benefits of topology engineering even with sporadic reconfigurations. 

\subsection{A Discussion On Suboptimal Routing}\label{subsection:non_ideal_routing}
%%%% (shizhen’s version): Thus far, our evaluations have been based on ideal load balancing that minimizes MLU for \emph{every} traffic matrix. It might be possible to approximate ideal routing~\cite{benson2011microte}, and many ToE literatures~\cite{hamedazimi2014firefly, farrington2011helios, cthrough_wang_2011} used the same ideal routing for evaluation, however, we cannot ignore the implementation difficulty of approximating ideal routing in practical data centers. 

Our evaluations so far have been based on ideal load balancing that can respond instantaneously to current traffic demands with a set of optimal routing weights that minimizes MLU. This allows us to analyze the merits of different topologies, irrespective of routing-induced suboptimality\footnote{Evaluations based on ideal MCF load balancing that minimizes MLU have been similarly done in~\cite{hamedazimi2014firefly, farrington2011helios, cthrough_wang_2011}.}. While close-to-optimal TEs have been demonstrated in the past (e.g. MicroTE~\cite{benson2011microte}), they are all adaptive algorithms that operate at very fine timescales (e.g. sub-seconds), which may inflict huge management overheads to the SDN controller~\cite{kumar2018semi}.

Since {\metteor } operates on coarse timescales, this led us to question whether it can be paired with a coarse-grained TE to work well. As routing can no longer react to all TMs optimally, this brings routing-induced suboptimality into consideration. For this evaluation, we use 3 different  load-balancing schemes that represent a large class of TE algorithms: 1) single-traffic TE, 2) Valiant load balancing (VLB), and 3) multi-traffic TE, and compare their performance when applied on uniform mesh and {\metteor} ($\kappa = 4$) topologies.

%% as routing weights is no longer optimal for each traffic demand.

%Motivated by {\metteor}, we are curious that if routing can be performed at lower frequency as well. This, however, brings traffic uncertainty into the performance equation, because traffic could change in between adjacent routing updates. In the following, we test 3 different types of load-balancing schemes: 1) single-traffic traffic engineering, 2) Valiant load balancing (VLB), and 3) multi-traffic traffic engineering, and compare their performance when applied on uniform mesh and {\metteor } ($\kappa = 4$) topologies.

\noindent \textbf{Single-Traffic TE (TE-S): }
TE-S computes routing weights using an MCF that minimizes MLU for a single predicted TM. The TE-S results in Fig. \ref{fig:non_ideal_routing} updates routing weights every 5 minutes, based on the average traffic matrix over the past hour. We can see that TE-S performs well on average, but clearly suffers at the tail for both MLU and bandwidth tax. 

\noindent \textbf{Valiant load balancing (VLB): }
As a traffic-agnostic load-balancing algorithm, the canonical VLB derives its robustness by splitting traffic among many indirect paths at random. Our version of VLB splits traffic among direct and indirect paths, weighted by path capacity. However, because VLB sends a large portion of traffic via indirect paths, it exhibits very poor bandwidth tax performance in Fig. \ref{fig:non_ideal_routing}. Clearly, VLB’s indiscriminate traffic-splitting policy prevents it from favoring the tax-free direct paths, which makes VLB a poor choice of load-balancing for {\metteor}.

%%In fact, VLB shows no propensity to favoring the direct, tax-free paths, making it a poor choice of load-balancing to pair with {\metteor}.

%%VLB is very robust under traffic uncertainties, as it spread traffic randomly among many different paths. Here, we use a version of VLB that spreads traffic among direct and indirect paths weighted by path capacity. As shown in Fig. \ref{fig:non_ideal_routing}, VLB’s tail MLU is exactly 2 times higher than that of the ideal routing. Since VLB forces a large portion of traffic traverses indirect paths, it has poor bandwidth tax performances, which completely negates the benefits of ToE.

\noindent \textbf{Multi-Traffic TE (TE-M): }
TE-M is essentially a traffic engineering analog of {\metteor}. On a high-level, TE-M picks multiple representative traffic matrices based on historical measurements, and computes a set of routing weights that minimizes MLU for all of the predicted traffic matrices. The full formulation is in \ref{appendix:multi_traffic_robust_te}. Fig. \ref{fig:non_ideal_routing} shows the performance of TE-M which updates routing weights every 5 minutes, based on 4 representative TMs chosen from the past hour. Clearly, TE-M retains an impressive bandwidth tax when used with {\metteor}, while achieving better tail MLU than TE-S and VLB. This indicates that using multiple TMs can improve routing robustness under uncertainty.

\noindent \textbf{Uniform Mesh vs. {\metteor}: }
Recall from \S\ref{subsection:ideal_routing} that {\metteor } improves bandwidth tax over static uniform mesh, without sacrificing tail MLU. Unfortunately, this is no longer true when TE is suboptimal. Based on Fig. \ref{fig:non_ideal_routing}, we can see that while {\metteor } still outperforms a uniform mesh in bandwidth tax, it is more prone to a long MLU tail~\footnote{Note that the MLUs up to 99.9 percentile values are roughly the same as those of the uniform mesh.}. In fact, there is a tradeoff between average bandwidth tax and tail MLU. A ``topology + routing’’ solution with better average bandwidth tax, tends to have a longer MLU tail.
%% Uniform mesh, due to its symmetric connectivity between all pod pairs, provides higher total path capacity (sum of path capacity across both direct and indirect paths). Meanwhile, METTEOR 

After analyzing the TM snapshots that caused long MLU tails, we found that the leading cause to be the sudden traffic bursts between pairs of pods thought to be ``cold’’, rather than an increase in traffic at the hotspots. To improve {\metteor}’s tail MLU, we need to over-provision some capacity to the ``cold’’ pod pairs. One interesting future work is to investigate the possibility of improving the above tradeoff with proper capacity-overprovisioning.

%To reduce the impact of unexpected traffic bursts on MLU, {\metteor } ensures that the logical topology always have a full mesh base, with at least one link connecting every pod pair. Unfortunately, even with the enhanced {\metteor}, the tail MLU of {\metteor} is still 50\% higher than that of the uniform mesh.

\noindent \textbf{Impact of Routing Update Frequency: }
We found that, without ideal routing, {\metteor}’s MLU exhibit long-tailed behavior, even if we use TE-M that updates routing weights every 5 minutes. Readers may wonder if the tail MLU can be improved with more frequent routing updates. However, we do not have data finer than 5 minutes. Instead, we evaluate TE-M under 4 different frequencies, ranging from once every 5 minutes to once every 2 days, and study the trend.
 
From Fig. \ref{fig:routing_uncertainty}, we can see that bandwidth tax is virtually unaffected by the routing frequency, with {\metteor} still consistently outperforming a uniform mesh. Tail MLU does improve as routing-update frequency increases. We also plot Fig. \ref{fig:overflow_probability} showing the percentage of TMs that can be supported by the underlying topology. Clearly, {\metteor} works better with more frequent routing updates. 

%We found that a uniform mesh generally performs better at the tail, and its tail performance is less sensitive to the frequency of routing weight recomputation. Due to its symmetric connectivity between all pods, a uniform mesh can provide many alternative paths of uniform capacity. making it resilient to traffic bursts, but at the cost of a higher bandwidth tax. 

%In short, our findings suggest that bandwidth tax can be improved even with relatively infrequent topology reconfiguration. routing weights should be recomputed more frequently to better handle short-term traffic changes. Since topology changes incur more latency, it is much harder for the topology to keep up with traffic changes, updating the routing weights takes considerably less time, and thus should be the main tool for handling bursty traffic.

\noindent \textbf{Summary: }
Under suboptimal TE, {\metteor } still outperforms a uniform mesh in terms of bandwidth tax, though it is more susceptible to long-tailed MLUs. Updating routing weights more frequently helps the network respond better to traffic bursts, and improve tail MLU. Adaptive TE may be necessary to realize the full potential of {\metteor}.

\noindent \textbf{Note: }
Due to restrictions on our access to the actual TM traces, the evaluations in \S\ref{subsection:non_ideal_routing} were done using approximate reconstructions of the TM traces used in \S\ref{subsection:ideal_routing} and \S\ref{subsection:robustness_to_traffic_change}. We reconstructed each TM from the first 5 principle-component projections from our PCA study, dropping higher-order terms.  This by itself yields only a normalized approximation to the actual traffic matrices, so we inferred the correct scale factor based on MLU values from the ideal ToE results. This is a lossy reconstruction.

%% robust multi-traffic works relatively well with toe to keep bandwidth tax performance good; the MLU performs well in general, though the tail performance suffers due to uncertainty. 

% we find that multi-tm traffic engineering, under the constraint that routing weights cannot be updated frequently, is paired best with MLU. It achieves a bandwidth tax close to optimal routing. Penalty is incurred due to uncertainty in tail MLU, though the average case MLU is still impressive. In terms of MLU, uniform mesh possesses better MLU performance at the tail, and is only generally less sensitive to TE frequency due to its high, uniform capacity between all pod pairs. This is the price for we need to pay for ToE under non-ideal routing. Finding a TE methodology that can achieve good bandwidth tax while not sacrificing tail MLU is future work.

\begin{figure}[t!]
\hspace{-15pt}
\begin{subfigure}{0.45\columnwidth}
\begin{tikzpicture}
\begin{axis}[boxplot/draw direction=y, width=1.38\linewidth, height=3.7cm,
axis y line=left, axis x line*=bottom, xtick={1.5, 4.5, 7.5, 10.5, 13.5}, xticklabels={Ideal, 5 mins, 1 Hour, 1 Day, 2 Days}, xlabel near ticks, xlabel shift =-4pt, legend style={draw = none, fill = none, at={(0.4,1.05)}, anchor=north,legend columns=2}, legend image code/.code={ \draw[#1, draw=none] (0cm,-0.1cm) rectangle (0.2cm,0.05cm); }, cycle list={{fill=teal},{fill=gray}}, ymajorgrids, ymax=1.8,ymin=1, ytick={1, 1.2, 1.4, 1.6, 1.8}, yticklabels={0, 0.2, 0.4, 0.6, 0.8},  every axis plot/.append style={fill,fill opacity=0.6}]
% ideal MCF
\addplot+[boxplot prepared={draw position=1, lower whisker=1.0604, lower quartile=1.1098, median=1.1702, upper quartile=1.2721, upper whisker=1.4723}] coordinates {};
\addplot+[boxplot prepared={draw position=2,lower whisker=1.1920, lower quartile=1.3177, median=1.4466, upper quartile=1.5286, upper whisker=1.6809}] coordinates {};
% 5 mins
\addplot+[boxplot prepared={draw position=4, lower whisker=1.0472, lower quartile=1.1123, median=1.1888, upper quartile=1.3210, upper whisker=1.48153357516}] coordinates {};
\addplot+[boxplot prepared={draw position=5,lower whisker=1.1846, lower quartile=1.3776, median=1.4884, upper quartile=1.5561, upper whisker=1.69417}] coordinates {};
%% 1 hour
\addplot+[boxplot prepared={draw position=7,lower whisker=1.06194760468, lower quartile= 1.10860539116, median=1.18596335939, upper quartile= 1.31527091751, upper whisker= 1.51577577715}]  coordinates {};
\addlegendentry{\tiny {\metteor}\;};
\addplot+[boxplot prepared={draw position=8,lower whisker= 1.19279,, lower quartile= 1.37261948215, median= 1.48459906174, upper quartile=1.55468097757, upper whisker= 1.70021000209, }]  coordinates {};
\addlegendentry{\tiny Unif. Mesh\;};
%% 1 day
\addplot+[boxplot prepared={draw position=10,lower whisker= 1.05317568698, lower quartile=1.12391110086, median=1.19002844447, upper quartile=1.32184819741, upper whisker=1.4804621032,  }] coordinates {};
\addplot+[boxplot prepared={draw position=11,lower whisker=1.24224650011, lower quartile= 1.3951920791, median=1.49864918086, upper quartile=1.5594040132, upper whisker= 1.6825020371, }] coordinates {};
%% 2 day
\addplot+[boxplot prepared={draw position=13,lower whisker= 1.04792650576, lower quartile=1.11992807497, median=1.18949223203, upper quartile= 1.31354315713, upper whisker= 1.50164582805,  }] coordinates {};
\addplot+[boxplot prepared={draw position=14,lower whisker=1.18752882673, lower quartile= 1.39001818482, median=1.49724846507, upper quartile= 1.5575952788, upper whisker= 1.71602601755, }] coordinates {};
\end{axis}
\end{tikzpicture}
\vspace{-16pt}
\caption{\small Bandwidth Tax}
\end{subfigure}
~
\begin{subfigure}{0.45\columnwidth}
\centering
\begin{tikzpicture}
\begin{axis}[boxplot/draw direction=y, width=1.38\linewidth, height=3.7cm,
axis y line=left, axis x line*=bottom, xtick={1.5, 4.5, 7.5, 10.5, 13.5}, xticklabels={Ideal, 5 mins, 1 Hour, 1 Day, 2 Days}, xlabel near ticks, ylabel shift =-4pt, legend style={draw = none, fill = none, at={(0.4,1.05)}, anchor=north,legend columns=2}, legend image code/.code={ \draw[#1, draw=none] (0cm,-0.1cm) rectangle (0.2cm,0.05cm); }, cycle list={{fill=teal},{fill=gray}}, ymajorgrids, ymax=2.2, every axis plot/.append style={fill,fill opacity=0.6}]
%\node[draw=none, align=left] at (axis description cs: 0.70,0.96) {\scriptsize training periods:};
%% ideal MCF
\addplot+[boxplot prepared={draw position=1,lower whisker=0.0284, lower quartile=0.0907, median=0.1612, upper quartile=0.3036, upper whisker=0.6938}]  coordinates {};
\addplot+[boxplot prepared={draw position=2,lower whisker=0.0284, lower quartile=0.0872, median=0.1581, upper quartile=0.3028, upper whisker=0.7057}]  coordinates {};
%% 5 mins
\addplot+[boxplot prepared={draw position=4,lower whisker=0.0332, lower quartile=0.1113, median=0.2023, upper quartile=0.3956, upper whisker=1.5351}]  coordinates {};
\addplot+[boxplot prepared={draw position=5,lower whisker=0.0333, lower quartile=0.1065, median=0.1984, upper quartile=0.3914, upper whisker=1.0391}]  coordinates {};
%% 1 hour
\addplot+[boxplot prepared={draw position=7,lower whisker= 0.0340561497139, lower quartile= 0.117976866914, median=0.218581776332, upper quartile= 0.50651474172, upper whisker= 1.52936609804}]  coordinates {};
\addlegendentry{\tiny {\metteor}\;};
\addplot+[boxplot prepared={draw position=8,lower whisker= 0.0351963201276, lower quartile= 0.113696181, median= 0.215880565711, upper quartile= 0.505023734053, upper whisker= 1.4429124817, }]  coordinates {};
\addlegendentry{\tiny Unif. Mesh\;};
%% 1 day
\addplot+[boxplot prepared={draw position=10,lower whisker= 0.0673869446814, lower quartile=0.12563914134, median=0.238629545473, upper quartile= 0.562122644577, upper whisker= 1.57891878931,  }] coordinates {};
\addplot+[boxplot prepared={draw position=11,lower whisker=0.0610074166247, lower quartile= 0.121621411521, median=0.224344076842, upper quartile= 0.551076674226, upper whisker= 1.39220556783, }] coordinates {};
%% 2 day
\addplot+[boxplot prepared={draw position=13,lower whisker= 0.0776470588235, lower quartile=0.129235733982, median=0.246385077466, upper quartile= 0.599687654017, upper whisker= 1.8518748875,  }] coordinates {};
\addplot+[boxplot prepared={draw position=14,lower whisker=0.0644918827078, lower quartile= 0.126022994906, median=0.234687299644, upper quartile= 0.565332166158, upper whisker= 1.4435546875, }] coordinates {};
\end{axis}
\end{tikzpicture}
\vspace{-16pt}
\caption{\small MLU}
\end{subfigure}
\vspace{-14pt}
\caption{\small TE-M performance on {\metteor } ($\kappa = 4$) and uniform mesh. Whiskers denote 100-th and 0-th percentile; box represents 95-th and 5-th percentile.} 
\label{fig:routing_uncertainty}
\end{figure}

\begin{figure}[t!]
\begin{tikzpicture}
\begin{axis}[ybar, width=1\columnwidth, height=3.7cm, enlarge x limits=0.20, bar width=6pt, legend style={at={(-0.18,1.29)}, anchor=north west,legend columns=-1}, ylabel={\footnotesize $Pr$(MLU $\leq$ 1)}, ylabel shift =-2pt,   ytick={0.985, 0.99, 0.995, 1}, symbolic x coords={5 mins,1 Hour,1 Day,2 Days}, xtick=data, nodes near coords, every node near coord/.append style={font=\tiny, rotate=90, anchor=west}, nodes near coords align={vertical},  nodes near coords style={/pgf/number format/.cd,precision=4},  ymin=0.985, ymax=1.007, yticklabel style={/pgf/number format/fixed, /pgf/number format/precision=4}, xtick pos=left, ytick pos=left, ymajorgrids, xtick align=inside]
\addplot[fill=teal] coordinates {(5 mins,0.999723778707) (1 Hour,0.998900512982) (1 Day,0.998046720859) (2 Days, 0.996073711625)};
\addlegendentry{\tiny METTEOR + TE-M}
\addplot[fill=teal, postaction={pattern=north east lines},]  coordinates {(5 mins,0.99855970325) (1 Hour,0.9956593796862) (1 Day,0.989207639492) (2 Days, 0.98826059506)};
\addlegendentry{\tiny METTEOR + TE-S}
\addplot[fill=lightgray] coordinates {(5 mins,0.999942040692) (1 Hour,0.999639386578) (1 Day,0.998683507594) (2 Days, 0.997908832783)};
\addlegendentry{\tiny Unif. Mesh + TE-M}
\addplot[fill=lightgray, postaction={pattern=north east lines},] coordinates {(5 mins,0.999886122077) (1 Hour,0.998880200425) (1 Day,0.997557811188) (2 Days, 0.996738546543)};
\addlegendentry{\tiny Unif. Mesh + TE-S}
\end{axis}
\end{tikzpicture}
\vspace{-23pt}
\caption{\small Probability of traffic demands met, as a function of TE update frequency.}
\label{fig:overflow_probability}
\vspace{-15pt}
\end{figure}

\section{Packet Level Simulations}\label{subsection:fine_grained_netbench_simulation}
The evaluations thus far have focused on macroscopic metrics like link utilization and bandwidth tax. As important as these metrics are to DCN operators~\cite{dukkipati2006flowcompletiontime}, their implications on application-level metrics such as flow completion time (FCT), are not immediately clear. To test how FCT at finer timescales relates to macroscopic metrics, we use the NetBench~\cite{netbench} packet-level simulator. The simulation emulates 2 seconds of real world time. The flow size distribution is based on a data mining workload from previous works~\cite{alizadeh2013pfabric}. Flows arrive following a Poisson process. We assume that the network links have capacity of 100Gbps, and the server-to-server latency is 600ns. We choose at random a TM to derive the communication probability between pods. 

Our first set of simulations seeks to study the effects of different MLUs on FCT. First, a production TM snapshot is chosen at random. Next, 3 different logical topologies were generated via random sampling, such that routing the same traffic matrix over each logical topology with MCF results to 3 different MLUs. We enforce the routing weights obtained from MCF such that the bandwidth tax is 0.2 for all 3 logical topologies\footnote{This can be easily done via MCF by constraining 20\% of the total traffic to traverse indirect paths.} to remove confounding variables. Table \ref{table:mlu_perf} shows that larger MLU leads to longer-tailed FCT. Intuitively, a lower MLU means less link congestion, which ultimately helps more flows to complete as more traffic may traverse the network within a given amount of time.

We similarly study how differences in bandwidth tax may affect flow level performance, given the same MLU of 0.45. Fig. \ref{fig:fct_diff_ahc} shows the FCT as a function of bandwidth tax. 
Note that while there is little difference in FCT for larger flows, the small flows have shorter FCT when the bandwidth tax is low. Small flows are more latency-sensitive, and hence their FCTs are more likely to be affected by a high bandwidth tax. Fig. \ref{fig:tcp_resend} shows that packet drop could happen in shallow-buffered data centers even before MLU reaches 1, and higher bandwidth tax leads to higher occurrences of TCP resends, which is detrimental to the throughput of small flows while waiting for packet timeout. Hence, bandwidth tax is equally important for DCN performance as MLU.

%The TCP resends are likely caused by incast rather than congestion, because the MLU is much less than 1 in our simulation setting.

\begin{figure}[t!]
\pgfplotstableread{plot_data/fct_diffahc_ahc1p1.txt} \datafctone
\pgfplotstableread{plot_data/fct_diffahc_ahc1p3.txt} \datafctthree
\pgfplotstableread{plot_data/fct_diffahc_ahc1p7.txt} \datafctseven
\pgfplotstableread{plot_data/fct_diffahc_ahc1p8.txt} \datafctnine
\pgfplotstableread{plot_data/diff_ahc_num_tcp_resends.txt} \datatcpresend
\hspace{-8pt}
\begin{subfigure}{0.52\columnwidth}
\centering
\vspace{-6pt}
\begin{tikzpicture}
\begin{axis}[xlabel = \footnotesize FCT (ns), xtick={4, 5, 6, 7, 8}, xticklabels={$10^4$, $10^5$, $10^6$, $10^7$, $10^8$}, ylabel = \footnotesize CDF, xmin=3.5, xmax=8, ymax=1., ylabel near ticks, width=1.2\linewidth, height=4.5cm, legend style={at={(0.43,0.58)},anchor=north west,}, cycle list={{color = red}, {color=green}, {color=blue}, {color=black}}, ylabel shift = -3pt, xlabel shift = -4pt, xtick pos=left, ytick pos=left, grid, thick]
\addplot+[mark=+,mark repeat={3}] table[x=x, y=y] from \datafctone ;
\addlegendentry{\scriptsize BTax : 0.1}
\addplot+[mark=square,mark repeat={3}] table[x=x,y=y] from \datafctthree ;
\addlegendentry{\scriptsize BTax : 0.4}
\addplot+[mark=diamond, mark repeat={3}] table[x=x,y=y] from \datafctseven;
\addlegendentry{\scriptsize BTax : 0.7}
\addplot+[mark=asterisk, mark repeat={3}] table[x=x,y=y] from \datafctnine;
\addlegendentry{\scriptsize BTax : 0.9}
\end{axis}
\end{tikzpicture}
\vspace{-18pt}
\caption{\small Flow completion time (FCT)}
\label{fig:fct_diff_ahc}
\end{subfigure}
~
\begin{subfigure}{0.42\columnwidth}
\centering
\begin{tikzpicture}
\begin{axis}[xlabel = \footnotesize Bandwidth Tax, ylabel = \footnotesize TCP Resends Per Flow, xmin=1, xmax=2, ylabel near ticks, width=1.25\linewidth, height=4.5cm, cycle list={{color = black}}, ylabel shift = -4pt, xlabel shift = -4pt, xtick pos=left, ytick pos=left, grid, thick, xtick={1, 1.2, 1.4, 1.6, 1.8, 2}, xticklabels={0, 0.2, 0.4, 0.6, 0.8, 1}]
\addplot+[mark=diamond, mark options={solid}, densely dashed] table[x=x, y=y] from \datatcpresend ;
\end{axis}
\end{tikzpicture}
\vspace{-15pt}
\caption{\small TCP resend per flow}
\label{fig:tcp_resend}
\end{subfigure}
\vspace{-12pt}
\caption{\small Flow-level performance for data-mining workload under different bandwidth taxes.} 
\vspace{-9pt}
\label{fig:netbench_diff_ahc}
\end{figure}

\begin{table}[!t]
\footnotesize
\begin{tabular}{p{0.6cm}p{1.9cm}p{2.1cm}p{2.3cm}}
\hline
\textbf{MLU} & \textbf{99th \%tile FCT} & \textbf{99.9th \% tile FCT} & \textbf{99.99-th \% tile FCT} \\
\hline
\hline
0.5 & 154ms & 331ms & 928ms \\
\hline
1.0 & 103ms & 379ms & Incomplete \\
\hline
1.5 & 118ms & 414ms & Incomplete \\
\hline
\end{tabular}
\caption{\small Tail FCT performance of routing the same traffic matrix with different MLU, but fixed bandwidth tax.} 
\vspace{-17pt}
\label{table:mlu_perf}
\end{table}

\section{Conclusion}\label{section_conclusion}
%Previous works have proposed using reconfigurable network topologies to handle skewed traffic in DCNs. However, these systems rely on high speed OCSs to frequently reconfigure topology to meet changing traffic demands. This complicates implementation in commercial settings due to the vast amount of workload placed on the control plane.

We present {\metteor}, a robust topology engineering (ToE) approach that works for off-the-shelf OCSs. Unlike previous ToE solutions that react to every traffic change, {\metteor} designs logical topologies based on multiple representative TMs extracted from the slow-varying traffic clusters. As a result, {\metteor} can obtain most of the benefits of an ideal ToE, even with infrequent reconfiguration on the order of weeks. Reconfiguring topology at such low frequencies will lead to a lower technological barrier to ToE deployment, paving a path toward the incremental adoption of reconfigurable networks in commercial data centers.

\bibliographystyle{ACM-Reference-Format}
\clearpage
%\raggedright
\bibliography{reference}
\clearpage
\begin{appendices}
\section{Additional Simulation Results}\label{appendix_additional_simulation_results}
Here, we show the complete set of simulation results in Fig. \ref{fig:all_results_boxplot} , comparing {\metteor } against a uniform mesh expanders and ideal ToE. Similar to the settings in \S\ref{subsection:ideal_routing}, {\metteor } reconfigures logical topology on a biweekly-basis, using 4 traffic clusters ($\kappa$ = 4) computed from 2-weeks’ worth historical traffic matrix snapshots.

\begin{figure*}[!th]
%%(MLU based)
\begin{subfigure}[c]{0.97\textwidth}
\centering
\begin{tikzpicture}
\begin{axis}[
boxplot/draw direction=y,
width=1.0\linewidth,
height=3.9cm,
axis y line=left, 
axis x line*=bottom,
xtick={2, 6, 10, 14, 18, 22, 26, 30, 34, 38, 42, 46},
xticklabels={A, B, C, D, E, F, G, H, I, J, K, L},
ytick={0, 0.5, 1.0, 1.5, 2.0},
yticklabels={0, 0.5, 1.0, 1.5, 2.0},
ylabel=\scriptsize MLU,
xlabel= \scriptsize Fabric,
xlabel shift = -3pt,
legend style={draw=none, fill=none, at={(0.0,1.0)}, anchor=north west,legend columns=2},
legend cell align=left, 
    legend image code/.code={
    \draw[#1, draw=none] (0cm,-0.1cm) rectangle (0.12cm,0.03cm);
},
cycle list={{fill=red},{fill=blue}, {fill=gray}}, %helps pgfplot cycle through plot options
]
%% fab0
%nte
\addplot+ [mark=o, mark size = 0.5, mark color=grey, 
boxplot prepared={
	draw  position = 1,
box extend=0.7,
lower whisker= 0.377881,
lower quartile= 0.426805,
median= 0.456493,
upper quartile= 0.496285,
	upper whisker= 0.619444,
}] 
table[row sep=\\,y index=0] {0.620679\\ 0.620988\\ 0.623457\\ 0.625772\\ 0.628858\\ 0.636111\\ 0.645679\\ 0.651389\\ 0.658642\\ 0.689506\\ 0.376959\\ 0.375933\\ 0.374923\\ 0.373782\\ 0.372443\\ 0.371063\\ 0.369034\\ 0.365299\\ 0.358595\\ 0.349725\\ };

\addlegendentry{\tiny Unif. Mesh};

%k4
\addplot+ [mark=o, mark size = 0.5, mark color=grey, 
boxplot prepared={
	draw  position = 2,
box extend=0.7,
lower whisker= 0.38625,
lower quartile= 0.44069,
median= 0.472845,
upper quartile= 0.508362,
	upper whisker= 0.602802,
}] 
table[row sep=\\,y index=0] {0.603276\\ 0.603578\\ 0.604397\\ 0.60569\\ 0.606207\\ 0.608448\\ 0.609224\\ 0.613534\\ 0.614698\\ 0.620216\\ 0.385302\\ 0.383836\\ 0.382284\\ 0.380739\\ 0.37931\\ 0.377716\\ 0.375302\\ 0.372371\\ 0.36509\\ 0.355216\\ };
\addlegendentry{\tiny {\metteor} ($\kappa$ = 4)};

%ideal
\addplot+ [mark=o, mark size = 0.5, mark color=grey, 
boxplot prepared={
	draw  position = 3,
box extend=0.7,
lower whisker= 0.373659402811,
lower quartile= 0.417375025304,
median= 0.451805631454,
upper quartile= 0.493861,
	upper whisker= 0.580621,
}]
table[row sep=\\,y index=0] {0.581594\\ 0.582371\\ 0.583573\\ 0.584105\\ 0.585013\\ 0.585883\\ 0.58705\\ 0.587333\\ 0.589349\\ 0.593266\\ 0.373358615908\\ 0.367499191745\\ 0.364812958572\\ 0.362570606572\\ 0.35899879371\\ 0.361830162666\\ 0.359184455427\\ 0.348728768671\\ 0.345244776456\\ };
\addlegendentry{\tiny Ideal ToE};

%% fab1
%nte
\addplot+ [mark=o, mark size = 0.5, mark color=grey, 
boxplot prepared={
	draw  position = 5,
box extend=0.7,
lower whisker= 0.452407,
lower quartile= 0.508301,
median= 0.551416,
	upper quartile= 0.652734,
	upper whisker= 0.758496,
}] 
table[row sep=\\,y index=0] {0.758936\\ 0.759229\\ 0.759424\\ 0.760059\\ 0.760645\\ 0.761035\\ 0.761426\\ 0.762305\\ 0.763623\\ 0.809521\\ 0.451203\\ 0.449834\\ 0.448672\\ 0.446971\\ 0.444772\\ 0.442822\\ 0.439697\\ 0.436432\\ 0.423071\\ 0.418838\\ };

%k4
\addplot+ [mark=o, mark size = 0.5, mark color=grey, 
boxplot prepared={
	draw  position = 6,
box extend=0.7,
lower whisker= 0.442969,
lower quartile= 0.495119,
median= 0.534945,
upper quartile= 0.629562,
	upper whisker= 0.728376,
}] 
table[row sep=\\,y index=0] {0.728786\\ 0.72906\\ 0.729243\\ 0.729836\\ 0.730383\\ 0.730748\\ 0.731113\\ 0.731934\\ 0.733166\\ 0.776049\\ 0.441875\\ 0.440709\\ 0.439414\\ 0.437546\\ 0.436113\\ 0.434023\\ 0.431283\\ 0.427617\\ 0.415859\\ 0.411875\\ };
%ideal
\addplot+ [mark=o, mark size = 0.5, mark color=grey, 
boxplot prepared={
	draw  position = 7,
box extend=0.7,
lower whisker= 0.431958828136,
lower quartile= 0.483170773518,
median= 0.526008418626,
upper quartile= 0.617334089631,
	upper whisker= 0.718687366142,
}] 
table[row sep=\\,y index=0] {0.72715171284\\ 0.72257953443\\ 0.724066603291\\ 0.725670107228\\ 0.730202632371\\ 0.7223681252\\ 0.728884436268\\ 0.769766075371\\ 0.426760557213\\ 0.425415781206\\ 0.425869067165\\ 0.418993033407\\ 0.413617810687\\ 0.411875\\ };

%% fab2
%nte
\addplot+ [mark=o, mark size = 0.5, mark color=grey, 
boxplot prepared={
	draw  position = 9,
box extend=0.7,
lower whisker= 0.423242,
	lower quartile= 0.495703,
	median= 0.53749,
	upper quartile= 0.584961,
	upper whisker= 0.745264,
}] 
table[row sep=\\,y index=0] {0.745898\\ 0.746289\\ 0.747021\\ 0.747705\\ 0.7479\\ 0.749316\\ 0.749805\\ 0.750928\\ 0.7521\\ 0.783789\\ 0.421826\\ 0.420605\\ 0.418555\\ 0.415771\\ 0.412451\\ 0.407422\\ 0.392041\\ 0.300488\\ 0.300195\\ 0.300098\\ };

%k4
\addplot+ [mark=o, mark size = 0.5, mark color=grey, 
boxplot prepared={
	draw  position = 10,
box extend=0.7,
lower whisker= 0.42417,
	lower quartile= 0.486531,
	median= 0.525616,
	upper quartile= 0.569603,
	upper whisker= 0.720618,
}] 
table[row sep=\\,y index=0] {0.721218\\ 0.721587\\ 0.722279\\ 0.722924\\ 0.723109\\ 0.724446\\ 0.724908\\ 0.725969\\ 0.727076\\ 0.757011\\ 0.422648\\ 0.420664\\ 0.418173\\ 0.414806\\ 0.411485\\ 0.405166\\ 0.386946\\ 0.300461\\ 0.300185\\ 0.300092\\ };
%ideal
\addplot+ [mark=o, mark size = 0.5, mark color=grey, 
boxplot prepared={
lower whisker= 0.423047,
lower quartile= 0.481565314201,
median= 0.515132938163,
	upper quartile= 0.557433317834,
	upper whisker= 0.715441612686,
	draw  position = 11,
box extend=0.7,}] 
table[row sep=\\,y index=0] {0.717352656497\\ 0.719285812275\\ 0.717269419344\\ 0.721785982186\\ 0.724765936051\\ 0.719619759089\\ 0.754072822055\\ 0.421533\\ 0.419887\\ 0.417871\\ 0.404660819862\\ 0.397573920774\\ 0.400474149046\\ 0.377275491439\\ 0.292094768828\\ 0.291625295918\\ 0.299258978683\\ };
%% fab3
%nte
\addplot+ [mark=o, mark size = 0.5, mark color=grey, 
boxplot prepared={
	draw  position = 13,
box extend=0.7,
lower whisker= 0.387437,
	lower quartile= 0.4426,
	median= 0.494657,
	upper quartile= 0.566435,
	upper whisker= 0.687326,
}] 
table[row sep=\\,y index=0] {0.688594\\ 0.690311\\ 0.695462\\ 0.698569\\ 0.707931\\ 0.719051\\ 0.731889\\ 0.998532\\ 1.086705\\ 1.300695\\ 0.386263\\ 0.385119\\ 0.382706\\ 0.381283\\ 0.379477\\ 0.37761\\ 0.375163\\ 0.367784\\ 0.339427\\ 0.337813\\ };
%k4
\addplot+ [mark=o, mark size = 0.5, mark color=grey, 
boxplot prepared={
	draw  position = 14,
box extend=0.7,
lower whisker= 0.381214,
	lower quartile= 0.422171,
	median= 0.46573,
	upper quartile= 0.521388,
	upper whisker= 0.608801,
}] 
table[row sep=\\,y index=0] {0.609746\\ 0.611017\\ 0.611533\\ 0.614244\\ 0.617666\\ 0.621121\\ 0.625228\\ 0.631877\\ 0.643807\\ 0.923507\\ 0.380076\\ 0.378285\\ 0.376388\\ 0.374182\\ 0.371666\\ 0.369369\\ 0.366349\\ 0.358072\\ 0.333555\\ 0.332181\\ };

%ideal
\addplot+ [mark=o, mark size = 0.5, mark color=grey, 
boxplot prepared={
	draw  position = 15,
box extend=0.7,
lower whisker= 0.372542080231,
lower quartile= 0.40739696869,
	median= 0.463809633871,
	upper quartile= 0.517172120364,
	upper whisker= 0.605462515427,
}]
table[row sep=\\,y index=0] {0.610058291407\\ 0.618451519459\\ 0.617886013589\\ 0.622453789295\\ 0.64086342859\\ 0.911278346753\\ 0.366901232802\\ 0.364714966068\\ 0.365747107237\\ 0.368680971106\\ 0.357057103308\\ 0.358277612054\\ 0.35689754053\\ 0.344216400731\\ 0.326749538472\\ 0.32824099946\\ };

%% fab4
%nte
\addplot+ [mark=o, mark size = 0.5, mark color=grey, 
boxplot prepared={
	draw  position = 17,
box extend=0.7,
lower whisker= 0.340234,
lower quartile= 0.385938,
median= 0.414697,
upper quartile= 0.463867,
	upper whisker= 0.633887,
}]
table[row sep=\\,y index=0] {0.633984\\ 0.634326\\ 0.634717\\ 0.634814\\ 0.63584\\ 0.636328\\ 0.637402\\ 0.638965\\ 0.642285\\ 0.657324\\ 0.339927\\ 0.339502\\ 0.339111\\ 0.338598\\ 0.338037\\ 0.337305\\ 0.336624\\ 0.335697\\ 0.333838\\ 0.332252\\ };
%k4
\addplot+ [mark=o, mark size = 0.5, mark color=grey, 
boxplot prepared={
	draw  position = 18,
box extend=0.7,
lower whisker= 0.340217,
lower quartile= 0.384375,
	median= 0.412772,
upper quartile= 0.452083,
	upper whisker= 0.609692,
}] 
table[row sep=\\,y index=0] {0.609783\\ 0.6101\\ 0.610462\\ 0.610553\\ 0.611504\\ 0.611957\\ 0.612953\\ 0.614402\\ 0.617482\\ 0.631431\\ 0.339764\\ 0.33914\\ 0.33871\\ 0.338098\\ 0.337319\\ 0.336232\\ 0.335145\\ 0.333992\\ 0.331386\\ 0.330254\\ };
%ideal
\addplot+ [mark=o, mark size = 0.5, mark color=grey, 
boxplot prepared={
	draw  position = 19,
box extend=0.7,
lower whisker= 0.340126,
	lower quartile= 0.380901634164,
	median= 0.408801291529,
	upper quartile= 0.45052201784,
	upper whisker= 0.603448298224,
}] 
table[row sep=\\,y index=0] {0.603743743267\\ 0.609543525514\\ 0.606292071879\\ 0.609938338777\\ 0.610090205157\\ 0.613916167773\\ 0.611077478359\\ 0.625031882968\\ 0.339697\\ 0.333615755683\\ 0.336701068672\\ 0.334869193802\\ 0.328024762897\\ 0.327000667718\\ 0.327828855064\\ 0.333659481734\\ 0.320455835865\\ 0.316619513914\\ };

%% fab5
%nte
\addplot+ [mark=o, mark size = 0.5, mark color=grey, 
boxplot prepared={
	draw  position = 21,
box extend=0.7,
lower whisker= 0.372014,
	lower quartile= 0.420606,
	median= 0.446416,
	upper quartile= 0.475512,
	upper whisker= 0.653541,
}]
table[row sep=\\,y index=0] {0.657807\\ 0.660068\\ 0.661732\\ 0.663652\\ 0.668131\\ 0.674403\\ 0.677432\\ 0.680802\\ 0.685111\\ 0.696928\\ 0.371587\\ 0.371032\\ 0.370435\\ 0.369838\\ 0.369241\\ 0.368515\\ 0.367449\\ 0.365102\\ 0.352507\\ 0.349744\\ };
%k4
\addplot+ [mark=o, mark size = 0.5, mark color=grey, 
boxplot prepared={
	draw  position = 22,
box extend=0.7,
lower whisker= 0.361173,
	lower quartile= 0.401281,
	median= 0.425155,
	upper quartile= 0.450964,
	upper whisker= 0.607007,
}] 
table[row sep=\\,y index=0] {0.608144\\ 0.610345\\ 0.611115\\ 0.612693\\ 0.616618\\ 0.621937\\ 0.624541\\ 0.627439\\ 0.631145\\ 0.641306\\ 0.360583\\ 0.360028\\ 0.359368\\ 0.358833\\ 0.35826\\ 0.357569\\ 0.355899\\ 0.353743\\ 0.339905\\ 0.338533\\ };
%ideal
\addplot+ [mark=o, mark size = 0.5, mark color=grey, 
boxplot prepared={
	draw  position = 23,
box extend=0.7,
lower whisker= 0.351161398741,
	lower quartile= 0.395107714207,
	median= 0.420126953673,
	upper quartile= 0.439643825463,
	upper whisker= 0.59821325761,
}]
table[row sep=\\,y index=0] {0.607045375814\\ 0.609382332168\\ 0.616576981987\\ 0.617749823646\\ 0.617712398974\\ 0.617813019036\\ 0.625997815426\\ 0.636832029742\\ 0.347694718466\\ 0.343368549185\\ 0.342049050783\\ 0.350718739569\\ 0.335660766262\\ 0.33285188852\\ };

%% fab6
%nte
\addplot+ [mark=o, mark size = 0.5, mark color=grey,
	boxplot prepared={
draw position=25,
box extend=0.7,
lower whisker= 0.359619,
	lower quartile= 0.458594,
	median= 0.498047,
	upper quartile= 0.533398,
	upper whisker= 0.846875,
}] table[row sep=\\,y index=0] {0.848145\\ 0.84873\\ 0.850781\\ 0.852832\\ 0.855469\\ 0.857617\\ 0.861328\\ 0.863477\\ 0.874121\\ 1.365723\\ 0.357764\\ 0.355566\\ 0.35332\\ 0.351074\\ 0.348828\\ 0.345996\\ 0.341788\\ 0.337598\\ 0.326367\\ 0.317871\\ };

%k4
\addplot+ [mark=o, mark size = 0.5, mark color=grey,
	boxplot prepared={
draw position=26,
box extend=0.7,
lower whisker= 0.318712,
	lower quartile= 0.397316,
	median= 0.429509,
	upper quartile= 0.457209,
	upper whisker= 0.697239,
}] table[row sep=\\,y index=0] {0.698236\\ 0.698696\\ 0.700307\\ 0.701917\\ 0.703988\\ 0.705675\\ 0.708589\\ 0.710276\\ 0.718635\\ 1.104678\\ 0.317025\\ 0.315107\\ 0.313497\\ 0.311727\\ 0.309622\\ 0.307439\\ 0.304678\\ 0.300077\\ 0.292397\\ 0.282423\\ };

%ideal
\addplot+ [mark=o, mark size = 0.5, mark color=grey,
	boxplot prepared={
draw position=27,
box extend=0.7,
lower whisker= 0.316031938856,
	lower quartile= 0.382472313676,
	median= 0.418750729408,
	upper quartile= 0.452220839644,
	upper whisker= 0.694860669514,
}]
table[row sep=\\,y index=0] {0.698469697935\\ 0.69744520251\\ 0.701644418461\\ 0.701708431258\\ 0.700748973532\\ 0.705453528574\\ 1.089980921\\ 0.309767484532\\ 0.300156197819\\ 0.305263624809\\ 0.29979440768\\ 0.295628422845\\ 0.294084291205\\ 0.294331901082\\ 0.294595959381\\ 0.282591771949\\ 0.278037450845\\ };

%% fab7
% nte
\addplot+ [mark=o, mark size = 0.5, mark color=grey,
	boxplot prepared={
draw position=29,
box extend=0.7,
lower whisker= 0.214602,
	lower quartile= 0.272119,
	median= 0.336914,
	upper quartile= 0.419922,
	upper whisker= 0.501611,
}] table[row sep=\\,y index=0] {0.501758\\ 0.501904\\ 0.502441\\ 0.502588\\ 0.503418\\ 0.504443\\ 0.504688\\ 0.506934\\ 0.509473\\ 0.593604\\ 0.213956\\ 0.213076\\ 0.212008\\ 0.211106\\ 0.210101\\ 0.208906\\ 0.207146\\ 0.205721\\ 0.203311\\ 0.201025\\ };

% k4
\addplot+ [mark=o, mark size = 0.5, mark color=grey,
	boxplot prepared={
draw position=30,
box extend=0.7,
lower whisker= 0.214195,
	lower quartile= 0.269612,
	median= 0.318099,
	upper quartile= 0.391927,
	upper whisker= 0.467134,
}] table[row sep=\\,y index=0] {0.467274\\ 0.468588\\ 0.469531\\ 0.470205\\ 0.471929\\ 0.473491\\ 0.474838\\ 0.477802\\ 0.479256\\ 0.544314\\ 0.213823\\ 0.213408\\ 0.212739\\ 0.212105\\ 0.211502\\ 0.21072\\ 0.210069\\ 0.208941\\ 0.204905\\ 0.197232\\ };

% ideal
\addplot+ [mark=o, mark size = 0.5, mark color=grey,
	boxplot prepared={
draw position=31,
box extend=0.7,
lower whisker= 0.211205,
	lower quartile= 0.255859704605,
	median= 0.306736883313,
	upper quartile= 0.385985397339,
	upper whisker= 0.455370796264,
}] table[row sep=\\,y index=0] {0.464128828758\\ 0.467064932436\\ 0.455593168027\\ 0.464842940233\\ 0.460559035661\\ 0.465318437161\\ 0.462043920524\\ 0.475742157408\\ 0.469965861724\\ 0.532613411696\\ 0.210673\\ 0.209945\\ 0.209344\\ 0.208086\\ 0.20707\\ 0.205586\\ 0.203849\\ 0.202852\\ 0.200417\\ 0.189282631963\\ };

%% fab8
% nte
\addplot+ [mark=o, mark size = 0.5, mark color=grey,
	boxplot prepared={
draw position=33,
box extend=0.7,
lower whisker= 0.321239,
	lower quartile= 0.474103,
	median= 0.542732,
	upper quartile= 0.614589,
	upper whisker= 0.936587,
}] table[row sep=\\,y index=0] {0.950952\\ 0.954762\\ 0.966587\\ 0.98\\ 0.993095\\ 1.010873\\ 1.016667\\ 1.040873\\ 1.060238\\ 1.177936\\ 0.318331\\ 0.315282\\ 0.312511\\ 0.309372\\ 0.306445\\ 0.303082\\ 0.297946\\ 0.290201\\ 0.275032\\ 0.261094\\ };

% k4
\addplot+ [mark=o, mark size = 0.5, mark color=grey,
	boxplot prepared={
draw position=34,
box extend=0.7,
lower whisker= 0.292871,
	lower quartile= 0.450452,
	median= 0.537626,
	upper quartile= 0.624824,
	upper whisker= 0.875351,
}] table[row sep=\\,y index=0] {0.879066\\ 0.881049\\ 0.884789\\ 0.887073\\ 0.88878\\ 0.894202\\ 0.898971\\ 0.913128\\ 0.918901\\ 0.997063\\ 0.290512\\ 0.287826\\ 0.28494\\ 0.281802\\ 0.278087\\ 0.274197\\ 0.269754\\ 0.263174\\ 0.251481\\ 0.238265\\ };

% ideal
\addplot+ [mark=o, mark size = 0.5, mark color=grey,
	boxplot prepared={
draw position=35,
box extend=0.7,
lower whisker= 0.289087,
	lower quartile= 0.442236,
	median= 0.518413797702,
	upper quartile= 0.593863428116,
	upper whisker= 0.839376530677,
}] table[row sep=\\,y index=0] {0.850073936731\\ 0.847753166129\\ 0.851256629753\\ 0.858230519732\\ 0.864728609791\\ 0.867601220745\\ 0.878491\\ 0.872102246819\\ 0.875598942626\\ 0.9653930495\\ 0.286694\\ 0.283796\\ 0.281348\\ 0.278223\\ 0.274683\\ 0.27082\\ 0.266596\\ 0.260325\\ 0.249268\\ 0.236322\\ };

%% fab9
% nte
\addplot+ [mark=o, mark size = 0.5, mark color=grey,
	boxplot prepared={
draw position=37,
box extend=0.7,
lower whisker= 0.206787,
	lower quartile= 0.249365,
	median= 0.276919,
	upper quartile= 0.311914,
	upper whisker= 0.51468,
}] table[row sep=\\,y index=0] {0.545117\\ 0.546266\\ 0.547458\\ 0.548682\\ 0.557715\\ 0.560303\\ 0.568076\\ 0.573496\\ 0.600519\\ 0.206201\\ 0.205272\\ 0.204541\\ 0.203662\\ 0.202776\\ 0.201875\\ 0.201089\\ 0.199854\\ 0.194783\\ 0.191196\\ };

% k4
\addplot+ [mark=o, mark size = 0.5, mark color=grey,
	boxplot prepared={
draw position=38,
box extend=0.7,
lower whisker= 0.216322,
	lower quartile= 0.257229,
	median= 0.29703,
	upper quartile= 0.332666,
	upper whisker= 0.50164
}] table[row sep=\\,y index=0] {0.505684\\ 0.506324\\ 0.507951\\ 0.508839\\ 0.51176\\ 0.51838\\ 0.520747\\ 0.526664\\ 0.555288\\ 0.577626\\ 0.215445\\ 0.214706\\ 0.213511\\ 0.212492\\ 0.211225\\ 0.209604\\ 0.207768\\ 0.205965\\ 0.201797\\ 0.199799\\ };

% ideal
\addplot+ [mark=o, mark size = 0.5, mark color=grey,
	boxplot prepared={
draw position=39,
box extend=0.7,
lower whisker= 0.199359,
	lower quartile= 0.2375,
	median= 0.265339,
	upper quartile= 0.302295,
	upper whisker= 0.439583296563,
}] table[row sep=\\,y index=0] {0.445117\\ 0.440913394017\\ 0.443748128846\\ 0.443842718388\\ 0.44949144924\\ 0.465186\\ 0.46062805809\\ 0.487012423551\\ 0.198758\\ 0.197949\\ 0.196928\\ 0.195801\\ 0.194395\\ 0.192871\\ 0.19166\\ 0.189609\\ 0.186426\\ 0.18521\\ };

%% fab10
% nte
\addplot+ [mark=o, mark size = 0.5, mark color=grey,
	boxplot prepared={
draw position=41,
box extend=0.7,
lower whisker= 0.216751,
	lower quartile= 0.687691,
	median= 0.73687,
	upper quartile= 0.805668,
	upper whisker= 0.89125,
}] table[row sep=\\,y index=0] {0.93\\ 0.993125\\ 1.02125\\ 1.094375\\ 1.119063\\ 1.215625\\ 1.2375\\ 1.26875\\ 1.455312\\ 1.525313\\ 0.21296\\ 0.209652\\ 0.203249\\ 0.197118\\ 0.192863\\ 0.18846\\ 0.185444\\ 0.180348\\ 0.163827\\ 0.16332\\ };

% k4
\addplot+ [mark=o, mark size = 0.5, mark color=grey,
	boxplot prepared={
draw position=42,
box extend=0.7,
lower whisker= 0.205273,
	lower quartile= 0.492148,
	median= 0.714629,
	upper quartile= 0.752168,
	upper whisker= 0.843145,
}] table[row sep=\\,y index=0] {0.844805\\ 0.845898\\ 0.847051\\ 0.848555\\ 0.850469\\ 0.8525\\ 0.856328\\ 0.86\\ 0.870137\\ 1.047891\\ 0.199844\\ 0.194922\\ 0.190898\\ 0.18875\\ 0.185859\\ 0.182754\\ 0.176016\\ 0.17041\\ 0.161543\\ 0.160176\\ };

% ideal
\addplot+ [mark=o, mark size = 0.5, mark color=grey,
	boxplot prepared={
draw position=43,
box extend=0.7,
lower whisker= 0.289087,
	lower quartile= 0.442236,
	median= 0.518413797702,
	upper quartile= 0.593863428116,
	upper whisker= 0.839376530677,
}] table[row sep=\\,y index=0] {0.844805\\ 0.845898\\ 0.847051\\ 0.848555\\ 0.850469\\ 0.8525\\ 0.856328\\ 0.86\\ 0.870137\\ 1.047891\\ 0.199844\\ 0.194922\\ 0.190898\\ 0.18875\\ 0.185859\\ 0.182754\\ 0.176016\\ 0.17041\\ 0.161543\\ 0.160176\\ };

%% fab11
% nte
\addplot+ [mark=o, mark size = 0.5, mark color=grey,
	boxplot prepared={
draw position=45,
box extend=0.7,
lower whisker= 0.183421,
	lower quartile= 0.220659,
	median= 0.249843,
	upper quartile= 0.269142,
	upper whisker= 0.40455,
}] table[row sep=\\,y index=0] {0.405021\\ 0.40523\\ 0.407165\\ 0.407636\\ 0.408054\\ 0.408525\\ 0.410565\\ 0.414801\\ 0.4825\\ 1.142586\\ 0.182688\\ 0.182008\\ 0.181119\\ 0.180259\\ 0.179446\\ 0.178086\\ 0.176778\\ 0.175052\\ 0.171234\\ 0.167364\\ };

% k4
\addplot+ [mark=o, mark size = 0.5, mark color=grey,
	boxplot prepared={
draw position=46,
box extend=0.7,
lower whisker= 0.181996,
	lower quartile= 0.216291,
	median= 0.243788,
	upper quartile= 0.263246,
	upper whisker= 0.377472,
}] table[row sep=\\,y index=0] {0.377612\\ 0.379338\\ 0.379757\\ 0.380131\\ 0.38055\\ 0.382229\\ 0.384229\\ 0.394714\\ 0.483362\\ 1.120095\\ 0.18153\\ 0.180727\\ 0.17988\\ 0.179151\\ 0.178479\\ 0.177239\\ 0.176114\\ 0.174504\\ 0.171035\\ 0.165485\\ };

% ideal
\addplot+ [mark=o, mark size = 0.5, mark color=grey,
	boxplot prepared={
draw position=47,
box extend=0.7,
lower whisker= 0.181295,
	lower quartile= 0.210975959388,
	median= 0.243561,
	upper quartile= 0.262373,
	upper whisker= 0.374392304719,
}] table[row sep=\\,y index=0] {0.376158039822\\ 0.37776472733\\ 0.383214030211\\ 0.470101317161\\ 1.066952\\ 0.18073\\ 0.179961\\ 0.179199\\ 0.178277\\ 0.177548\\ 0.176351\\ 0.175077\\ 0.173449\\ 0.169824\\ 0.160340932243\\ };
\end{axis}
\end{tikzpicture}
\vspace{-5pt}
\caption{\small MLU}
\end{subfigure} \\

%%(ALU based)
\begin{subfigure}[c]{0.97\textwidth}
\centering
\begin{tikzpicture}
\begin{axis}[
boxplot/draw direction=y,
width=1.0\linewidth,
height=3.9cm,
axis y line=left, 
axis x line*=bottom,
xtick={2, 6, 10, 14, 18, 22, 26, 30, 34, 38, 42, 46},
xticklabels={A, B, C, D, E, F, G, H, I, J, K, L},
ylabel=\scriptsize Median LU,
xlabel= \scriptsize Fabric,
xlabel shift = -3pt,
legend style={draw=none, fill=none, at={(0.0,1.0)}, anchor=north west,legend columns=2},
legend cell align=left, 
    legend image code/.code={
    \draw[#1, draw=none] (0cm,-0.1cm) rectangle (0.12cm,0.03cm);
},
cycle list={{fill=red},{fill=blue}, {fill=gray}}, %helps pgfplot cycle through plot options
]
%% fab0
%nte
\addplot+ [mark=o, mark size = 0.5, mark color=grey, 
boxplot prepared={
	draw  position = 1,
box extend=0.7,
lower whisker= 0.193526,
	lower quartile= 0.222703,
	median= 0.240682,
	upper quartile= 0.262199,
	upper whisker= 0.309991,
}] table[row sep=\\,y index=0] {0.310534\\ 0.310652\\ 0.31092\\ 0.311288\\ 0.311601\\ 0.311785\\ 0.312163\\ 0.3139\\ 0.314749\\ 0.319426\\ 0.192864\\ 0.192173\\ 0.191418\\ 0.190749\\ 0.18995\\ 0.188716\\ 0.187635\\ 0.186031\\ 0.182916\\ 0.176772\\ };
\addlegendentry{\tiny Unif. Mesh};

%k4
\addplot+ [mark=o, mark size = 0.5, mark color=grey, 
boxplot prepared={
	draw  position = 2,
box extend=0.7,
lower whisker= 0.181235,
	lower quartile= 0.205648,
	median= 0.225145,
	upper quartile= 0.245262,
	upper whisker= 0.286238,
}] table[row sep=\\,y index=0] {0.286408\\ 0.286815\\ 0.287059\\ 0.287368\\ 0.287518\\ 0.287624\\ 0.28838\\ 0.289042\\ 0.291341\\ 0.300181\\ 0.180669\\ 0.180087\\ 0.179463\\ 0.178881\\ 0.178053\\ 0.176775\\ 0.175647\\ 0.174528\\ 0.17116\\ 0.170044\\ };
\addlegendentry{\tiny {\metteor} $(\kappa = 4)$};

%ideal
\addplot+ [mark=o, mark size = 0.5, mark color=grey, 
boxplot prepared={
	draw  position = 3,
box extend=0.7,
lower whisker= 0.172557,
	lower quartile= 0.194271,
	median= 0.210869,
	upper quartile= 0.227758,
	upper whisker= 0.286235,
}] table[row sep=\\,y index=0] {0.286758\\ 0.286821186899\\ 0.288831541745\\ 0.292103807469\\ 0.172254\\ 0.17187\\ 0.171439\\ 0.171129\\ 0.170642\\ 0.170136\\ 0.169613\\ 0.168594\\ 0.166539\\ 0.163369\\ };
\addlegendentry{\tiny Ideal ToE};

%% fab1
%nte
\addplot+ [mark=o, mark size = 0.5, mark color=grey, 
boxplot prepared={
	draw  position = 5,
box extend=0.7,
lower whisker= 0.213829,
	lower quartile= 0.236744,
	median= 0.249431,
	upper quartile= 0.263925,
	upper whisker= 0.296286,
}]  table[row sep=\\,y index=0] {0.296518\\ 0.29687\\ 0.29715\\ 0.297565\\ 0.297874\\ 0.298208\\ 0.298436\\ 0.299188\\ 0.300227\\ 0.302622\\ 0.213439\\ 0.212933\\ 0.212278\\ 0.211371\\ 0.210281\\ 0.209293\\ 0.20763\\ 0.205695\\ 0.199972\\ 0.192397\\ };

%k4
\addplot+ [mark=o, mark size = 0.5, mark color=grey, 
boxplot prepared={
	draw  position = 6,
box extend=0.7,
lower whisker= 0.197248,
	lower quartile= 0.214217,
	median= 0.225133,
	upper quartile= 0.243238,
	upper whisker= 0.281001,
}]  table[row sep=\\,y index=0] {0.281113\\ 0.281405\\ 0.281549\\ 0.281748\\ 0.282154\\ 0.282546\\ 0.282992\\ 0.283447\\ 0.284335\\ 0.287309\\ 0.1969\\ 0.19647\\ 0.195956\\ 0.195412\\ 0.194723\\ 0.193663\\ 0.19241\\ 0.191047\\ 0.186967\\ 0.180894\\ };

%ideal
\addplot+ [mark=o, mark size = 0.5, mark color=grey, 
boxplot prepared={
	draw  position = 7,
box extend=0.7,
lower whisker= 0.197033,
	lower quartile= 0.211764,
	median= 0.221006,
	upper quartile= 0.233345757404,
	upper whisker= 0.268422043784,
}]  table[row sep=\\,y index=0] {0.279840240077\\ 0.279664242116\\ 0.276812164126\\ 0.280472821712\\ 0.278043262069\\ 0.27205559132\\ 0.282076616057\\ 0.272743024738\\ 0.271235258663\\ 0.278932182247\\ 0.196782\\ 0.196359\\ 0.195862\\ 0.195143\\ 0.194544\\ 0.193557\\ 0.191971\\ 0.190066\\ 0.173304346369\\ 0.170874297058\\ };

%% fab2
%nte
\addplot+ [mark=o, mark size = 0.5, mark color=grey, 
boxplot prepared={
	draw  position = 9,
box extend=0.7,
lower whisker= 0.203348,
	lower quartile= 0.223347,
	median= 0.240163,
	upper quartile= 0.261316,
	upper whisker= 0.30449,
}] table[row sep=\\,y index=0] {0.304834\\ 0.305276\\ 0.305486\\ 0.306522\\ 0.307195\\ 0.308657\\ 0.309902\\ 0.310369\\ 0.310898\\ 0.317671\\ 0.202952\\ 0.202524\\ 0.201962\\ 0.201227\\ 0.200472\\ 0.198728\\ 0.194486\\ 0.150034\\ 0.150015\\ 0.150008\\ };

%k4
\addplot+ [mark=o, mark size = 0.5, mark color=grey, 
boxplot prepared={
	draw  position = 10,
box extend=0.7,
lower whisker= 0.19458,
	lower quartile= 0.21453,
	median= 0.240319,
	upper quartile= 0.260176,
	upper whisker= 0.301137,
}] table[row sep=\\,y index=0] {0.3015\\ 0.302183\\ 0.302374\\ 0.303289\\ 0.304862\\ 0.305525\\ 0.306142\\ 0.307109\\ 0.307776\\ 0.314591\\ 0.194231\\ 0.193795\\ 0.193342\\ 0.192893\\ 0.192105\\ 0.190904\\ 0.187035\\ 0.150031\\ 0.150013\\ 0.150007\\ };

%ideal
\addplot+ [mark=o, mark size = 0.5, mark color=grey, 
boxplot prepared={
	draw  position = 11,
box extend=0.7,
lower whisker= 0.190811,
	lower quartile= 0.208432,
	median= 0.236222,
	upper quartile= 0.256682,
	upper whisker= 0.297122,
}] table[row sep=\\,y index=0] {0.297646\\ 0.29794\\ 0.298477\\ 0.299287\\ 0.300063\\ 0.300354\\ 0.30098\\ 0.301387\\ 0.302456\\ 0.307866\\ 0.190378\\ 0.189889\\ 0.189356\\ 0.188788\\ 0.187928\\ 0.186646\\ 0.183046\\ 0.150016\\ 0.150003\\ 0.150001\\ };

%% fab3
%nte
\addplot+ [mark=o, mark size = 0.5, mark color=grey, 
boxplot prepared={
	draw  position = 13,
box extend=0.7,
lower whisker= 0.193876,
	lower quartile= 0.227474,
	median= 0.255713,
	upper quartile= 0.285252,
	upper whisker= 0.347326,
}] table[row sep=\\,y index=0] {0.347696\\ 0.348874\\ 0.349649\\ 0.350348\\ 0.350774\\ 0.351986\\ 0.353706\\ 0.356027\\ 0.358516\\ 0.595149\\ 0.19329\\ 0.192628\\ 0.191851\\ 0.191041\\ 0.19021\\ 0.188885\\ 0.186948\\ 0.183089\\ 0.165202\\ 0.154817\\ };

%k4
\addplot+ [mark=o, mark size = 0.5, mark color=grey, 
boxplot prepared={
	draw  position = 14,
box extend=0.7,
lower whisker= 0.171554,
	lower quartile= 0.184726,
	median= 0.198742,
	upper quartile= 0.22072,
	upper whisker= 0.269254,
}] table[row sep=\\,y index=0] {0.269692\\ 0.270511\\ 0.271221\\ 0.271549\\ 0.272106\\ 0.272905\\ 0.273698\\ 0.275601\\ 0.277939\\ 0.370417\\ 0.171345\\ 0.171104\\ 0.170797\\ 0.170519\\ 0.170104\\ 0.169609\\ 0.168724\\ 0.167059\\ 0.155324\\ 0.153034\\ };

%ideal
\addplot+ [mark=o, mark size = 0.5, mark color=grey, 
boxplot prepared={
	draw  position = 15,
box extend=0.7,
lower whisker= 0.160943,
	lower quartile= 0.169262,
	median= 0.1809,
	upper quartile= 0.195441,
	upper whisker= 0.244488,
}] table[row sep=\\,y index=0] {0.244876\\ 0.24536\\ 0.247056\\ 0.248041\\ 0.248925\\ 0.249881\\ 0.250458\\ 0.251942\\ 0.254853\\ 0.269169\\ 0.160795\\ 0.160627\\ 0.160446\\ 0.160227\\ 0.159941\\ 0.159658\\ 0.159184\\ 0.157813\\ 0.154722\\ 0.152752\\ };

%% fab4
%nte
\addplot+ [mark=o, mark size = 0.5, mark color=grey, 
boxplot prepared={
	draw  position = 17,
box extend=0.7,
lower whisker= 0.162727,
	lower quartile= 0.175463,
	median= 0.187475,
	upper quartile= 0.201929,
	upper whisker= 0.227714,
}] table[row sep=\\,y index=0] {0.227792\\ 0.227908\\ 0.228044\\ 0.228113\\ 0.228402\\ 0.228569\\ 0.228921\\ 0.229691\\ 0.230648\\ 0.27973\\ 0.162531\\ 0.162302\\ 0.162103\\ 0.161905\\ 0.161684\\ 0.161405\\ 0.161075\\ 0.160695\\ 0.160082\\ 0.159742\\ };

%k4
\addplot+ [mark=o, mark size = 0.5, mark color=grey, 
boxplot prepared={
	draw  position = 18,
box extend=0.7,
lower whisker= 0.168926,
	lower quartile= 0.181231,
	median= 0.187779,
	upper quartile= 0.196747,
	upper whisker= 0.223293,
}] table[row sep=\\,y index=0] {0.223571\\ 0.223762\\ 0.224072\\ 0.224292\\ 0.224812\\ 0.225205\\ 0.225679\\ 0.226429\\ 0.22799\\ 0.270578\\ 0.168764\\ 0.16856\\ 0.168354\\ 0.168092\\ 0.167763\\ 0.167309\\ 0.166829\\ 0.166102\\ 0.165023\\ 0.164255\\ };

%ideal
\addplot+ [mark=o, mark size = 0.5, mark color=grey, 
boxplot prepared={
	draw  position = 19,
box extend=0.7,
lower whisker= 0.158344,
	lower quartile= 0.169337,
	median= 0.183479,
	upper quartile= 0.182691054511,
	upper whisker= 0.220547357128,
}] table[row sep=\\,y index=0] {0.221110867236\\ 0.223203692487\\ 0.222525840885\\ 0.258274\\ 0.158207\\ 0.15803\\ 0.157861\\ 0.157722\\ 0.157551\\ 0.157366\\ 0.157118\\ 0.15676\\ 0.15582\\ 0.15547\\ };

%% fab5
%nte
\addplot+ [mark=o, mark size = 0.5, mark color=grey, 
boxplot prepared={
	draw  position = 21,
box extend=0.7,
lower whisker= 0.169946,
	lower quartile= 0.197145,
	median= 0.206301,
	upper quartile= 0.214769,
	upper whisker= 0.236497,
}] table[row sep=\\,y index=0] {0.236567\\ 0.236678\\ 0.237069\\ 0.237564\\ 0.237994\\ 0.238518\\ 0.23898\\ 0.240045\\ 0.242325\\ 0.249191\\ 0.169749\\ 0.169513\\ 0.169285\\ 0.169079\\ 0.168822\\ 0.168478\\ 0.168112\\ 0.167553\\ 0.166362\\ 0.164797\\ };

%k4
\addplot+ [mark=o, mark size = 0.5, mark color=grey, 
boxplot prepared={
	draw  position = 22,
box extend=0.7,
lower whisker= 0.164147,
	lower quartile= 0.182661,
	median= 0.189978,
	upper quartile= 0.197066,
	upper whisker= 0.217669,
}] table[row sep=\\,y index=0] {0.217853\\ 0.21817\\ 0.218378\\ 0.21867\\ 0.218983\\ 0.21909\\ 0.219435\\ 0.221848\\ 0.223161\\ 0.230017\\ 0.163924\\ 0.163563\\ 0.163261\\ 0.16295\\ 0.16258\\ 0.162221\\ 0.161688\\ 0.161191\\ 0.160386\\ 0.160027\\ };

%ideal
\addplot+ [mark=o, mark size = 0.5, mark color=grey, 
boxplot prepared={
	draw  position = 23,
box extend=0.7,
lower whisker= 0.159858,
	lower quartile= 0.180835,
	median= 0.189485,
	upper quartile= 0.190226637537,
	upper whisker= 0.207078962334,
}] table[row sep=\\,y index=0] {0.21137862817\\ 0.214000178859\\ 0.207745901344\\ 0.213595568001\\ 0.215808262735\\ 0.208880130997\\ 0.214462923563\\ 0.21008990975\\ 0.214364305069\\ 0.22895964884\\ 0.159725\\ 0.159591\\ 0.159459\\ 0.159339\\ 0.15923\\ 0.159045\\ 0.158718\\ 0.158349\\ 0.157337\\ 0.156675\\ };

%% fab6
%nte
\addplot+ [mark=o, mark size = 0.5, mark color=grey,
	boxplot prepared={
draw position=25,
box extend=0.7,
lower whisker= 0.242594,
	lower quartile= 0.276773,
	median= 0.300675,
	upper quartile= 0.327318,
	upper whisker= 0.3762,
}] table[row sep=\\,y index=0] {0.377212\\ 0.377569\\ 0.378142\\ 0.379633\\ 0.380295\\ 0.38264\\ 0.384641\\ 0.393666\\ 0.403835\\ 0.593986\\ 0.241824\\ 0.240944\\ 0.240086\\ 0.239174\\ 0.238067\\ 0.236912\\ 0.235507\\ 0.233464\\ 0.23027\\ 0.228497\\ };

%k4
\addplot+ [mark=o, mark size = 0.5, mark color=grey,
	boxplot prepared={
draw position=26,
box extend=0.7,
lower whisker= 0.205511,
	lower quartile= 0.234176,
	median= 0.247715,
	upper quartile= 0.261453,
	upper whisker= 0.306686,
}] table[row sep=\\,y index=0] {0.307444\\ 0.307911\\ 0.309634\\ 0.309941\\ 0.310717\\ 0.312981\\ 0.316435\\ 0.320068\\ 0.323197\\ 0.406424\\ 0.204927\\ 0.20419\\ 0.203456\\ 0.202788\\ 0.201974\\ 0.201144\\ 0.199981\\ 0.198709\\ 0.19591\\ 0.193626\\ };

%ideal
\addplot+ [mark=o, mark size = 0.5, mark color=grey,
	boxplot prepared={
draw position=27,
box extend=0.7,
lower whisker= 0.185917,
	lower quartile= 0.22075,
	median= 0.23293,
	upper quartile= 0.243419,
	upper whisker= 0.295612,
}] table[row sep=\\,y index=0] {0.296199\\ 0.29647\\ 0.2971\\ 0.297398\\ 0.29852\\ 0.300067\\ 0.300955\\ 0.302429\\ 0.303539\\ 0.389652\\ 0.185554\\ 0.185134\\ 0.184831\\ 0.18438\\ 0.184015\\ 0.183596\\ 0.182908\\ 0.182118\\ 0.180648\\ 0.179451\\ };

%% fab7
% nte
\addplot+ [mark=o, mark size = 0.5, mark color=grey,
	boxplot prepared={
draw position=29,
box extend=0.7,
lower whisker= 0.163008,
	lower quartile= 0.175199,
	median= 0.184073,
	upper quartile= 0.205542,
	upper whisker= 0.227251,
}] table[row sep=\\,y index=0] {0.227435\\ 0.227737\\ 0.22799\\ 0.228127\\ 0.228456\\ 0.228731\\ 0.228886\\ 0.229671\\ 0.231668\\ 0.234597\\ 0.16276\\ 0.162262\\ 0.161937\\ 0.16159\\ 0.161086\\ 0.160521\\ 0.159809\\ 0.159457\\ 0.158976\\ 0.158744\\ };

% k4
\addplot+ [mark=o, mark size = 0.5, mark color=grey,
	boxplot prepared={
draw position=30,
box extend=0.7,
lower whisker= 0.161018,
	lower quartile= 0.177768,
	median= 0.189321,
	upper quartile= 0.198472,
	upper whisker= 0.22093,
}] table[row sep=\\,y index=0] {0.22112\\ 0.22133\\ 0.221756\\ 0.221991\\ 0.222474\\ 0.222716\\ 0.222947\\ 0.223311\\ 0.223704\\ 0.22593\\ 0.160829\\ 0.160605\\ 0.16042\\ 0.160204\\ 0.159988\\ 0.159852\\ 0.159623\\ 0.159417\\ 0.159055\\ 0.158655\\ };

% ideal
\addplot+ [mark=o, mark size = 0.5, mark color=grey,
	boxplot prepared={
draw position=31,
box extend=0.7,
lower whisker= 0.15783,
	lower quartile= 0.171818,
	median= 0.180708,
	upper quartile= 0.186161176893,
	upper whisker= 0.217922,
}] table[row sep=\\,y index=0] {0.217971\\ 0.218015\\ 0.218049\\ 0.218392\\ 0.218547\\ 0.219529\\ 0.220257\\ 0.222102\\ 0.222479\\ 0.224882\\ 0.157601\\ 0.157438\\ 0.157295\\ 0.157165\\ 0.157064\\ 0.156973\\ 0.156846\\ 0.156673\\ 0.15634\\ 0.155877\\ };

%% fab8
% nte
\addplot+ [mark=o, mark size = 0.5, mark color=grey,
	boxplot prepared={
draw position=33,
box extend=0.7,
lower whisker= 0.20304,
	lower quartile= 0.246492,
	median= 0.291093,
	upper quartile= 0.332617,
	upper whisker= 0.475424,
}] table[row sep=\\,y index=0] {0.476725\\ 0.477233\\ 0.479406\\ 0.480895\\ 0.484106\\ 0.488567\\ 0.490918\\ 0.498481\\ 0.50492\\ 0.534577\\ 0.20232\\ 0.201522\\ 0.200591\\ 0.199832\\ 0.198734\\ 0.197451\\ 0.195871\\ 0.193372\\ 0.186937\\ 0.181109\\ };

% k4
\addplot+ [mark=o, mark size = 0.5, mark color=grey,
	boxplot prepared={
draw position=34,
box extend=0.7,
lower whisker= 0.196147,
	lower quartile= 0.218882,
	median= 0.23379,
	upper quartile= 0.253991,
	upper whisker= 0.310664,
}] table[row sep=\\,y index=0] {0.311082\\ 0.312331\\ 0.312461\\ 0.312862\\ 0.313672\\ 0.315734\\ 0.31678\\ 0.318286\\ 0.321105\\ 0.342088\\ 0.195711\\ 0.195209\\ 0.194672\\ 0.193979\\ 0.19335\\ 0.192477\\ 0.191211\\ 0.189769\\ 0.186464\\ 0.181278\\ };

% ideal
\addplot+ [mark=o, mark size = 0.5, mark color=grey,
	boxplot prepared={
draw position=35,
box extend=0.7,
lower whisker= 0.178773,
	lower quartile= 0.210241,
	median= 0.231834,
	upper quartile= 0.242052516728,
	upper whisker= 0.306984684159,
}] table[row sep=\\,y index=0] {0.308163861768\\ 0.308254260884\\ 0.31286748026\\ 0.317284689632\\ 0.331832620709\\ 0.178194\\ 0.177434\\ 0.1769\\ 0.17653\\ 0.175857\\ 0.175256\\ 0.174465\\ 0.173251\\ 0.170642\\ 0.168151\\ };

%% fab9
% nte
\addplot+ [mark=o, mark size = 0.5, mark color=grey,
	boxplot prepared={
draw position=37,
box extend=0.7,
lower whisker= 0.179323,
	lower quartile= 0.19363,
	median= 0.207802,
	upper quartile= 0.220565,
	upper whisker= 0.27708,
}] table[row sep=\\,y index=0] {0.277561\\ 0.278189\\ 0.278396\\ 0.279593\\ 0.280838\\ 0.281252\\ 0.282495\\ 0.284856\\ 0.286363\\ 0.299255\\ 0.179055\\ 0.178694\\ 0.17839\\ 0.178042\\ 0.177697\\ 0.177431\\ 0.176958\\ 0.176511\\ 0.174711\\ 0.172686\\ };

% k4
\addplot+ [mark=o, mark size = 0.5, mark color=grey,
	boxplot prepared={
draw position=38,
box extend=0.7,
lower whisker= 0.170977,
	lower quartile= 0.185433,
	median= 0.194282,
	upper quartile= 0.203281,
	upper whisker= 0.24485,
}] table[row sep=\\,y index=0] {0.245086\\ 0.245399\\ 0.245984\\ 0.246139\\ 0.248193\\ 0.249997\\ 0.250647\\ 0.252016\\ 0.253935\\ 0.265391\\ 0.17079\\ 0.170569\\ 0.170313\\ 0.170068\\ 0.1697\\ 0.16935\\ 0.168883\\ 0.168306\\ 0.167274\\ 0.165919\\ };

% ideal
\addplot+ [mark=o, mark size = 0.5, mark color=grey,
	boxplot prepared={
draw position=39,
box extend=0.7,
lower whisker= 0.170359,
	lower quartile= 0.178765155007,
	median= 0.181982861504,
	upper quartile= 0.190683480507,
	upper whisker= 0.231667441379,
}] table[row sep=\\,y index=0] {0.236498082828\\ 0.244688293141\\ 0.241680939654\\ 0.242606539113\\ 0.237047407483\\ 0.236060000516\\ 0.239758173549\\ 0.2446758723\\ 0.24901004741\\ 0.252497640109\\ 0.170034\\ 0.169777\\ 0.169528\\ 0.169194\\ 0.168863\\ 0.168327\\ 0.167906\\ 0.167309\\ 0.165503\\ 0.164371\\ };

%% fab10
% nte
\addplot+ [mark=o, mark size = 0.5, mark color=grey,
	boxplot prepared={
draw position=41,
box extend=0.7,
lower whisker= 0.189471,
	lower quartile= 0.374003,
	median= 0.406849,
	upper quartile= 0.4753,
	upper whisker= 0.564591,
}] table[row sep=\\,y index=0] {0.68176\\ 0.727455\\ 0.741831\\ 0.809656\\ 0.812846\\ 0.187587\\ 0.185075\\ 0.181389\\ 0.178666\\ 0.176276\\ 0.173199\\ 0.170382\\ 0.16603\\ 0.158847\\ 0.157749\\ };

% k4
\addplot+ [mark=o, mark size = 0.5, mark color=grey,
	boxplot prepared={
draw position=42,
box extend=0.7,
lower whisker= 0.172507,
	lower quartile= 0.264742,
	median= 0.360171,
	upper quartile= 0.393304,
	upper whisker= 0.560021,
}] table[row sep=\\,y index=0] {0.561245\\ 0.56163\\ 0.562188\\ 0.578183\\ 0.17096\\ 0.169281\\ 0.167665\\ 0.165913\\ 0.164081\\ 0.162362\\ 0.160788\\ 0.159042\\ 0.154314\\ 0.151773\\ };

% ideal
\addplot+ [mark=o, mark size = 0.5, mark color=grey,
	boxplot prepared={
draw position=43,
box extend=0.7,
lower whisker= 0.166726992546,
	lower quartile= 0.26329406078,
	median= 0.356271467714,
	upper quartile= 0.380938145534,
	upper whisker= 0.457076800796,
}] table[row sep=\\,y index=0] {0.45816671914\\ 0.457626992521\\ 0.464146882207\\ 0.463634512949\\ 0.461801210179\\ 0.464100732021\\ 0.159992347054\\ 0.16567\\ 0.163996\\ 0.162116\\ 0.157922589152\\ 0.154016398634\\ 0.154253\\ 0.137711857753\\ };

%% fab11
% nte
\addplot+ [mark=o, mark size = 0.5, mark color=grey,
	boxplot prepared={
draw position=45,
box extend=0.7,
lower whisker= 0.159775,
	lower quartile= 0.168303,
	median= 0.176289,
	upper quartile= 0.186131,
	upper whisker= 0.218669,
}] table[row sep=\\,y index=0] {0.219339\\ 0.219657\\ 0.220393\\ 0.222778\\ 0.224928\\ 0.225387\\ 0.227946\\ 0.248633\\ 0.285504\\ 0.570703\\ 0.159683\\ 0.159544\\ 0.159418\\ 0.159215\\ 0.15899\\ 0.158754\\ 0.158415\\ 0.157948\\ 0.156661\\ 0.156223\\ };

% k4
\addplot+ [mark=o, mark size = 0.5, mark color=grey,
	boxplot prepared={
draw position=46,
box extend=0.7,
lower whisker= 0.159689,
	lower quartile= 0.169729,
	median= 0.175762,
	upper quartile= 0.183393,
	upper whisker= 0.214373,
}] table[row sep=\\,y index=0] {0.215046\\ 0.216233\\ 0.217509\\ 0.21828\\ 0.218928\\ 0.219922\\ 0.221366\\ 0.223394\\ 0.238284\\ 0.423861\\ 0.159565\\ 0.159378\\ 0.159242\\ 0.158944\\ 0.158678\\ 0.15845\\ 0.158089\\ 0.157609\\ 0.155989\\ 0.155625\\ };

% ideal
\addplot+ [mark=o, mark size = 0.5, mark color=grey,
	boxplot prepared={
draw position=47,
box extend=0.7,
lower whisker= 0.157768,
	lower quartile= 0.16537,
	median= 0.171922,
	upper quartile= 0.181354,
	upper whisker= 0.205019857848,
}] table[row sep=\\,y index=0] {0.213791967566\\ 0.20869764991\\ 0.213082829691\\ 0.208288057043\\ 0.219062078039\\ 0.216414157259\\ 0.236886\\ 0.42330054034\\ 0.157648\\ 0.157521\\ 0.157376\\ 0.157254\\ 0.157107\\ 0.15679\\ 0.156495\\ 0.156048\\ 0.155169\\ 0.154877\\ };
\end{axis}
\end{tikzpicture}
\vspace{-5pt}
\caption{\small ALU}
\end{subfigure} \\

%% Hop count (stretch)
\begin{subfigure}[c]{0.97\textwidth}
\centering
\begin{tikzpicture}
\begin{axis}[
boxplot/draw direction=y,
width=1.0\linewidth,
height=3.9cm,
axis y line=left, 
axis x line*=bottom,
xtick={2, 6, 10, 14, 18, 22, 26, 30, 34, 38, 42, 46},
xticklabels={A, B, C, D, E, F, G, H, I, J, K, L},
ylabel=\scriptsize Bandwidth Tax,
xlabel=\scriptsize Fabric,
xlabel shift = -3pt,
ytick={5.0, 5.6, 6.2, 6.8, 7.4, 8.0}, 
yticklabels={0,0.2,0.4,0.6,0.8,1.0},
legend style={draw=none, fill=none, at={(0.0,1.00)}, anchor=north west,legend columns=2},
legend cell align=left,
legend image code/.code={
    \draw[#1, draw=none] (0cm,-0.1cm) rectangle (0.12cm,0.03cm);
},
cycle list={{fill=red},{fill=blue}, {fill=gray}}, %helps pgfplot cycle through plot options
]
%% fab0
%nte
\addplot+ [mark=o, mark size = 0.5, mark color=grey, 
boxplot prepared={
	draw position=1,
box extend=0.7,
lower whisker= 5.774768,
	lower quartile= 5.835998,
	median= 5.875289,
upper quartile= 6.212846,
	upper whisker= 6.364703,
}] 
table[row sep=\\,y index=0] {6.365702\\ 6.367223\\ 6.371201\\ 6.37217\\ 6.376769\\ 6.378068\\ 6.382772\\ 6.385007\\ 6.394262\\ 6.464057\\ 5.773463\\ 5.771618\\ 5.769452\\ 5.76719\\ 5.76353\\ 5.759681\\ 5.754305\\ 5.746838\\ 5.729042\\ 5.572163\\ };
\addlegendentry{\tiny Unif. Mesh}
%k4
\addplot+ [mark=o, mark size = 0.5, mark color=grey, 
boxplot prepared={
	draw position=2,
box extend=0.7,
lower whisker= 5.120759,
	lower quartile= 5.174219,
	median= 5.227667,
	upper quartile= 5.277962,
	upper whisker= 5.49095,
}]
table[row sep=\\,y index=0] {5.491946\\ 5.492573\\ 5.495723\\ 5.500742\\ 5.505614\\ 5.506862\\ 5.5145\\ 5.521853\\ 5.531474\\ 5.588471\\ 5.119769\\ 5.118431\\ 5.116937\\ 5.115383\\ 5.113706\\ 5.111909\\ 5.108378\\ 5.104982\\ 5.093906\\ 5.087639\\ };
\addlegendentry{\tiny {\metteor} ($\kappa = 4$)}
%ideal
\addplot+ [mark=o, mark size = 0.5, mark color=grey, 
boxplot prepared={
	draw position=3,
box extend=0.7,
lower whisker= 5.093027,
	lower quartile= 5.15774,
	median= 5.201504,
	upper quartile= 5.260265,
	upper whisker= 5.450048,
}] 
table[row sep=\\,y index=0] {5.45282\\ 5.456147\\ 5.459297\\ 5.460836\\ 5.465999\\ 5.469923\\ 5.476826\\ 5.487233\\ 5.493791\\ 5.53773941686\\ 5.091155\\ 5.089862\\ 5.088227\\ 5.086292\\ 5.083337\\ 5.079716\\ 5.075777\\ 5.069603\\ 5.054111\\ 5.015501\\ };
\addlegendentry{\tiny Ideal ToE}

%% fab1
%nte
\addplot+ [mark=o, mark size = 0.5, mark color=grey, 
boxplot prepared={
	draw position=5,
box extend=0.7,
lower whisker= 5.341391,
	lower quartile= 5.430947,
	median= 5.501081,
	upper quartile= 5.667299,
	upper whisker= 6.077669,
}]
table[row sep=\\,y index=0] {6.079118\\ 6.083495\\ 6.084449\\ 6.085754\\ 6.089381\\ 6.095348\\ 6.098492\\ 6.099599\\ 6.1007\\ 6.138575\\ 5.339009\\ 5.33639\\ 5.334554\\ 5.332454\\ 5.329028\\ 5.325518\\ 5.320442\\ 5.313608\\ 5.295701\\ 5.273651\\ };

%k4
\addplot+ [mark=o, mark size = 0.5, mark color=grey, 
boxplot prepared={
	draw position=6,
box extend=0.7,
lower whisker= 5.031491,
	lower quartile= 5.059592,
	median= 5.079512,
	upper quartile= 5.117405,
	upper whisker= 5.318552,
}] 
table[row sep=\\,y index=0] {5.319173\\ 5.321234\\ 5.32238\\ 5.324465\\ 5.327486\\ 5.33024\\ 5.333114\\ 5.335139\\ 5.339696\\ 5.382041\\ 5.031146\\ 5.030567\\ 5.030048\\ 5.029472\\ 5.028755\\ 5.028038\\ 5.027012\\ 5.025656\\ 5.022722\\ 5.019209\\ };

%ideal
\addplot+ [mark=o, mark size = 0.5, mark color=grey, 
boxplot prepared={
	draw position=7,
box extend=0.7,
lower whisker= 5.018333,
	lower quartile= 5.042135,
	median= 5.058578,
	upper quartile= 5.083427,
	upper whisker= 5.282144,
}] 
table[row sep=\\,y index=0] {5.284172\\ 5.287268\\ 5.28915402217\\ 5.293832\\ 5.28215488708\\ 5.29429347485\\ 5.28544153961\\ 5.30371241204\\ 5.30465063164\\ 5.338418\\ 5.01794\\ 5.017517\\ 5.017016\\ 5.016506\\ 5.015786\\ 5.01503\\ 5.013953\\ 5.012654\\ 5.009576\\ 5.006747\\ };

%% fab2
%nte
\addplot+ [mark=o, mark size = 0.5, mark color=grey, 
boxplot prepared={
	draw position=9,
box extend=0.7,
lower whisker= 5.072504,
	lower quartile= 5.110472,
	median= 5.158952,
	upper quartile= 5.253401,
	upper whisker= 6.59375,
}] 
table[row sep=\\,y index=0] {6.599609\\ 7.203125\\ 7.203125\\ 7.203125\\ 7.203125\\ 7.203125\\ 7.203125\\ 7.203125\\ 7.203125\\ 7.867187\\ 5.071904\\ 5.071286\\ 5.070617\\ 5.069813\\ 5.068646\\ 5.067641\\ 5.066441\\ 5.06501\\ 5.059994\\ 5.051828\\ };

%k4
\addplot+ [mark=o, mark size = 0.5, mark color=grey, 
boxplot prepared={
	draw position=10,
box extend=0.7,
lower whisker= 5.021375,
	lower quartile= 5.043239,
	median= 5.05865,
	upper quartile= 5.083181,
	upper whisker= 6.029519,
}] 
table[row sep=\\,y index=0] {6.057197\\ 6.704798\\ 6.704798\\ 6.704798\\ 6.704798\\ 6.704798\\ 6.704798\\ 6.704798\\ 6.704798\\ 5.020934\\ 5.02055\\ 5.020052\\ 5.019557\\ 5.01893\\ 5.018264\\ 5.017538\\ 5.016443\\ 5.013719\\ 5.011445\\ };

%ideal
\addplot+ [mark=o, mark size = 0.5, mark color=grey, 
boxplot prepared={
	draw position=11,
box extend=0.7,
lower whisker= 5.008838,
	lower quartile= 5.025698,
	median= 5.035271,
	upper quartile= 5.048618,
	upper whisker= 5.144084,
}] 
table[row sep=\\,y index=0] {5.144852\\ 5.145743\\ 5.146808\\ 5.148788\\ 5.15075\\ 5.158286\\ 5.164007\\ 5.16797\\ 5.171645\\ 5.304296\\ 5.008433\\ 5.008028\\ 5.007674\\ 5.007302\\ 5.006801\\ 5.006174\\ 5.005388\\ 5.004293\\ 5.0\\ 5.0\\ };

%% fab3
%nte
\addplot+ [mark=o, mark size = 0.5, mark color=grey, 
boxplot prepared={
	draw position=13,
box extend=0.7,
lower whisker= 6.005072,
	lower quartile= 6.424025,
	median= 6.624554,
	upper quartile= 6.769289,
	upper whisker= 6.954431,
}] 
table[row sep=\\,y index=0] {6.955781\\ 6.9581\\ 6.95924\\ 6.960512\\ 6.963752\\ 6.967202\\ 6.967694\\ 6.970841\\ 6.975143\\ 7.013831\\ 5.995574\\ 5.98244\\ 5.967776\\ 5.954852\\ 5.936132\\ 5.904827\\ 5.877272\\ 5.844779\\ 5.781134\\ 5.621381\\ };

%k4
\addplot+ [mark=o, mark size = 0.5, mark color=grey, 
boxplot prepared={
	draw position=14,
box extend=0.7,
lower whisker= 5.12606,
	lower quartile= 5.239571,
	median= 5.319668,
	upper quartile= 5.428841,
	upper whisker= 5.884442,
}] 
table[row sep=\\,y index=0] {5.888159\\ 5.89016\\ 5.893511\\ 5.89703\\ 5.899448\\ 5.901602\\ 5.909837\\ 5.919083\\ 5.929157\\ 5.973284\\ 5.123546\\ 5.121944\\ 5.119736\\ 5.11709\\ 5.114564\\ 5.111081\\ 5.105315\\ 5.097509\\ 5.076911\\ 5.064023\\ };

%ideal
\addplot+ [mark=o, mark size = 0.5, mark color=grey, 
boxplot prepared={
	draw position=15,
box extend=0.7,
lower whisker= 5.03912,
	lower quartile= 5.069378,
	median= 5.089022,
	upper quartile= 5.112974,
	upper whisker= 5.248103,
}] 
table[row sep=\\,y index=0] {5.249747\\ 5.250323\\ 5.253239\\ 5.255795\\ 5.258051\\ 5.264012\\ 5.266964\\ 5.278082\\ 5.284385\\ 5.33993\\ 5.038625\\ 5.037731\\ 5.037104\\ 5.036111\\ 5.035136\\ 5.034077\\ 5.032526\\ 5.029448\\ 5.023211\\ 5.012936\\ };

%% fab4
%nte
\addplot+ [mark=o, mark size = 0.5, mark color=grey, 
boxplot prepared={
	draw position=17,
box extend=0.7,
lower whisker= 5.110364,
	lower quartile= 5.180738,
	median= 5.24861,
	upper quartile= 5.727593,
	upper whisker= 6.510389,
}] 
table[row sep=\\,y index=0] {6.511955\\ 6.51356\\ 6.516308\\ 6.520895\\ 6.525695\\ 6.529508\\ 6.536078\\ 6.542078\\ 6.5498\\ 6.673817\\ 5.108795\\ 5.107268\\ 5.10563\\ 5.10386\\ 5.102129\\ 5.100194\\ 5.097806\\ 5.094467\\ 5.086115\\ 5.073428\\ };

%k4
\addplot+ [mark=o, mark size = 0.5, mark color=grey, 
boxplot prepared={
	draw position=18,
box extend=0.7,
lower whisker= 5.01599,
	lower quartile= 5.034251,
	median= 5.076578,
	upper quartile= 5.188382,
	upper whisker= 5.318996,
}] 
table[row sep=\\,y index=0] {5.319428\\ 5.321423\\ 5.323304\\ 5.323928\\ 5.326361\\ 5.327528\\ 5.329571\\ 5.331185\\ 5.333177\\ 5.381867\\ 5.01575\\ 5.015498\\ 5.015249\\ 5.014895\\ 5.014589\\ 5.0141\\ 5.013512\\ 5.012618\\ 5.010944\\ 5.010221\\ };

%ideal
\addplot+ [mark=o, mark size = 0.5, mark color=grey, 
boxplot prepared={
	draw position=19,
box extend=0.7,
lower whisker= 5.004854,
	lower quartile= 5.015252,
	median= 5.022131,
	upper quartile= 5.032442,
	upper whisker= 5.185814,
}] 
table[row sep=\\,y index=0] {5.187698\\ 5.189423\\ 5.193263\\ 5.19587\\ 5.197856\\ 5.203931\\ 5.209532\\ 5.217401\\ 5.225942\\ 5.268071\\ 5.004644\\ 5.004452\\ 5.004236\\ 5.00402\\ 5.003729\\ 5.003408\\ 5.003033\\ 5.002517\\ 5.001692\\ 5.000318\\ };

%% fab5
%nte
\addplot+ [mark=o, mark size = 0.5, mark color=grey, 
boxplot prepared={
	draw position=21,
box extend=0.7,
lower whisker= 5.363126,
	lower quartile= 5.614868,
	median= 5.753021,
	upper quartile= 5.839583,
	upper whisker= 6.166193,
}] 
table[row sep=\\,y index=0] {6.168455\\ 6.171917\\ 6.177767\\ 6.180326\\ 6.185645\\ 6.1919\\ 6.195902\\ 6.209342\\ 6.211457\\ 6.274328\\ 5.361605\\ 5.359844\\ 5.357507\\ 5.355617\\ 5.352839\\ 5.349506\\ 5.344391\\ 5.335319\\ 5.320319\\ 5.303198\\ };
%k4
\addplot+ [mark=o, mark size = 0.5, mark color=grey, 
boxplot prepared={
	draw position=22,
box extend=0.7,
lower whisker= 5.122448,
	lower quartile= 5.246135,
	median= 5.301947,
	upper quartile= 5.381738,
	upper whisker= 5.694371,
}] 
table[row sep=\\,y index=0] {5.696639\\ 5.702294\\ 5.704916\\ 5.707955\\ 5.711111\\ 5.721044\\ 5.723981\\ 5.732561\\ 5.739107\\ 5.783615\\ 5.120798\\ 5.119115\\ 5.117186\\ 5.115059\\ 5.112512\\ 5.110391\\ 5.107892\\ 5.104343\\ 5.094734\\ 5.090777\\ };
%ideal
\addplot+ [mark=o, mark size = 0.5, mark color=grey, 
boxplot prepared={
	draw position=23,
box extend=0.7,
lower whisker= 5.055161,
	lower quartile= 5.111204,
	median= 5.155706,
	upper quartile= 5.228246,
	upper whisker= 5.518481,
}] 
table[row sep=\\,y index=0] {5.521904\\ 5.526506\\ 5.528783\\ 5.531591\\ 5.535161\\ 5.538254\\ 5.547074\\ 5.549933\\ 5.562809\\ 5.663798\\ 5.053916\\ 5.052686\\ 5.051333\\ 5.049455\\ 5.047199\\ 5.045231\\ 5.042438\\ 5.03681\\ 5.027054\\ 5.020694\\ };

%% fab6
% nte
\addplot+ [mark=o, mark size = 0.5, mark color=grey, 
boxplot prepared={
	draw position=25,
box extend=0.7,
lower whisker= 5.731838,
	lower quartile= 5.916068,
	median= 6.027872,
	upper quartile= 6.14171,
	upper whisker= 6.618254,
}] table[row sep=\\,y index=0] {6.620711\\ 6.625262\\ 6.628808\\ 6.630548\\ 6.632489\\ 6.637271\\ 6.639779\\ 6.644225\\ 6.649952\\ 6.692762\\ 5.726918\\ 5.721593\\ 5.715029\\ 5.70812\\ 5.696963\\ 5.683979\\ 5.670308\\ 5.651204\\ 5.617283\\ 5.591222\\ };

% k4
\addplot+ [mark=o, mark size = 0.5, mark color=grey, 
boxplot prepared={
	draw position=26,
box extend=0.7,
lower whisker= 5.083532,
	lower quartile= 5.244638,
	median= 5.301302,
	upper quartile= 5.395793,
	upper whisker= 5.76899,
}] table[row sep=\\,y index=0] {5.771564\\ 5.772716\\ 5.775554\\ 5.78006\\ 5.782454\\ 5.786474\\ 5.792321\\ 5.803433\\ 5.821139\\ 5.903963\\ 5.081879\\ 5.07992\\ 5.077769\\ 5.075777\\ 5.073683\\ 5.071094\\ 5.068508\\ 5.06585\\ 5.058383\\ 5.052593\\ };

% ideal
\addplot+ [mark=o, mark size = 0.5, mark color=grey, 
boxplot prepared={
	draw position=27,
box extend=0.7,
lower whisker= 5.021888,
	lower quartile= 5.056547,
	median= 5.070947,
	upper quartile= 5.089466,
	upper whisker= 5.191394,
}] table[row sep=\\,y index=0] {5.192744\\ 5.193614\\ 5.195\\ 5.196281\\ 5.19794\\ 5.200811\\ 5.201576\\ 5.205134\\ 5.211335\\ 5.241647\\ 5.021237\\ 5.020619\\ 5.019959\\ 5.019056\\ 5.018258\\ 5.017082\\ 5.016059\\ 5.014385\\ 5.010218\\ 5.008007\\ };

%% fab7
% nte
\addplot+ [mark=o, mark size = 0.5, mark color=grey, 
boxplot prepared={
	draw position=29,
box extend=0.7,
lower whisker= 5.257913,
	lower quartile= 5.467682,
	median= 5.599412,
	upper quartile= 5.743367,
	upper whisker= 6.23102,
}] table[row sep=\\,y index=0] {6.235232\\ 6.237722\\ 6.23897\\ 6.241457\\ 6.246122\\ 6.249614\\ 6.254816\\ 6.261134\\ 6.269792\\ 6.348263\\ 5.254385\\ 5.251229\\ 5.247329\\ 5.243585\\ 5.239151\\ 5.233109\\ 5.227241\\ 5.21873\\ 5.201456\\ 5.178692\\ };

% k4
\addplot+ [mark=o, mark size = 0.5, mark color=grey, 
boxplot prepared={
	draw position=30,
box extend=0.7,
lower whisker= 5.105672,
	lower quartile= 5.285075,
	median= 5.47502,
	upper quartile= 5.651462,
	upper whisker= 6.041858,
}] table[row sep=\\,y index=0] {6.047243\\ 6.050405\\ 6.057224\\ 6.060434\\ 6.062486\\ 6.076766\\ 6.082805\\ 6.098231\\ 6.117842\\ 6.196802\\ 5.103125\\ 5.099654\\ 5.096171\\ 5.093135\\ 5.090402\\ 5.085236\\ 5.081561\\ 5.076932\\ 5.066735\\ 5.057498\\ };

% ideal
\addplot+ [mark=o, mark size = 0.5, mark color=grey, 
boxplot prepared={
	draw position=31,
box extend=0.7,
lower whisker= 5.023904,
	lower quartile= 5.057651,
	median= 5.095007,
	upper quartile= 5.179625,
	upper whisker= 5.712593,
}] table[row sep=\\,y index=0] {5.716184\\ 5.718905\\ 5.723717\\ 5.725829\\ 5.727665\\ 5.731004\\ 5.739164\\ 5.74562\\ 5.755037\\ 5.866241\\ 5.023382\\ 5.022857\\ 5.02241\\ 5.021678\\ 5.021003\\ 5.020112\\ 5.019218\\ 5.017661\\ 5.014415\\ 5.011826\\ };

%% fab8
% nte
\addplot+ [mark=o, mark size = 0.5, mark color=grey, 
boxplot prepared={
	draw position=33,
box extend=0.7,
lower whisker= 5.085803,
	lower quartile= 5.179634,
	median= 5.296457,
	upper quartile= 5.445017,
	upper whisker= 6.374105,
}] table[row sep=\\,y index=0] {6.381284\\ 6.393083\\ 6.398555\\ 6.404261\\ 6.410894\\ 6.424499\\ 6.435227\\ 6.458465\\ 6.473537\\ 6.54074\\ 5.084609\\ 5.083283\\ 5.082269\\ 5.081012\\ 5.079083\\ 5.077754\\ 5.075549\\ 5.072069\\ 5.066501\\ 5.062634\\ };

% k4
\addplot+ [mark=o, mark size = 0.5, mark color=grey, 
boxplot prepared={
	draw position=34,
box extend=0.7,
lower whisker= 5.044355,
	lower quartile= 5.200247,
	median= 5.334371,
	upper quartile= 5.503454,
	upper whisker= 5.946992,
}] table[row sep=\\,y index=0] {5.948705\\ 5.953742\\ 5.962691\\ 5.96525\\ 5.975057\\ 5.979629\\ 5.997431\\ 6.015926\\ 6.025262\\ 6.1118\\ 5.043056\\ 5.042045\\ 5.040464\\ 5.039111\\ 5.037767\\ 5.036036\\ 5.033804\\ 5.030483\\ 5.025764\\ 5.02076\\ };

% ideal
\addplot+ [mark=o, mark size = 0.5, mark color=grey, 
boxplot prepared={
	draw position=35,
box extend=0.7,
lower whisker= 5.019365,
	lower quartile= 5.062658,
	median= 5.10248,
	upper quartile= 5.14736,
	upper whisker= 5.390702,
}] table[row sep=\\,y index=0] {5.392877\\ 5.39612\\ 5.39864\\ 5.40428\\ 5.407316\\ 5.415452\\ 5.419379\\ 5.426744\\ 5.45153\\ 5.511647\\ 5.019017\\ 5.018516\\ 5.017913\\ 5.017205\\ 5.016569\\ 5.015507\\ 5.014874\\ 5.013479\\ 5.010311\\ 5.007773\\ };

%% fab9
% nte
\addplot+ [mark=o, mark size = 0.5, mark color=grey, 
boxplot prepared={
	draw position=37,
box extend=0.7,
lower whisker= 5.733221,
	lower quartile= 6.181958,
	median= 6.314276,
	upper quartile= 6.429371,
	upper whisker= 6.795872,
}] table[row sep=\\,y index=0] {6.797198\\ 6.799484\\ 6.807746\\ 6.809804\\ 6.813617\\ 6.815057\\ 6.821108\\ 6.827057\\ 6.843128\\ 6.878081\\ 5.722217\\ 5.71298\\ 5.70179\\ 5.687636\\ 5.676389\\ 5.662013\\ 5.639654\\ 5.610353\\ 5.49449\\ 5.462975\\ };

% k4
\addplot+ [mark=o, mark size = 0.5, mark color=grey, 
boxplot prepared={
	draw position=38,
box extend=0.7,
lower whisker= 5.306348,
	lower quartile= 5.652023,
	median= 5.839853,
	upper quartile= 5.982593,
	upper whisker= 6.456089,
}] table[row sep=\\,y index=0] {6.470714\\ 6.486605\\ 6.490832\\ 6.499133\\ 6.5102\\ 6.516194\\ 6.518405\\ 6.543953\\ 6.571292\\ 6.729188\\ 5.301485\\ 5.297093\\ 5.290076\\ 5.278475\\ 5.268926\\ 5.256608\\ 5.245454\\ 5.231336\\ 5.204792\\ 5.182496\\ };

% ideal
\addplot+ [mark=o, mark size = 0.5, mark color=grey, 
boxplot prepared={
	draw position=39,
box extend=0.7,
lower whisker= 5.062235,
	lower quartile= 5.608631,
	median= 5.82801794855,
	upper quartile= 5.96834912888,
	upper whisker= 6.45386387111,
}] table[row sep=\\,y index=0] {6.45848321815\\ 6.47767436966\\ 6.4864607775\\ 6.48711477852\\ 6.49611189468\\ 6.50786881453\\ 6.50463046959\\ 6.53907852993\\ 6.57123076964\\ 6.72613191559\\ 5.060513\\ 5.058644\\ 5.056679\\ 5.054549\\ 5.051981\\ 5.04896\\ 5.044784\\ 5.038805\\ 5.026376\\ 5.023781\\ };

%% fab10
% nte
\addplot+ [mark=o, mark size = 0.5, mark color=grey, 
boxplot prepared={
	draw position=41,
box extend=0.7,
lower whisker= 5.05172,
	lower quartile= 5.080718,
	median= 5.145884,
	upper quartile= 5.248238,
	upper whisker= 5.873519,
}] table[row sep=\\,y index=0] {6.381284\\ 6.393083\\ 6.398555\\ 6.404261\\ 6.410894\\ 6.424499\\ 6.435227\\ 6.458465\\ 6.473537\\ 6.54074\\ 5.084609\\ 5.083283\\ 5.082269\\ 5.081012\\ 5.079083\\ 5.077754\\ 5.075549\\ 5.072069\\ 5.066501\\ 5.062634\\ };

% k4
\addplot+ [mark=o, mark size = 0.5, mark color=grey, 
boxplot prepared={
	draw position=42,
box extend=0.7,
lower whisker= 5.007146,
	lower quartile= 5.025416,
	median= 5.071907,
	upper quartile= 5.220884,
	upper whisker= 5.897849,
}] table[row sep=\\,y index=0] {6.17879\\ 6.189821\\ 6.197756\\ 6.203057\\ 6.21596\\ 6.233285\\ 6.253259\\ 6.294986\\ 6.324626\\ 6.431471\\ 5.006963\\ 5.006732\\ 5.006531\\ 5.00633\\ 5.006033\\ 5.005568\\ 5.005025\\ 5.004389\\ 5.002835\\ 5.002313\\ };

% ideal
\addplot+ [mark=o, mark size = 0.5, mark color=grey, 
boxplot prepared={
	draw position=43,
box extend=0.7,
lower whisker= 5.002931,
	lower quartile= 5.010557,
	median= 5.054393,
	upper quartile= 5.08601,
	upper whisker= 5.523557,
}] table[row sep=\\,y index=0] {5.535383\\ 5.54276\\ 5.575073\\ 5.582642\\ 5.608157\\ 5.619788\\ 5.633207\\ 5.65343\\ 5.693801\\ 5.91299\\ 5.002841\\ 5.002736\\ 5.002637\\ 5.00255\\ 5.00246\\ 5.002328\\ 5.00207\\ 5.001797\\ 5.001152\\ 5.001041\\ };

%% fab11
% nte
\addplot+ [mark=o, mark size = 0.5, mark color=grey, 
boxplot prepared={
	draw position=45,
box extend=0.7,
lower whisker= 5.298668,
	lower quartile= 5.461076,
	median= 5.5901,
	upper quartile= 6.011153,
	upper whisker= 6.468503,
}] table[row sep=\\,y index=0] {6.469649\\ 6.47324\\ 6.475982\\ 6.479108\\ 6.490034\\ 6.505025\\ 6.523178\\ 6.532661\\ 6.54824\\ 6.879443\\ 5.295806\\ 5.292743\\ 5.290448\\ 5.286248\\ 5.281325\\ 5.276636\\ 5.270132\\ 5.259938\\ 5.234642\\ 5.226644\\ };

% k4
\addplot+ [mark=o, mark size = 0.5, mark color=grey, 
boxplot prepared={
	draw position=46,
box extend=0.7,
lower whisker= 5.136825,
	lower quartile= 5.180761,
	median= 5.464279,
	upper quartile= 5.649067,
	upper whisker= 6.459156,
}] table[row sep=\\,y index=0] {6.553755\\ 6.555492\\ 6.558858\\ 6.561939\\ 6.667942\\ 6.79963\\ 5.039108\\ 5.036789\\ 5.034323\\ 5.031695\\ 5.029205\\ 5.025482\\ 5.022128\\ 5.018048\\ 5.008736\\ 5.006825\\ };

% ideal
\addplot+ [mark=o, mark size = 0.5, mark color=grey, 
boxplot prepared={
	draw position=47,
box extend=0.7,
lower whisker= 5.041343,
	lower quartile= 5.169902,
	median= 5.328569,
	upper quartile= 5.596502,
	upper whisker= 6.440687,
}] table[row sep=\\,y index=0] {6.448583\\ 6.450002\\ 6.461957\\ 6.472112\\ 6.477188\\ 6.483056\\ 6.48809\\ 6.498644\\ 6.51407\\ 6.628745\\ 5.039108\\ 5.036789\\ 5.034323\\ 5.031695\\ 5.029205\\ 5.025482\\ 5.022128\\ 5.018048\\ 5.008736\\ 5.006825\\ };
\end{axis}
\end{tikzpicture}
\vspace{-5pt}
\caption{\small Ave. hop count}
\end{subfigure}
\vspace{-14pt}
\caption{\small Additional simulations results, using 6 months’ worth of 5-minute-averaged traffic snapshots from 12 different DCN fabrics. } 
\vspace{-7pt}
\label{fig:all_results_boxplot}
\end{figure*}

\section{Algorithms for BPM and LDM}\label{appendix:barrier_penalty_detailed_walkthrough}
Here, we fully flesh out the Barrier Penalty Method (BPM) and Lagrangian Dual Method (LDM), and provide the numerical algorithms needed for each method to work. Both methods are iterative, and will save the solutions that yields the lowest ratio of soft constraint violations encountered up till the current iteration. First, we introduce a goodness function for a feasible OCS switch configuration state, $\mathbf{x}$ used to keep track of the best solution thus far:
\begin{equation}\label{eqn:goodness_function}
\begin{aligned}
\Psi(\mathbf{x}) = \sum\limits_{i, j \in \mathcal{S}}\psi_{ij}
\end{aligned}
\end{equation}
Where $\psi_{ij}$ is an indicator variable that equals 1 when the $(i, j)$ pod pair’s soft constraints is satisfied, and 0 otherwise.

\subsection{Detailed Walkthrough for BPM}\label{appendix:barrier_penalty_detailed_walkthrough}
Even though (\ref{eqn:barrier_objective}) has relaxed the soft constraints, solving it to optimality directly is still challenging due to its quadratic objective function. We want a low-complexity algorithm with good objective value, rather than the optimal solution. To achieve this, we use first-order approximation on the objective function $U(\mathbf{x})$:
\begin{eqnarray}\label{eqn:barrier_linear_approx}
&&\min\limits_{\mathbf{x}} U(\mathbf{x}) \\
&&\approx \min\limits_{\mathbf{x}}\bigg\{ U(\mathbf{\hat{x}}) + \sum_{k=1}^K\sum_{i=1}^n\sum_{j=1}^n\Big(\frac{\partial U}{\partial {x_{ij}^k}}\bigg\rvert_{\mathbf{x}=\mathbf{\hat{x}}}\Big)\times \Big(x_{ij}^k-\hat{x}_{ij}^k\Big)\bigg\}\nonumber\\
&&= C+\sum_{k=1}^K \bigg\{\min\limits_{\mathbf{x}^k} \sum_{i=1}^n\sum_{j=1}^n\bigg[\sum_{k^{\prime}=1}^K 2\hat{x}_{ij}^{k^{\prime}} - (\ceil{d_{ij}^*} + \floor{d_{ij}^*})\bigg]x_{ij}^k\bigg\}\nonumber
\end{eqnarray}
where $\mathbf{\hat{x}}$ is an initial value of $\mathbf{x}$, and $C$ is a constant.

As the constraints in (\ref{constraint:ocslevel}), and the approximation form of $U(\mathbf{x})$ in (\ref{eqn:barrier_linear_approx}) are separable in $k$, we can solve for $\mathbf{x}$ iteratively, one OCS at a time, as follows:
\begin{eqnarray}\label{eqn:max_weighted_matching}
&\min\limits_{\mathbf{x}^k}& \sum_{i=1}^n\sum_{j=1}^n\bigg[\sum_{k^{\prime}=1}^K 2\hat{x}_{ij}^{k^{\prime}} - (\ceil{d_{ij}^*} + \floor{d_{ij}^*})\bigg]x_{ij}^k\\
& \text{s.t: }& \sum_{i=1}^n x_{ij}^k \leq h_{\text{ig}}^k(j), \sum_{j=1}^n x_{ij}^k \leq h_{\text{eg}}^k(i),\nonumber\\
&&\max\{\hat{x}_{ij}^k - 1, 0\}\leq x_{ij}^k\leq \hat{x}_{ij}^k + 1 \; \forall \; i,j=1,...,n\nonumber
\end{eqnarray}

We add a range for every $x_{ij}^k$ because the approximation in (\ref{eqn:barrier_linear_approx}) only works in the neighborhood of $\mathbf{\hat{x}}$.
(\ref{eqn:max_weighted_matching}) is easily solvable using min-cost circulation algorithms (see Appendix \ref{appendix:mincost_flow}). Further, since all the bounds (i.e., $h_{\text{ig}}^k(i)$, $h_{\text{eg}}^k(i)$) are integers, an integer solution of $x_{ij}^k$ is guaranteed. Due to space limits, Appendix \ref{appendix:barrier_penalty_detailed_walkthrough} provides the BPM pseudocode.

We have gone through the intuition of BPM in \S \ref{section_barrier_penalty}. Here, we provide the detailed pseudocode in Algorithm \ref{overall_barrier}.

\begin{algorithm}
 \KwData{
    \begin{itemize}
        \item $D^* = [d_{ij}^*] \in \mathbb{R}^{n \times n}$ - fractional topology
        \item $\tau_{max}$ - number of iterations
    \end{itemize}
 }
 \KwResult{
    $\mathbf{x}^{*} = [x_{ij}^k{}^*] \in \mathbb{Z}^{n^2K}$ - OCS switch states.
 }
 Initialize: $\hat{\mathbf{x}} := 0, \mathbf{x}^{*} := 0$\;
 \For{$\tau \in \{1, 2, ..., \tau_{max}\}$}{
  \For{$k \in \{1,2,...,K\}$}{
      Solve (\ref{eqn:max_weighted_matching}) based on Appendix \ref{appendix:mincost_flow} and let $x^k$ be the integer solution\;
      Update $\hat{\mathbf{x}}$ in the $k$-th OCS by setting $\hat{x}^k=x^k$\;
      \If{$\Psi(\mathbf{x}^*) < \Psi(\hat{\mathbf{x}})$} {
        $\mathbf{x}^{*} := \hat{\mathbf{x}}$\;
      }
   }
 }
 \caption{Barrier penalty method}\label{overall_barrier}
\end{algorithm}

Algorithm \ref{overall_barrier} is an iterative algorithm. Although only one OCS gets updated in each step, we obtain a new solution after combining other OCSs’ old states. The goodness function $\Psi(\mathbf{x})$ is used to track the best solution obtained so far. In our implementation, we use use (\ref{eqn:goodness_function}) as our goodness function. Many alternative goodness functions exists, though their relative merits are subject for future work.

\subsection{Detailed Walkthrough for LDM}\label{appendix:lagrangian_dual_detailed_walkthrough}
Lagrangian Dual method was motivated by the dual ascent method in~\cite{Boyd2011ADMM}. By introducing dual variables for soft constraints, LDM not only achieves graceful relaxation of soft constraints, but also relaxes the original NP-hard problem to a polynomial-time solvable problem. Nevertheless, LDM differs from the dual ascent method due to integer requirement. In this section, we detail the steps required for LDM to work.

\subsubsection{Primal Problem}\hfill\\
Our goal is to find an integer solution of $\mathbf{x} = [x_{ij}^k]$ satisfying the soft constraint (\ref{con_constraint}) and the hard constraints in (\ref{constraint:ocslevel}). In theory, there is no need for an objective function of $\mathbf{x}$ in our problem, since the problem itself is more concerned with satisfiability of the soft-constraints. However, this will lead to an algorithm with extremely poor convergence property. To speed up convergence, we introduce a strictly convex objective function for our primal problem, which is written as:
\begin{equation}
\begin{aligned}
    \mathbf{P: } \max_{\mathbf{x}} U(\mathbf{x}) &= \sum_{k=1}^K\sum_{i=1}^n\sum_{j=1}^n U_{ij}^k(x^k_{ij})\\
     \text{s.t : } & \quad (\ref{constraint:ocslevel}), \; (\ref{con_constraint}), \; \text{ are satisfied }\\
\end{aligned}
\label{primal_objective}
\end{equation}

At first, we chose $U_{ij}^k(x^k_{ij})=0$, which is not strictly convex. As expected, the solution does not converge even after running large number of iterations. We then chose $U_{ij}^k(x^k_{ij})=-(x^k_{ij})^2$, which introduces a sharper objective function landscape that facilitated superior convergence. However, this objective function will result in a solution of $\mathbf{x}$ that connects as fewer links as possible in each OCS, which not only wastes physical resources but also resulted in an overall decrease in network capacity. Finally, we went with:
\begin{equation}\label{primalU}
U_{ij}^k(x_{ij}^k) = - \Big(x_{ij}^k \; - \; h_{ij}^k \Big)^2
\end{equation}
Where $h_{ij}^k = \min\big(h_{eg}^k(s_i), h_{in}^k(s_j)\big)$, taking advantage of the fact that $h_{ij}^k \geq x_{ij}^k$ to ensure that the optimal solution maximizes the formation of logical links.

\subsubsection{Dual Problem}\hfill\\
To relax the soft constraint (\ref{con_constraint}), we introduce dual variables $\mathbf{p}^+=[p_{ij}^+] \geq 0, \mathbf{p}^-=[p_{ij}^-] \geq 0$, and the following Lagrangian of the primal problem (\ref{primal_objective}):
\begin{eqnarray}
&&L(\mathbf{x}, \mathbf{p}^+, \mathbf{p}^-)\nonumber\\
&=&\sum_{i=1}^n\sum_{j=1}^n \left[ \sum_{k=1}^KU_{ij}^k(x_{ij}^k) - p_{ij}^+\Bigg( \sum_{k=1}^Kx_{ij}^k - \ceil{d_{ij}^*}\Bigg)\right.\nonumber\\
&&\left.\hspace{28mm} + p_{ij}^-\Bigg(\sum_{k=1}^K x_{ij}^k - \floor{d_{ij}^*}\Bigg) \right].\nonumber
\end{eqnarray}

Note that for every $\mathbf{x}$ satisfying constraints (\ref{constraint:ocslevel}),  (\ref{con_constraint}), and every
$\mathbf{p}^+ \geq 0$ and $\mathbf{p}^-\geq 0$, the following inequality holds: $$L(\mathbf{x}, \mathbf{p}^+, \mathbf{p}^-)\geq \sum_{i=1}^n\sum_{j=1}^n \sum_{k=1}^KU_{ij}^k(x_{ij}^k).$$ Let
$$g(\mathbf{p}^+, \mathbf{p}^-):=\max_{\mathbf{x}} L(\mathbf{x}, \mathbf{p}^+, \mathbf{p}^-)\quad\text{s.t. }(\ref{constraint:ocslevel}) \text{ is satisfied}$$
We then have
\begin{eqnarray}\label{eqn:duality_gap}
g(\mathbf{p}^+, \mathbf{p}^-)&\geq&\max_{\mathbf{x}} L(\mathbf{x}, \mathbf{p}^+, \mathbf{p}^-)\;\text{satisfying }(\ref{con_constraint}), (\ref{constraint:ocslevel})\\
&\geq&\text{Optimal value of the primal problem (\ref{primal_objective})}\nonumber
\end{eqnarray}

Next, we introduce the dual problem:
\begin{equation}\label{dual_problem}
\mathbf{D: } \min_{\mathbf{p}^+, \mathbf{p}^-} g(\mathbf{p}^+, \mathbf{p}^-)\quad\text{s.t  } \mathbf{p}^+ \geq 0, \; \mathbf{p}^- \geq 0.
\end{equation}
Since the inequality (\ref{eqn:duality_gap}) holds for all $\mathbf{p}^+ \geq 0$ and $\mathbf{p}^-\geq 0$, we must have
\begin{eqnarray}
&&\text{The minimum value of the dual problem }(\ref{dual_problem})\nonumber\\
&\geq&\text{The maximum value of the primal problem }(\ref{primal_objective}).\nonumber
\end{eqnarray}
\emph{Duality gap} is then defined as the difference between the minimum value of the dual problem (\ref{dual_problem}) and the maximum value of the primal problem (\ref{primal_objective}).

If the primal decision variable $\mathbf{x}$ were fractional numbers instead of integers, under mild constraints\footnote{For Slater's Condition: see \S 5.2.3 in~\cite{ConvexOptimization}.}, the duality gap would be $0$. In that case, the optimal primal solution can be obtained by solving the dual problem instead. As we will see shortly, the dual problem (\ref{dual_problem}) is much easier to solve. However, (\ref{primal_objective}) is an integer problem with non-zero duality gap, hence solving the dual problem (\ref{dual_problem}) cannot give us the optimal solution of the primal problem (\ref{primal_objective}). Nevertheless, by optimizing the dual problem, we can still obtain a good sub-optimal solution to (\ref{primal_objective}) that satisfies all the hard constraints and a vast majority of the soft constraints.

\subsubsection{Subgradient Method}\hfill\\
The key aspect of LDM the optimization of the dual problem (\ref{dual_problem}). Since the dual objective function is not differentiable, the typical gradient descent algorithm cannot be applied here. Hence, we use the subgradient method~\cite{SubgradientMethod} instead, whose general form is given as follows:

\begin{definition}\label{def:subgradient}
(Subgradient method~\cite{SubgradientForm}): Let $f:\mathbb{R}^n\rightarrow \mathbb{R}$ be a convex function with domain $\mathbb{R}^n$, a classical subgradient method iterates
$$y^{(m+1)}=y^{(m)} - \alpha_m \gamma^{(m)},$$
where $\gamma^{(m)}$ denotes a subgradient of $f$ at $y^{(m)}$, where $y^{(m)}$ is the $m$-th iterate of $y$. If $f$ is differentiable, then the only subgradient is the gradient vector of $f$. It may happen that $\gamma^{(m)}$ is not a descent direction for $f$ at $y^{(m)}$. We therefore keep a list of $f_{\text{best}}$ to keep track of the lowest objective function value found so far, i.e.,
$$f_{\text{best}}=\min\{f_{\text{best}}, f(y^{(m)})\}.$$
\end{definition}

Computing subgradient is the key step of the above subgradient method. The following lemma tells us how to compute a subgradient for the dual objective function $g(\mathbf{p}^+, \mathbf{p}^-)$.

\begin{lemma}\label{lem:subgradient}
For a given $(\hat{\mathbf{p}}^+, \hat{\mathbf{p}}^-)$, let $\hat{\mathbf{x}}$ be an integer solution that maximizes the lagrangian $L(\mathbf{x}, \hat{\mathbf{p}}^+, \hat{\mathbf{p}}^-)$, i.e., $$g(\hat{\mathbf{p}}^+, \hat{\mathbf{p}}^-) = \max_{\mathbf{x}}L(\mathbf{x}, \hat{\mathbf{p}}^+, \hat{\mathbf{p}}^-) = L(\hat{\mathbf{x}}, \hat{\mathbf{p}}^+, \hat{\mathbf{p}}^-).$$
Then, $[\ceil{d_{ij}^*} - \sum_{k=1}^K \hat{x}_{ij}^k, \sum_{k=1}^K \hat{x}_{ij}^k - \floor{d_{ij}^*}, i,j = 1,...,n]$ is a subgradient of $g(\mathbf{p}^+, \mathbf{p}^-)$ at $(\hat{\mathbf{p}}^+, \hat{\mathbf{p}}^-)$, i.e.,
\begin{eqnarray}
&&g(\mathbf{p}^+, \mathbf{p}^-)-g(\hat{\mathbf{p}}^+, \hat{\mathbf{p}}^-)\nonumber\\
&\geq& \sum_{i=1}^n\sum_{j=1}^n \bigg(\ceil{d_{ij}^*} - \sum_{k=1}^K \hat{x}_{ij}^k\bigg)(p_{ij}^+ - \hat{p}_{ij}^+)\nonumber\\
&&\hspace{-3mm}+ \sum_{i=1}^n\sum_{j=1}^n \bigg(\sum_{k=1}^K \hat{x}_{ij}^k - \floor{d_{ij}^*}\bigg)(p_{ij}^- - \hat{p}_{ij}^-)\nonumber
\end{eqnarray}
for any $(\mathbf{p}^+, \mathbf{p}^-)$ in a neighbourhood of $(\hat{\mathbf{p}}^+, \hat{\mathbf{p}}^-)$.
\end{lemma}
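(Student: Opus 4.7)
The plan is to exploit the fact that $L(\mathbf{x},\mathbf{p}^+,\mathbf{p}^-)$ is an \emph{affine} function of $(\mathbf{p}^+,\mathbf{p}^-)$ for every fixed feasible $\mathbf{x}$, and that $g$ is the pointwise maximum of these affine functions over a (finite) set of feasible integer $\mathbf{x}$'s satisfying (\ref{constraint:ocslevel}). A pointwise maximum of affine functions is convex, and a standard property of such functions is that the coefficient vector of any selector that attains the max at a given point supplies a (global) subgradient there. So the whole proof reduces to identifying those coefficients and writing down the two-line chain of inequalities.

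First I would compute the partial coefficients of $L(\mathbf{x},\mathbf{p}^+,\mathbf{p}^-)$ with respect to $p_{ij}^+$ and $p_{ij}^-$. Collecting the terms in the definition of $L$ that multiply $p_{ij}^+$ gives $-\bigl(\sum_k x_{ij}^k-\ceil{d_{ij}^*}\bigr)=\ceil{d_{ij}^*}-\sum_k x_{ij}^k$, and the terms that multiply $p_{ij}^-$ give $\sum_k x_{ij}^k-\floor{d_{ij}^*}$. Hence for any $(\mathbf{p}^+,\mathbf{p}^-)$ and the specific maximizer $\hat{\mathbf{x}}$ at $(\hat{\mathbf{p}}^+,\hat{\mathbf{p}}^-)$,
\begin{equation*}
L(\hat{\mathbf{x}},\mathbf{p}^+,\mathbf{p}^-) - L(\hat{\mathbf{x}},\hat{\mathbf{p}}^+,\hat{\mathbf{p}}^-)
= \sum_{i,j}\Bigl[\bigl(\ceil{d_{ij}^*}-\sum_k \hat{x}_{ij}^k\bigr)(p_{ij}^+-\hat{p}_{ij}^+) + \bigl(\sum_k \hat{x}_{ij}^k-\floor{d_{ij}^*}\bigr)(p_{ij}^--\hat{p}_{ij}^-)\Bigr].
\end{equation*}

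Next I would use the fact that $\hat{\mathbf{x}}$ is feasible for (\ref{constraint:ocslevel}) (it satisfies the OCS-level constraints by construction) to write $g(\mathbf{p}^+,\mathbf{p}^-)\geq L(\hat{\mathbf{x}},\mathbf{p}^+,\mathbf{p}^-)$ for every $(\mathbf{p}^+,\mathbf{p}^-)$, since $g$ is the max of $L(\cdot,\mathbf{p}^+,\mathbf{p}^-)$ over \emph{all} feasible $\mathbf{x}$ and $\hat{\mathbf{x}}$ is one such feasible point. At the base point the max is attained by $\hat{\mathbf{x}}$, so $g(\hat{\mathbf{p}}^+,\hat{\mathbf{p}}^-)=L(\hat{\mathbf{x}},\hat{\mathbf{p}}^+,\hat{\mathbf{p}}^-)$. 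Subtracting then yields
\begin{equation*}
g(\mathbf{p}^+,\mathbf{p}^-) - g(\hat{\mathbf{p}}^+,\hat{\mathbf{p}}^-) \;\geq\; L(\hat{\mathbf{x}},\mathbf{p}^+,\mathbf{p}^-) - L(\hat{\mathbf{x}},\hat{\mathbf{p}}^+,\hat{\mathbf{p}}^-),
\end{equation*}
which combined with the coefficient computation above is exactly the claimed subgradient inequality. Note that this argument is valid globally, so the restriction to a neighbourhood in the statement is harmless.

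There is essentially no hard step; the only thing worth flagging is that one must be careful that the feasible set over which the inner maximization in $g$ is taken does \emph{not} depend on $(\mathbf{p}^+,\mathbf{p}^-)$ — which is true here because only the soft constraints (\ref{con_constraint}) were dualized and the hard constraints (\ref{constraint:ocslevel}) constrain $\mathbf{x}$ alone. If one instead tried to adapt this argument to a setting where hard constraints were also dualized or mixed in, the feasibility of $\hat{\mathbf{x}}$ at other dual points would fail and the key inequality $g(\mathbf{p}^+,\mathbf{p}^-)\geq L(\hat{\mathbf{x}},\mathbf{p}^+,\mathbf{p}^-)$ would no longer be automatic. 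Since that separation is clean in our formulation, the lemma follows in the three lines sketched above.
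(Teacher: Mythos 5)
Your proof is correct and follows essentially the same route as the paper's: both establish $g(\mathbf{p}^+,\mathbf{p}^-)\geq L(\hat{\mathbf{x}},\mathbf{p}^+,\mathbf{p}^-)$ from feasibility of $\hat{\mathbf{x}}$, use equality at the base point, and read off the affine coefficients of $L$ in the dual variables. Your added remarks (that the bound is in fact global, and that the argument hinges on the inner feasible set being independent of the dual variables) are correct observations the paper leaves implicit.
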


\begin{proof}
Consider an arbitrary $(\mathbf{p}^+, \mathbf{p}^-)$. According to the definition of $g(\mathbf{p}^+, \mathbf{p}^-)$, we must have
$$g(\mathbf{p}^+, \mathbf{p}^-) = \max_{\mathbf{x}} L(\mathbf{x}, \mathbf{p}^+, \mathbf{p}^-) \geq L(\hat{\mathbf{x}}, \mathbf{p}^+, \mathbf{p}^-).$$
Then,
\begin{eqnarray}
&&g(\mathbf{p}^+, \mathbf{p}^-)-g(\hat{\mathbf{p}}^+, \hat{\mathbf{p}}^-)\nonumber\\
&\geq& L(\hat{\mathbf{x}}, \mathbf{p}^+, \mathbf{p}^-) - L(\hat{\mathbf{x}}, \hat{\mathbf{p}}^+, \hat{\mathbf{p}}^-)\nonumber\\
&=&\sum_{i=1}^n\sum_{j=1}^n \bigg(\ceil{d_{ij}^*} - \sum_{k=1}^K \hat{x}_{ij}^k\bigg)(p_{ij}^+ - \hat{p}_{ij}^+)\nonumber\\
&&\hspace{-3mm}+ \sum_{i=1}^n\sum_{j=1}^n \bigg(\sum_{k=1}^K \hat{x}_{ij}^k - \floor{d_{ij}^*}\bigg)(p_{ij}^- - \hat{p}_{ij}^-),\nonumber
\end{eqnarray}
which completes the proof.
\end{proof}

\begin{remark}
Note that for each $(\hat{\mathbf{p}}^+, \hat{\mathbf{p}}^-)$, $\hat{\mathbf{x}}$ may not be the only solution that maximizes the Lagrangian $L(\mathbf{x}, \hat{\mathbf{p}}^+, \hat{\mathbf{p}}^-)$, because $L(\mathbf{x}, \hat{\mathbf{p}}^+, \hat{\mathbf{p}}^-)$ has integer variables $\mathbf{x}$. It is thus possible to have multiple subgradients for $g(\mathbf{p}^+, \mathbf{p}^-)$ at $(\hat{\mathbf{p}}^+, \hat{\mathbf{p}}^-)$, in which case $g(\mathbf{p}^+, \mathbf{p}^-)$ is not differentiable at $(\hat{\mathbf{p}}^+, \hat{\mathbf{p}}^-)$. If $g(\mathbf{p}^+, \mathbf{p}^-)$ were differentiable at $(\hat{\mathbf{p}}^+, \hat{\mathbf{p}}^-)$, there would be only one subgradient, which is the gradient of $g(\mathbf{p}^+, \mathbf{p}^-)$.
\end{remark}

According to Lemma \ref{lem:subgradient}, the most critical part of calculating subgradient is to find a maximizer for a given Lagrangian. By rearranging the dual objective function $g(\mathbf{p}^+, \mathbf{p}^-)$, we obtain the following:
\begin{eqnarray}
&&g(\mathbf{p}^+, \mathbf{p}^-)\nonumber\\
&=& \max_{\mathbf{x}} L(\mathbf{x}, \mathbf{p}^+, \mathbf{p}^-)\quad\text{s.t. }(\ref{constraint:ocslevel}) \text{ is satisfied }\nonumber\\
&=&\sum_{k=1}^K\max_{\mathbf{x}^k}\left[\sum_{i=1}^n\sum_{j=1}^n \bigg(U_{ij}^k(x_{ij}^k)+(p_{ij}^- - p_{ij}^+)x_{ij}^k\bigg)\right]\nonumber\\
&&+\sum_{i=1}^n\sum_{j=1}^n (p_{ij}^+\ceil{d_{ij}^*} - p_{ij}^- \floor{d_{ij}^*})\quad\text{s.t. }(\ref{constraint:ocslevel}) \text{ is satisfied} \nonumber
\end{eqnarray}
From the above equation, we can see that optimizing the Lagrangian can be decomposed into $K$ subproblems:
\begin{eqnarray}\label{eqn:DecomposedProblem}
&\max_{\mathbf{x}^k}&\sum_{i=1}^n\sum_{j=1}^n \bigg(U_{ij}^k(x_{ij}^k)+(p_{ij}^- - p_{ij}^+)x_{ij}^k\bigg)\\
& \text{s.t: }& \sum_{i=1}^n x_{ij}^k \leq h_{\text{ig}}^k(j), \sum_{j=1}^n x_{ij}^k \leq h_{\text{eg}}^k(i)\quad \forall \; i,j\nonumber
\end{eqnarray}
Although these subproblems have significantly fewer decision variables, they are still integer programming problem with quadratic objective function, which can be hard to solve. To further reduce complexity, we apply the same first-order approximation (see Eqn. (\ref{eqn:max_weighted_matching})) again to the nonlinear terms in (\ref{eqn:DecomposedProblem}), and obtain
\begin{eqnarray}\label{eqn:LinearDecomposedProblem}
&\max_{x^k}&\sum_{i=1}^n\sum_{j=1}^n \bigg(\frac{d U_{ij}^k}{d x_{ij}^k}(\hat{x}_{ij}^k)+(p_{ij}^- - p_{ij}^+)x_{ij}^k\bigg)\\
& \text{s.t: }& \sum_{i=1}^n x_{ij}^k \leq h_{\text{ig}}^k(j), \sum_{j=1}^n x_{ij}^k \leq h_{\text{eg}}^k(i),\nonumber\\
&&\max\{\hat{x}_{ij}^k - 1, 0\}\leq x_{ij}^k\leq \hat{x}_{ij}^k + 1 \quad \forall \; i,j\nonumber
\end{eqnarray}
where $\mathbf{\hat{x}}$ is the previous estimate of $\mathbf{x}$. The approximated problem (\ref{eqn:LinearDecomposedProblem}) can be solved in polynomial time using the method in Appendix \ref{appendix:mincost_flow}.

\subsubsection{Detailed Algorithm}\hfill\\
The detailed algorithm is shown in Algorithm \ref{algorithm:overall_primaldual}. Note that we update dual variables right after computing a configuration for each OCS to hasten solution convergence. Another option is to update dual variables after iterating through all the OCSs for one round. The problem with this option is that OCSs with the same physical striping will be configured exactly the same way in the same iteration, causing the solution to oscillate and slows down convergence.

Notice that the harmonic step size function $\delta(\tau)$ is chosen because its sum approaches infinity as we take infinitely many step sizes. This way, we ensure that $\mathbf{p}^+, \mathbf{p}^-$'s growth is not handicapped by the step size if their optimal values are large.

\begin{algorithm}
 \KwData{
    \begin{itemize}
        \item $D^* = [d_{ij}^*] \in \mathbb{R}^{n \times n}$ - fractional topology
        \item $\tau_{max}$ - number of iterations
    \end{itemize}
 }
 \KwResult{
    $\mathbf{x}^{*}=[x_{ij}^k{}^*] \in \mathbb{Z}^{n^2K}$ - OCS switch states
 }
Initialize: $\hat{\mathbf{x}} := 0, \mathbf{x}^{*} := 0, \mathbf{p}^+ :=0, \mathbf{p}^- :=0$ \;
Build network flow graphs $G_1, ..., G_k$ based on each OCS in $\mathcal{O} = \{o_1, ..., o_k\}$\;
 \For{$\tau \in \{1, 2, ..., \tau_{max}\}$}{
  Set step size $\delta := \frac{1}{\tau}$  \;
  \For{$k \in \{1,2,...,K\}$}{
      Solve (\ref{eqn:LinearDecomposedProblem}) based on Appendix \ref{appendix:mincost_flow}, and let $x^k$ be the integer solution\;
      Update $\hat{\mathbf{x}}$ in the $k$-th OCS by setting $\hat{x}^k=x^k$\;
      \If{$\Psi(\mathbf{x}^*) < \Psi(\hat{\mathbf{x}})$} {
        $\mathbf{x}^{*} := \hat{\mathbf{x}}$\;
      }
      Update dual variables using $p_{ij}^+:=\max\{p_{ij}^+-\delta(\ceil{d_{ij}^*}-\sum_{k^{\prime}=1}^K\hat{x}_{ij}^{k^{\prime}}), 0\}$ and $p_{ij}^-:=\max\{p_{ij}^--\delta(\sum_{k^{\prime}=1}^K\hat{x}_{ij}^{k^{\prime}}-\floor{d_{ij}^*}), 0\}$.
   }
 }
 \caption{Lagrangian duality method}\label{algorithm:overall_primaldual}
\end{algorithm}
\section{Mapping (\ref{eqn:max_weighted_matching}) to a Min-Cost Circulation Problem}\label{appendix:mincost_flow}
In this section, we study a general form of (\ref{eqn:max_weighted_matching}) as follows:
\begin{eqnarray}\label{eqn:general_mincostflow}
&\min\limits_{\mathbf{a}=[a_{ij}]}& \sum_{i=1}^I\sum_{j=1}^J C_{ij}a_{ij}\\
& \text{s.t: }& \sum_{i=1}^I a_{ij} \leq P_j, \sum_{j=1}^J a_{ij} \leq Q_i,\nonumber\\
&& L_{ij}\leq a_{ij} \leq U_{ij} \quad \forall \; i=1,...,I,j=1,...,J\nonumber
\end{eqnarray}
where $\mathbf{a}=[a_{ij}]$ is an $I\times J$ integer matrix to be solved, and $\mathbf{C}=[C_{ij}], \mathbf{P}=[P_j], \mathbf{Q}=[Q_i], \mathbf{L}=[L_{ij}], \mathbf{U}=[U_{ij}]$ are predefined constants. We would like to show that (\ref{eqn:general_mincostflow}) can be easily mapped to a min-cost circulation problem, which is polynomial time-solvable with  integer solution guarantees as long as $\mathbf{P}=[P_j], \mathbf{Q}=[Q_i], \mathbf{L}=[L_{ij}], \mathbf{U}=[U_{ij}]$ are all integers.

\subsection{Min-Cost Circulation Problem}
\begin{definition}\label{mincostcirculation}
(Min-Cost Circulation Problem) Given a flow network with
\begin{itemize}
  \item $l(v,w)$, lower bound on flow from node $v$ to node $w$;
  \item $u(v,w)$, upper bound on flow from node $v$ to node $w$;
  \item $c(v,w)$, cost of a unit of flow on $(v,w)$,
\end{itemize}
the goal of the min-cost circulation problem is to find a flow assignment $f(v,w)$ that minimizes
$$\sum_{(v,w)}c(v,w)\cdot f(v,w),$$
while satisfying the following two constraints:
\begin{enumerate}
  \item Throughput constraints: $l(v,w)\leq f(v,w)\leq u(v,w)$;
  \item Flow conservation constraints: $\sum_u f(u,v) = \sum_w f(v,w)$ for any node $v$.
\end{enumerate}
\end{definition}

Note that all the constant parameters $l(v,w),u(v,w)$ are all positive and $c(v,w)$ can be either positive or negative. In addition, min-cost circulation problem has a very nice property that guarantees integer solutions:
\begin{lemma}\label{integralflowtheorem}
(Integral Flow Theorem) Given a feasible circulation problem, if $l(v,w)$'s and $u(v,w)$'s are all integers, then there exists a feasible flow assignment such that all flows are integers.
\end{lemma}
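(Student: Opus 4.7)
The plan is to realize the feasibility set of the circulation problem as a polyhedron $\mathcal{F} = \{f : Af = 0,\; l \le f \le u\}$, where $A$ is the node-arc incidence matrix of the underlying directed graph: the column corresponding to arc $(v,w)$ has a $+1$ in row $w$, a $-1$ in row $v$, and zeros elsewhere, so that the row $A_v f = 0$ encodes exactly the flow-conservation constraint at node $v$ in Definition \ref{mincostcirculation}. I would then show $\mathcal{F}$ contains an integer point by combining (i) a total-unimodularity argument for $A$, with (ii) the Hoffman-Kruskal theorem linking total unimodularity to integrality of polyhedral vertices.

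The first step is to argue that $A$ is totally unimodular (TU), i.e., every square submatrix has determinant in $\{-1, 0, +1\}$. I would proceed by induction on the order $k$ of a square submatrix $B$. The base case $k = 1$ is immediate since entries of $A$ lie in $\{-1, 0, +1\}$. For the inductive step, I would case-split on the columns of $B$: if some column is all zero, then $\det B = 0$; if some column has a unique nonzero entry, I cofactor-expand along that column and invoke the induction hypothesis; otherwise every column has exactly one $+1$ and one $-1$, in which case the rows of $B$ sum to the zero vector, forcing $\det B = 0$.

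The second step applies Hoffman-Kruskal: since $A$ is TU and the right-hand side $0$ together with the bounds $l$ and $u$ are all integer, every vertex of $\mathcal{F}$ is integral. By hypothesis $\mathcal{F}$ is non-empty, and since $\mathcal{F} \subseteq [l, u]$ it is a bounded, non-empty polyhedron, which therefore has at least one vertex. That vertex is then a feasible integer circulation, which is exactly what the lemma asserts.

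The main obstacle is essentially bookkeeping rather than deep mathematics: the total unimodularity of a directed-graph incidence matrix is classical, and Hoffman-Kruskal is a standard tool. The one subtle point I would flag is that the lemma asserts the existence of an integer \emph{feasible} flow with no reference to the cost function $c$; I would therefore avoid the alternative argument that invokes an integer optimal vertex (which would additionally require the objective to be bounded below over $\mathcal{F}$) and instead rely purely on the polyhedral-vertex argument above, which uses only non-emptiness and boundedness of $\mathcal{F}$.
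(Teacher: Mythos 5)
Your proof is correct, but it takes a different route from the paper, which does not actually prove the lemma: it simply cites it as a classical fact and remarks that standard max-flow algorithms (Edmonds--Karp, Goldberg--Tarjan) are guaranteed to return integer solutions when the bounds are integral. That is an \emph{algorithmic} justification --- integrality is preserved because each augmenting-path or push-relabel step moves flow in integer increments starting from an integer point. Your argument is instead \emph{polyhedral}: total unimodularity of the node-arc incidence matrix plus Hoffman--Kruskal gives integrality of every vertex of the feasible region $\{f : Af = 0,\ l \le f \le u\}$, and nonemptiness plus boundedness guarantees a vertex exists. Both are standard and both are sound; yours is more self-contained and nonconstructive, while the paper's pointer has the practical advantage of also certifying that the specific solvers it actually runs (in Appendix B/C) return integer configurations, which is what the ToE pipeline needs. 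Your closing remark is also well taken: since the lemma asserts only feasibility, avoiding any appeal to an integral \emph{optimal} vertex sidesteps the question of whether the cost is bounded below. One small caveat worth noting for the paper's own use of the lemma: the construction in Appendix C assigns the feedback arc bounds $[0,\infty)$, so its upper bound is not literally an integer; to apply either argument one should first replace $\infty$ by a finite integer bound (e.g., the sum of the capacities of the source-side arcs), which does not change the feasible set. This does not affect the correctness of your proof of the lemma as stated.
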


In fact, for feasible circulation problems with integer bounds, most max-flow algorithms, e.g., Edmonds-Karp algorithm \cite{Edmonds1972Theoretical} and Goldberg-Tarjan algorithm \cite{Goldberg1988A}, are guaranteed to generate integer solutions.

\subsection{Detailed Transformation Steps}
\begin{figure}[ht!]
    \centering
    \includegraphics[scale=0.44]{./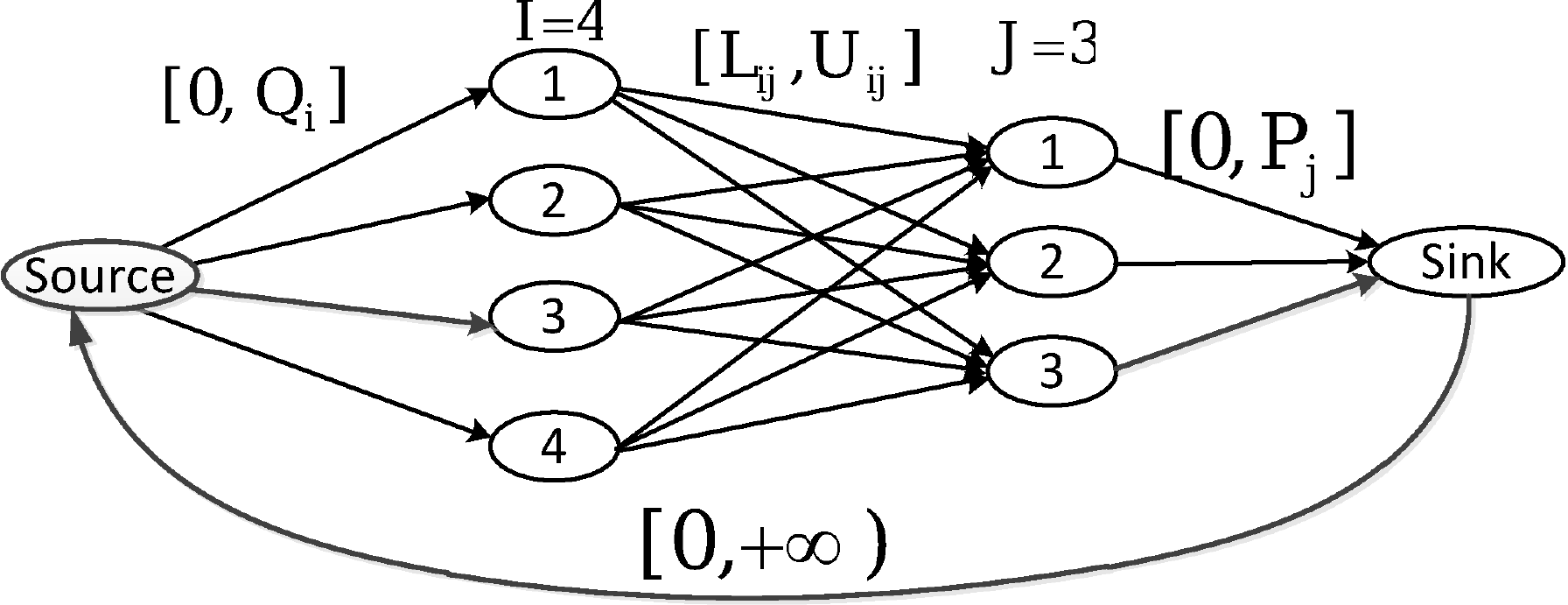}
    \caption{A flow graph example corresponding to equation (\ref{eqn:general_mincostflow}).}
    \label{fig:min_cost_flow}
\end{figure}

We first construct a flow network based on equation (\ref{eqn:general_mincostflow}) as follows
(see Fig. \ref{fig:min_cost_flow} for graphical illustration):
\begin{enumerate}
  \item Create a directed bipartite graph. Note that ${\mathbf{a}}$ is an $I\times J$ matrix. We create $I$ nodes on the left hand side of the bipartite graph, and create $J$ nodes on the right hand side of the bipartite graph. We add a directed link from $i$ to $j$, and set the bounds of this link as $[L_{ij}, U_{ij}]$ and the cost of this link as $C_{ij}$.
  \item Add a source node, and for each of the $I$ left nodes, add a link that connects to this source node. The bounds of the $i$-th link is set as $[0, Q_i]$, and the cost is set to $0$.
  \item Add a sink node and $J$ links from the $J$ right nodes to this sink node. The bounds of the $j$-th link is set as $[0, P_i]$, and the cost is set to $0$.
  \item Add a feedback link from the sink node to the source node. The bounds of this feedback link is set as $[0,\infty)$, and the cost is set as a very small negative value $-\epsilon$, e.g., $-10^{-6}$.
\end{enumerate}

We then assign flows to this flow network.
\begin{enumerate}
  \item For the link from the $i$-th left node to the $j$-th right node, assign $a_{ij}$ amount of flow.
  \item For the link from the source node to the $i$-th left node, assign $\sum_{j=1}^J a_{ij}$ amount of flow.
  \item For the link from the $j$-th right node to the sink node, assign $\sum_{i=1}^I a_{ij}$ amount of flow.
  \item For the feedback link from the sink node to the source node, assign $\sum_{i=1}^I\sum_{j=1}^J a_{ij}$ amount of flow.
\end{enumerate}

It is easy to verify that the above flow assignment satisfies the flow conservation constraints in Definition \ref{mincostcirculation}. Further, by enforcing the throughput constraints in Definition \ref{mincostcirculation}, all the constraints in (\ref{eqn:general_mincostflow}) are also satisfied. Further, the objective function of this min-cost flow problem is
\begin{equation}\label{eqn:mcf_objective}
\sum_{i=1}^I\sum_{j=1}^J C_{ij}a_{ij}+\epsilon \bigg(\sum_{i=1}^I\sum_{j=1}^J a_{ij}\bigg).
\end{equation}
Since $a_{ij}$’s are all integers, $\sum_{i=1}^I\sum_{j=1}^J C_{ij}a_{ij}$ cannot be take on a continuum of values. Then, as long as $\epsilon$ is small enough, minimizing (\ref{eqn:mcf_objective}) will also minimizes the objective function in (\ref{eqn:general_mincostflow}). The benefit of having a small negative cost $\epsilon$ is that more flows can be assigned if possible.

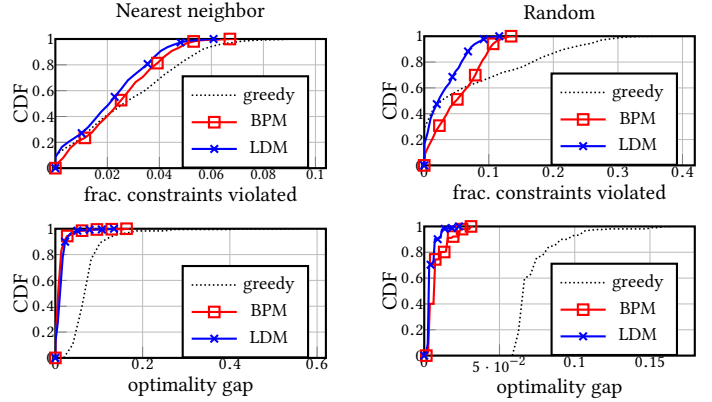
\begin{figure}[t!]
\pgfplotstableread{plot_data/ocs_reconf_benchmark.txt}
	\datatable
\hspace{-0.5cm}
\begin{subfigure}[c]{0.47\columnwidth} 
\centering
\begin{tikzpicture}
\begin{axis}[xlabel = \footnotesize frac. constraints violated, xlabel near ticks, cycle list={{color=black, densely dotted, semithick}, {color=red, mark=square}, {color = blue, mark=x}}, width=1.3\linewidth, height=3.3cm, outer sep=-3pt, ymin=0,ymax=1,xmin=0, ylabel= \footnotesize CDF, ylabel near ticks, ylabel shift=-2.pt, title={\footnotesize Nearest neighbor}, xlabel shift=-1.pt, legend style={at={(0.50, 0.65)}, anchor= north west}, xtick pos=left, ytick pos=left, grid, thick, xtick={0.02, 0.04, 0.06, 0.08, 0.1}, xticklabels={0.02, 0.04, 0.06, 0.08, 0.1}]
\addplot+[mark repeat={5}] table[x=violation__knn_greedy_x,y=violation__knn_greedy_y] from \datatable ;
\addplot+[mark repeat={5}] table[x=violation__knn_barrier_x, y=violation__knn_barrier_y] from \datatable ;
\addplot+[mark repeat={5}] table[x=violation__knn_primaldual_x, y=violation__knn_primaldual_y] from \datatable ;
\legend{\scriptsize greedy, \scriptsize BPM, \scriptsize LDM}
\end{axis}
\end{tikzpicture}
\end{subfigure} 
\hfill
\begin{subfigure}[c]{0.47\columnwidth}
\centering
\begin{tikzpicture}
\begin{axis}[xlabel = \footnotesize frac. constraints violated, xlabel near ticks, cycle list={{color=black, densely dotted, semithick}, {color=red, mark=square}, {color = blue, mark=x}}, width=1.3\linewidth, height=3.3cm, outer sep=-3pt, ymin=0,ymax=1,xmin=0, ylabel= \footnotesize CDF, ylabel near ticks, ylabel shift=-2.pt, xlabel shift=-1.pt, title={\footnotesize Random}, legend style={at={(0.50, 0.65)}, anchor= north west}, xtick pos=left, ytick pos=left, grid, thick]
\addplot+[mark repeat={5}] table[x=violation__random_greedy_x,y=violation__random_greedy_y] from \datatable ;
\addplot+[mark repeat={5}] table[x=violation__random_barrier_x, y=violation__random_barrier_y] from \datatable ;
\addplot+[mark repeat={5}] table[x=violation__random_primaldual_x, y=violation__random_primaldual_y] from \datatable ;
\legend{\scriptsize greedy, \scriptsize BPM, \scriptsize LDM}
\end{axis}
\end{tikzpicture}
\end{subfigure}\\

\hspace{-0.5cm}
\begin{subfigure}[c]{0.47\columnwidth}
\centering
\begin{tikzpicture}
\begin{axis}[xlabel = \footnotesize optimality gap, xlabel near ticks, cycle list={{color=black, densely dotted, semithick}, {color=red, mark=square}, {color = blue, mark=x}}, width=1.3\linewidth, height=3.3cm, outer sep=-3pt, ymin=0,ymax=1,xmin=0 , ylabel= \footnotesize CDF, ylabel near ticks, ylabel shift=-2.pt, xlabel shift=-1.pt, legend style={at={(0.50, 0.65)}, anchor= north west}, xtick pos=left, ytick pos=left, grid, thick]
\addplot+[mark repeat={5}] table[x=optimality_loss__knn_greedy_x,y=optimality_loss__knn_greedy_y] from \datatable ;
\addplot+[mark repeat={5}] table[x=optimality_loss__knn_barrier_x, y=optimality_loss__knn_barrier_y] from \datatable ;
\addplot+[ mark repeat={5}] table[x=optimality_loss__knn_primaldual_x, y=optimality_loss__knn_primaldual_y] from \datatable ;
\legend{\scriptsize greedy, \scriptsize BPM, \scriptsize LDM}
\end{axis}
\end{tikzpicture}
\end{subfigure}
\hfill
\begin{subfigure}[c]{0.47\columnwidth}
\centering
\begin{tikzpicture}
\begin{axis}[xlabel = \footnotesize optimality gap, xlabel near ticks, cycle list={{color=black, densely dotted, semithick}, {color=red, mark=square}, {color = blue, mark=x}}, width=1.3\linewidth, height=3.3cm, outer sep=-3pt, ymin=0,ymax=1,xmin=0, ylabel= \footnotesize CDF, ylabel near ticks, ylabel shift=-2.pt, xlabel shift=-1.pt, legend style={at={(0.50, 0.65)}, anchor= north west}, xtick pos=left, ytick pos=left, grid, thick]
\addplot+[mark repeat={5}] table[x=optimality_loss__random_greedy_x,y=optimality_loss__random_greedy_y] from \datatable ;
\addplot+[mark repeat={5}] table[x=optimality_loss__random_barrier_x, y=optimality_loss__random_barrier_y] from \datatable ;
\addplot+[mark repeat={5}] table[x=optimality_loss__random_primaldual_x, y=optimality_loss__random_primaldual_y] from \datatable ;
\legend{\scriptsize greedy, \scriptsize BPM, \scriptsize LDM}
\end{axis}
\end{tikzpicture}
\end{subfigure}\\
\vspace{-12pt}
\caption{\small Optimality of OCS-mapping algorithms, using nearest neighbor (left col) and random (right col) traffic.}
\label{fig:combined_ocs_benchmark}
\vspace{-16pt}
\end{figure}

\section{OCS mapping - Optimality Analysis}\label{appendix_reconfiguration_algo_analysis}
Although LDM and BPM are motivated by convex optimization theories, our problem requires integer solutions and is thus not convex. Therefore, neither LDM nor BPM can guarantee optimality. Nevertheless, we found via simulation that LDM and BPM show superior performance.

We generated 900 DCN instances with pod-counts between 12 and 66. Each DCN instance is heterogeneous, containing pods with a mixture of 256, 512, and 1024 ports, interconnected via 128-port OCSs. The greedy algorithm described in Helios~\cite{farrington2011helios} acts as a baseline. All 900 instances are tested using: 1) nearest-neighbor, and 2) random permutation TMs. For nearest-neighbor TM, each pod sends traffic only to pods within $\rho$-units of circular index distance, where $\rho$ is $\sim\frac{1}{8}th$ the fabric size; this imitates skewed, neighbor-intensive traffic. Random TM is generated by treating each off-diagonal entry as a uniform random variable.

Next, we compute a logical topology \emph{w.r.t.} to its TM.  We use two solution-optimality metrics: 1) soft-constraint violation ratio, and 2) optimality loss. Soft-constraint violations counts the number of $(i, j)$ pairs where (\ref{con_constraint}) is violated. Optimality loss measures the throughput loss/gap as we approximate the fractional topology with an integer one. This is measured as $1 - \frac{\mu_{int}^*}{\mu_{frac}^*}$; $\mu_{int}^*$ and $\mu_{frac}^*$ denote the throughputs under the integer and fractional logical topologies. 

Fig. \ref{fig:combined_ocs_benchmark} shows LDM slightly outperforming BPM, due to its adaptability afforded by its dual variables, which help ``coerce'' the solution towards optimality. Both LDM and BPM clearly outperform the greedy method in the optimality gap and matching soft constraints.

\section{Multi-traffic Traffic Engineering (TE-M) Formulation}\label{appendix:multi_traffic_robust_te}
Given $m$ representative traffic matrices, $\{T_1, …, T_m\}$, and an integer logical topology, $X$, TE-M computes the optimal routing weights, $\omega_p \forall p \in \mathcal{P}$, that minimizes MLU for all $m$ input demands. Here, $\omega_p$ denotes the fraction of traffic sent via path $p$, such that $\sum\limits_{p \in \mathcal{P}_{ij}}  \omega_p = 1$. Rather than solving this MLU directly, however, we scale up each input traffic matrix, $T_\tau$, using $\mu_{\tau}^*$ until MLU reaches 1. This additional step ensures that the computed routing weights will account for all traffic matrices. Once the scaling factor has been computed for each input traffic matrix, we then solve for the optimal routing weights that minimizes MLU, $\eta$ for all the scaled traffic matrices with the following:

\begin{equation}
\begin{aligned}
 & \min\limits_{\Omega} \eta \\
\text{s.t. } & 1) \; \sum\limits_{p \in \mathcal{P}_{ij}} \omega_{p} = 1 \; \forall \; i, j = 1, …, n \\
& 2) \; \sum\limits_{p \in \mathcal{P}, (s_i, s_j) \text{ is a link in } p} \omega_{p} \mu_\tau^* t_{\text{src}_p, \text{dst}_p}^\tau \leq \eta x_{ij} b_{ij} \\ 
& \quad \quad \forall \; i, j = 1, …, n, \; \tau = 1, …, m,
\end{aligned}
\end{equation}
where $t_{\text{src}_p, \text{dst}_p}^{\tau}$ denotes the traffic demand (in Gbps) between the source and destination pods of path $p$.

\end{appendices}

\end{document}